\documentclass[12pt]{article}

\bibliographystyle{plain}
\usepackage{a4wide}
\usepackage{amsmath,amsxtra}
\usepackage{amssymb}
\usepackage{txfonts}

\usepackage[all]{xy}





\setlength{\textwidth}{150.0mm}
     \setlength{\textheight}{225.0mm}
     \setlength{\oddsidemargin}{0in}
     \setlength{\evensidemargin}{0in}
     \setlength{\topmargin}{-1cm}
     \setlength{\parindent}{5.0mm}


\newtheorem{theorem}{Theorem}[section]
\newtheorem{deff}[theorem]{Definition}

\newtheorem{proposition}[theorem]{Proposition}
\newtheorem{example}[theorem]{Example}
\newtheorem{lemma}[theorem]{Lemma}

\newtheorem{cor}[theorem]{Corollary}
\newtheorem{prop}[theorem]{Proposition}

\numberwithin{equation}{section}


\newcommand{\proof}{\vskip2mm \noindent {\bf Proof.}~}
\newcommand{\End}{\mathop{\mathrm{End}}}

\newcommand{\rt}{\rightarrow}

\newcommand{\mto}{\mapsto}
\newcommand{\na}{\nabla}

\newcommand{\Mor}{\mathrm{Mor}}
\newcommand{\Maps}{\mathrm{Map}}

\newcommand\beqa {\begin{eqnarray}}
\newcommand\eeqa {\end{eqnarray}}
\newcommand\bqa {\begin{eqnarray}}
\newcommand\eqa {\end{eqnarray}}
\newcommand{\beq}{\begin{eqnarray}}
\newcommand{\beqn}{\begin{eqnarray}\nonumber}
\newcommand{\eeq}{\end{eqnarray}}
\newcommand{\be}{\begin{array}}
\newcommand{\ee}{\end{array}}

\newcommand\bea {\begin{eqnarray}}
\newcommand\eea {\end{eqnarray}}

 \newcommand{\pr}{\partial}

 \newcommand{\cM}{{\cal M}}

 \newcommand{\cF}{{\cal F}}

\newcommand{\N}{{\mathbb N}}

\newcommand{\R}{{\mathbb R}}
\newcommand{\Z}{{\mathbb Z}}

 \newcommand{\g}{{\mathfrak g}}

\newcommand{\md}{\mathrm{d}}


\newcommand{\rank}{\mathop{\mathrm{rank}}}
 \def\S{{\Sigma}}
\def\2{{\textstyle\frac{1}{2}}}
\def\ba{\begin{eqnarray}}
\def\ea{\end{eqnarray}}

\def\CG{{\cal G}}
\def\CL{{\cal L}} \def\CX{{\cal X}} \def\CA{{\cal A}} 
  \def\rd{\mathrm{ d}}
   \def\CM{{\cal M}}

  \def\C{{\cal C}}
\def\CR{{\cal R}}

\def\CQ{{\cal Q}}

\def\Emd{\!\,{}^E\!\md}
\def\EL{\,\!{}^E\!L}

\def\Eg{\!\,{}^E\!\mathrm{g}}
\def\Vg{\!\,{}^V\!\mathrm{g}}
\def\Ena{{{}^E{}\nabla}}
\def\GE{\Gamma(E)}

\def\o{\omega}

\def\OE{\Omega_E (M)}

\makeatletter
\def\bard{\protect\@bard}
\def\@bard{%
\relax
\bgroup
\def\@tempa{\hbox{\raise.73\ht0
\hbox to0pt{\kern.4\wd0\vrule width.7\wd0
height.1pt depth.1pt\hss}\box0}}%
\mathchoice{\setbox0\hbox{$\displaystyle\mathrm{d}$}\@tempa}%
{\setbox0\hbox{$\textstyle \mathrm{d}$}\@tempa}%
{\setbox0\hbox{$\scriptstyle \mathrm{d}$}\@tempa}%
{\setbox0\hbox{$\scriptscriptstyle \mathrm{d}$}\@tempa}%
\egroup
}
\makeatother

\makeatletter
\def\barp{\protect\@barp}
\def\@barp{%
\relax
\bgroup
\def\@tempa{\hbox{\raise.73\ht0
\hbox to0pt{\kern.4\wd0\vrule width.7\wd0
height.1pt depth.1pt\hss}\box0}}%
\mathchoice{\setbox0\hbox{$\displaystyle\partial$}\@tempa}%
{\setbox0\hbox{$\textstyle \partial$}\@tempa}%
{\setbox0\hbox{$\scriptstyle \partial$}\@tempa}%
{\setbox0\hbox{$\scriptscriptstyle \partial$}\@tempa}%
\egroup
}
\makeatother

\makeatletter
\def\barb{\protect\@barb}
\def\@barb{%
\relax
\bgroup
\def\@tempa{\hbox{\raise.73\ht0
\hbox to0pt{\kern-.1\wd0\vrule width.7\wd0
height.1pt depth.0pt\hss}\box0}}%
\mathchoice{\setbox0\hbox{$\displaystyle\mathrm{b}$}\@tempa}%
{\setbox0\hbox{$\textstyle \mathrm{b}$}\@tempa}%
{\setbox0\hbox{$\scriptstyle \mathrm{b}$}\@tempa}%
{\setbox0\hbox{$\scriptscriptstyle \mathrm{b}$}\@tempa}%
\egroup
}
\makeatother


   \def\C{\mathbb C}
   \def\a{\alpha}
   \def\b{\beta}
    
    \def\t{\theta}
   \def\de{\delta}

   \def\e{\epsilon}
   \def\i{\imath}


\def\Gr{\mathrm{G}} 

\def\G1{\Gr_1}

\def\Ad{\mathrm{Ad}}



\begin{document}

\begin{center}
{ \bf \Large  Generalizing Geometry---Algebroids and Sigma Models}
\end{center}


\begin{center}\sl\large
Alexei Kotov$^1$ and Thomas Strobl$^2$
\end{center}

\begin{center}{\small\sl
$^1$  Universite du Luxembourg,
Laboratoire de  Mathematiques, \\ Campus Limpertsberg, 162A,
Avenue de la Faiencerie, L-1511 Luxembourg\\
e-mail address: {\tt alexei.kotov{}@{}uni{}.{}lu}\\ \vspace{0.5em}
$^2$ Univ.~de Lyon, Univ.~Lyon1, CNRS UMR 5208, Institut Camille Jordan, \\43 Boulevard du 11 Novembre 1918, F-69622 Villeurbanne Cedex, France\\
email address: {\tt strobl{} {}@{} {}math.univ-lyon1.fr}
}
\end{center}



\begin{abstract} In this contribution we review some of the interplay between sigma models in theoretical physics and novel geometrical structures such as Lie (n-)algebroids. The first part of the article contains the mathematical background, the definition of various algebroids as well as of Dirac structures, a joint generalization of Poisson, presymplectic, but also complex structures. Proofs are given in detail. The second part deals with sigma models. Topological ones, in particular the AKSZ and the Dirac sigma models, as generalizations of the Poisson sigma models to higher dimensions and to Dirac structures, respectively, but also physical ones, that reduce to standard Yang Mills theories for the ``flat'' choice of a Lie algebra: Lie algebroid Yang Mills theories and possible action functionals for nonabelian gerbes and general higher gauge theories.
Characteristic classes associated to Dirac structures and to higher principal bundles are also mentioned.
\end{abstract}

\setcounter{tocdepth}{1}
\tableofcontents

\smallskip

\section{Introduction}

In this article we want to summarize some ideas in the overlap
of differential geometry and mathematical physics. In particular we
focus on the interplay of so-called sigma models with geometrical
structures being related to algebroids in one way or another.
Several traditional geometrical notions received various kinds of
generalizations in recent years. Some of them give rise to novel sigma
models, while also some sigma models bring attention to possible new
geometries.

Sigma models are action functionals (variational problems) where
the underlying space of fields (maps) has a target space equipped
with some geometry; in the most standard case one regards maps $\CX$ from
one given Riemannian manifold $(\Sigma,h)$ to another one $(M,g)$ and
considers the functional
\beq S[\CX] = \frac{1}{2} \int_\S || \rd \CX ||^2 \, ,
\label{standardsigma} \eeq
where $||\a||^2 = (\CX^* g)(\a \! \stackrel{\wedge}{,} \! * \a)$ for any $\a
\in \Omega^p(\Sigma, \CX^* TM)$ with $*$ denoting the $h$-induced
Hodge duality operation on $\S$,\footnote{In local coordinates
$\sigma^\mu$ on $\Sigma$ and $x^i$ on $M$ this expression reads more
explicitly as $\a_{i\mu_1 \ldots \mu_p} \a^{i\mu_1 \ldots \mu_p} \rd
{\mbox{vol}}_\Sigma$ where indices $\mu_i$ and $i$ are raised and
lowered by means of the metric $h$ and (the pullback by $\CX$ of) $g$,
respectively, and $\rd {\mbox{vol}}_\Sigma= \sqrt{\det(h)}
\rd^d \sigma$, $d$ denoting the dimension of $\S$. In this article we
use the Einstein sum convention, i.e.~a sum over repeated indices is
always understood.} the critical points of which are precisely the
maps $\CX$ which are harmonic. For the special case that $M$ is just
$\R^n$ equipped with the standard flat metric this functional reduces
to
\beq S[\phi^i] =  \frac{1}{2} \int_\S \rd \phi^i \wedge *   \rd \phi^i
\, , \label{eq:freescalar}
\eeq
the action of $n$ ``free scalar fields'' on $\S$ (here $\phi^i$
denotes the function on $\S$ obtained by restricting $\CX$ to the i-th
coordinate in $\R^n$---it is understood that the index $i$ on the
l.h.s.~is an ``abstract index'', i.e.~$S$ depends on all the scalar
fields, $\phi^1$ to $\phi^n$).

Another example of a sigma model is the Poisson sigma model
\cite{PSM,Ikeda}, where the source manifold $\S$ is necessarily
two-dimensional and the target manifold $M$ carries a Poisson
structure instead of a Riemannian one. In fact, one considers a
functional on the space of vector bundle morphisms from $T\S$ to
$T^*M$ in this case. It is ``topological'', which we want to interpret
as saying that the space of classical solutions (stationary points of
the functional) modulo gauge transformations (invariances or
symmetries of the functional) is finite dimensional---besides the fact
that it does not depend on geometrical structures of the source like a
Riemannian metric $h$ (which in this case is even absent in the
definition of the functional).  The tangent bundle of any manifold as
well as the cotangent bundle of a Poisson manifold give rise to what
are called Lie algebroids (whose definition is properly recalled in
the body of the paper below---cf.~in particular Examples \ref{ex:TM}
and \ref{Poisson1} below) and the above functional stationarizes
precisely on the morphisms of these Lie
algebroid structures \cite{BKS}.

But this also works the other way around: given a functional of such a
form defined by an a priori arbitrary bivector field $\Pi$ (and, in
some particular extension of the functional, also a closed 3-form $H$)
the respective functional becomes topological, \emph{iff} \cite{PSM0,Ctirad}
$\Pi$ defines a (for non-zero $H$ twisted) Poisson structure. In fact,
twisted Poisson structures (c.f.~also \cite{SeveraWeinstein}), i.e.~bivectors
$\Pi$ together with closed 3-forms $H$ defined over a manifold $M$
satisfying
\beq  [\Pi ,\Pi]=(\Pi^\natural)^{\otimes 3}H \label{HPoisson} \eeq
where $[ ... ]$ denotes the Schouten-Nijenhuis bracket of multi-vector
fields and $\Pi^\natural \colon T^*M\to TM$ is the natural operator induced by
the bivector $\Pi$ as follows: $\Pi^\natural (\a)=\Pi (\a, \cdot)$,
 were even \emph{found}
first in such a manner \cite{Ctirad,Park}. This is typical for the
interplay of geometrical notions and sigma models: the former are
needed to define the latter, but sigma models sometimes also give
indications about (focus on) particularly interesting geometrical
notions. In this example, in the space of bivectors on a manifold the
ones which are Poisson are singled out by the sigma model, or in the
space of pairs $(\Pi,H)$ those satisfying eq.~(\ref{HPoisson}) (which
can be seen to define a particular Dirac structure
\cite{SeveraWeinstein}---we introduce to Dirac structures in the main text in
detail).  Another example for such an interplay are
supersymmetric sigma models and bihermitian geometry
\cite{GatesHullRocek}. The latter geometry received renewed and
revived interest recently by its elegant reformulation in terms of
so-called generalized complex structures \cite{HitchinGCS}.

What is, on the other hand, meant more specifically by generalizing
traditional geometrical notions? In fact, also the generalized notions
can usually be expressed in terms of ordinary differential geometrical
ones, in which case it  just boils down to a different way of thinking
about them. In any such case one has usually some particular kind of a
so-called \emph{algebroid} in the game.

There are several kinds of an algebroid considered in the
literature.   All of them have the following data in common, which we
thus want to use as a definition of the general term:
\begin{deff} \label{algebroid}
We call an {\rm{algebroid}} $E=(E,\rho,[ \cdot , \cdot ])$, a vector bundle
$E \to M$ together with a homomorphism of vector bundles $\rho \colon E \to TM$, called
the {\rm{anchor}} of $E$, and a product or bracket on the sections of
$E$ satisfying the Leibniz rule ($\psi,\psi' \in \Gamma(E)$, $f \in C^\infty(M)$):
\beq
[\psi, f \psi'] = f [\psi,  \psi'] + \left(\rho(\psi) f \right) \,
\psi' \,.
\label{Leibniz}
\eeq
\end{deff}
Note that the map on sections induced by $\rho$ is denoted by the same
letter conventionally.  Depending on further conditions placed on the
bracket (like its symmetry properties or its Jacobiator) and further structures defined
on $E$, one has different kinds of algebroids: Lie algebroids, Courant
algebroids, strongly homotopy Lie (or $L_\infty$-) algebroids, and so
on.
Lie algebroids are obtained e.g.~by requiring in addition that the bracket is
a Lie bracket, i.e.~antisymmetric and with vanishing Jacobiator. The
definition of a Courant algebroid requires a fiber metric on $E$
controlling the symmetric part of the bracket as well as its Jacobiator.
We will come back to all these various kinds of algebroids in more
detail in the text below.

The philosophy now is that we can do differential geometry by
replacing $TM$ or, more generally, also the tensor bundle $\tau^p_q(M)
= TM^{\otimes p} \otimes T^*M^{\otimes q}$ by $E$ and $E^{\otimes p}
\otimes E^{*\otimes q}$, respectively. 
Let us call a section $t$ of $\oplus_{p,q}
\left(E^{\otimes p} \otimes E^{*\otimes q}\right)$ an $E$-tensor
field. The Leibniz property permits us to define a (Lie)
``derivative'' of $t$ along any section $\psi$ of $E$: Indeed, set
$\EL_\psi(f) :=
\rho(\psi) f$,  $\EL_\psi (\psi') := [\psi,\psi']$, and extend this to
powers of $E$ by the Leibniz property w.r.t.~tensor multiplication
and to $E^*$ by means of compatibility with contraction: $\EL_\psi
\left(\langle \psi' , \omega \rangle\right) = \langle \EL_\psi(\psi') ,
\omega \rangle + \langle \psi' , \EL_\psi( \omega) \rangle$, defining
the Lie derivative for any $\omega \in \Gamma(E^*)$.  This
implies that given an algebroid $E$ (not necessarily Lie), i.e.~data
that at least include those of the definition above, we can define a
``Lie derivative'' of any $E$-tensor along sections of $E$. By
construction, this generalizes the notion of an ordinary Lie
derivative of ordinary tensor fields: indeed, the usual formulas are
reproduced in the case where the algebroid is chosen to be a so-called
standard Lie algebroid, i.e.~where $E$ is the tangent bundle of the
base manifold $M$, the anchor $\rho$ is the identity map and the
bracket is the standard Lie bracket of vector fields. Such type of
geometrical notions have properties in common with their prototypes in
traditional geometry (like, in this case e.g., by construction, the
Leibniz property of the generalized Lie derivatives), but they in
general also have pronounced differences, possibly depending on the
type of algebroid, i.e.~on the additional structures. In the above
example one can ask e.g.~if the commutator of such Lie derivatives is
the Lie derivative of the bracket of the underlying sections. In
general this will not always be the case. However, for an important
subclass, containing Lie and Courant algebroids, it will (cf.~Lemma
\ref{lemma:rep} below).

Another important example of such generalized structures is
``generalized geometry'' or \emph{generalized complex geometry} in the
sense of Hitchin (cf.~also the contribution of N.~Hitchin to this
volume). It is a particular case of the above viewpoint where $E$ is
taken to be $TM \oplus T^*M$ equipped with projection to the first
factor as anchor as well as the so-called Courant or Dorfman bracket
(in fact, slightly more general and conceptually preferable, one takes
$E$ to be what is called an exact Courant algebroid---cf.~definition
\ref{def:exact} below). Now, by definition a generalized complex
structure is what usually would be a complex structure, just replacing
the standard Lie algebroid $TM$ by an exact Courant algebroid. In
particular, it is an endomorphism of $E \cong TM \oplus T^*M$ squaring
to minus one and satisfying an integrability condition using the
bracket on $E$ (cf., e.g., Prop.~\ref{prop:Nij} below). It is not
difficult to see that this notion generalizes simultaneously ordinary
complex structures as well as symplectic ones. In fact, the situation
is closely related to (real) Dirac structures (particular Lie subalgebroids of
exact Courant algebroids, cf.~definition \ref{def:Dirac} below),
mentioned already previously above: these generalize simultaneously
Poisson and presymplectic structures on manifolds. In fact,
generalized complex structures did not only find at least part of
their inspiration from real Dirac structures, but they can be even
defined equivalently as imaginary Dirac structures---which is the
perspective we want to emphasize in the present note.\footnote{This
point of view was maybe less known at the time when we this note was
started, while in the mean time it has received some attention also
elsewhere.}

An elegant and extremely useful viewpoint on some algebroids arises within
the language of differential graded manifolds, sometimes also called
Q-manifolds ($Q$ denoting a homological degree one vector field on the
graded manifold, i.e.~its differential). We devote a section to
explaining this relation, after having introduced the reader to the
above mentioned notions of algebroids, Dirac structures, and
generalized complex structures, in the three sections to
follow. Together these four sections provide our exposition on
algebroids and this kind of generalized geometry.

Some of the following sections then deal with the respective sigma
models: Given a Q-manifold with a compatible graded symplectic
structure on it, one can always associate a topological sigma model to
it \cite{AKSZ}. We review this construction in some detail and specialize it
to lowest dimensions (of the source manifold $\Sigma$), reproducing
topological models corresponding to Poisson manifolds ($\dim \S =2$,
this is the above mentioned Poisson sigma model), to Courant
algebroids ($\dim \S =3$, such models were considered in \cite{RoytenbergAKSZ}).
Some space is devoted to describing these models, somewhat complementary to
what is found in the literature, since
they can be used to introduce part of the formalism that is needed for the
last section to this contribution.

There are also topological models that, at least up to now,
have not yet been related to the AKSZ formalism and corresponding
e.g.~to Dirac structures. We recall these models, called Dirac sigma
models \cite{KSS} and generalizing the Poisson sigma models
essentially such as (real) Dirac structures generalize Poisson
manifolds,  in a separate section. These  as well as the AKSZ models share
the property that they are topological and that the solutions to their field
equations generalize (only) \emph{flat} connections to the algebroid setting.

The final section is devoted to sigma models in
arbitrary spacetime dimension (dimension of $\S$ which then is taken
to be pseudo-Riemannian) with a relation to algebroids but which are
nontopological and which generalize connections that are not necessarily flat but instead satisfy the Yang-Mills field equations.\footnote{Being nontopological is required for most
physical applications so as to host the degrees of freedom necessary
to describe realistic interactions. --- There are also another type of (nontopological) sigma models than those explained in the last section that are related to algebroids and in particular generalized and bihermitian geometry. These are  supersymmetric two-dimensional sigma
models, i.e.~\emph{string theories}. Although also highly interesting, we will not touch this issue here but refer e.g.~to the review article \cite{Zabzine-review} and references therein.} This deserves some further
motivating explanation already in the introduction:

Consider replacing in eq.~(\ref{eq:freescalar}) the functions on $\S$
by 1-forms $A^a$, $a=1,\ldots,r$, yielding
\beq
S[A^a] =  \frac{1}{2} \int_\S \rd A^a \wedge *   \rd A^a
\, . \label{eq:free1forms}
\eeq
For $r=1$, i.e.~there is just one 1-form field $A$, this is a famous
action functional, describing the electromagnetic interactions (the
electric and the magnetic fields can be identified with the components
of the ``field strength'' 2-form $\rd A$). Having several such 1-forms
in the game, $r>1$, one obtains the functional describing $r$ free
(i.e.~mutually independent) 1-form fields\footnote{In the physics
language they are often called ``vector fields'' (as opposed to
``scalar fields'' used in (\ref{eq:freescalar})). We avoid this
somewhat misleading/ambiguous nomenclature, but we will, however, from
time to time refer to them as ``gauge fields'', despite the fact that
in the mathematical setting they correspond to connections in a principal
$G$-bundle (here trivialized with $G=U(1)^r$).}. The most standard way of
making scalar fields interact is to add some ``potential term'' to the
functional (\ref{eq:freescalar}), i.e.~to add the integral over
$V(\phi^i)$ multiplied by the $h$-induced volume form on $\S$ (where
$V$ is some appropriately smooth function on $\R^n$, mostly even only
a low degree polynomial so as to not spoil
``renormalizability'').\footnote{An alternative way of having scalar
fields interact is coupling them to 1-form gauge fields so that they
start being correlated (i.e.~interacting) via these 1-form fields. In
fact, both ways of interactions are realized in the standard model of
elementary particle physics, where the gauge fields describe
interaction particles like the photon and the scalar fields describe
``matter'', essentially like electrons (or the--not yet discovered--Higgs particles).}
Turning (\ref{eq:free1forms}) into an interacting
theory (without introducing further fields and not spoiling its gauge
invariance, at most ``deforming'' the latter one appropriately) is not
so simple. In fact, the result is rather restricted (cf., e.g.,
uniqueness theorems in the context of the deformation theory of gauge
theories \cite{BarnichBrandtHenneaux-Review}) and one is lead to only
replace $\rd A^a$ by the expression
\beq F=\rd A + \frac{1}{2}[ A \! \stackrel{\wedge}{,} \! A] \label{F} \eeq
for the curvature of the Lie algebra valued
connection 1-form $A$ of a (trivialized) principal bundle for some
$r$-dimensional, quadratic Lie group $G$. ``Quadratic'' means that its
Lie algebra $\g$ admits an ad-invariant inner product $\kappa$ which,
when starting by deforming (\ref{eq:free1forms}), needs to have
definite signature such that a sum over the index $a$ results from a term of the
form
\beq S_{YM}[A]=\int_\S \kappa(F\! \stackrel{\wedge}{,} \! *F) \label{eq:YM} \eeq
after choosing a $\kappa$-orthonormal basis in $\g$. The resulting
theory 
is called a Yang-Mills (gauge) theory and was found to govern all the strong,
the weak, and the electromagnatic interactions.

If one considers an algebroid $E$ as defined above over a zero
dimensional base manifold $M$, $M$ degenerating to a point, one is
left only with a vector space. For $E$ being a Lie algebroid this
vector space becomes a Lie algebra, for $E$ being a Courant algebroid
it becomes a quadratic Lie algebra. In fact, this is a second,
algebraic part that is incorporated in algebroids: general Lie
algebroids can be thought of as a common generalization of the
important notions of, on the one hand, a Lie algebra $\g$ and, on the
other hand, standard geometry (i.e.~geometry defined for $TM$ or
$\tau_q^p(M)$), as we partially explained already above. (A similar
statement is true for general Courant algebroids, $TM$ then being
replaced by the ``standard Courant algebroid'' $TM\oplus T^*M$ and
$\g$ by a quadratic Lie algebra together with its invariant scalar
product). From this perspective it is thus tempting to consider the
question if one can define e.g.~a theory of principal bundles with
connections in such a way that the structural Lie algebra is replaced
by (or better generalized to) appropriately specified structural
algebroids. Likewise, from the more physical side, can one
generalize a functional such as the Yang-Mills functional
(\ref{eq:YM}) to a kind of sigma model, replacing the in some sense
flat Lie algebra $\g$ by nontrivial geometry described via some
appropriate Lie or Courant algebroid?  These questions will be
addressed and, at least part of them, answered to the positive in the
final section to this article.

Between the part on sigma models and the one on algebroids it would have been nice to also include a section on current algebras (cf.~\cite{AS} as a first step), as another link between the two; lack of spacetime, however, made us decide to drop this part in the present contribution.

\section{Lie and Courant algebroids}
\label{sec:GG}

In the present section, we recall the notions of Lie and Courant
algebroids in a rigorous manner and study some of their properties. We also briefly introduce some of
their higher analogues, like Lie 2-algebroids and vector bundle
twisted Courant algebroids. In a later section we will provide
another, alternative viewpoint of all these objects by means of graded manifolds,
which permits an elegant and concise reformulation.


\begin{deff} \label{Lodayalgebroid}
A \emph{Loday algebroid} is an algebroid $(E,\rho, [
\cdot , \cdot])$ as defined by Definition \ref{algebroid}
where the bracket defines a Loday algebra on $\Gamma(E)$, i.e.~it
satisfies the Loday (or left-Leibniz) property,
\beq [\psi_1 , [\psi_2,\psi_3]]=[ [\psi_1 , \psi_2],\psi_3] +
[\psi_2 , [\psi_1,\psi_3]] \, . \label{Loday}
\eeq
An \emph{almost Lie algebroid} is an algebroid $E$ where the bracket
is antisymmetric. $E$ becomes a \emph{Lie algebroid}, if $[ \cdot ,
\cdot]$ defines a Lie algebra structure on $\Gamma(E)$,
 i.e.~if $E$ is simultaneously Loday and almost Lie.
\end{deff}
Here we adapted to the nomenclature of Kosmann-Schwarzbach, who
prefers to use the name of Loday in the context of (\ref{Loday}) so as
to reserve the terminus Leibniz for compatibility
w.r.t.~multiplication of sections by functions, (\ref{Leibniz}), for
which ``Leibniz rule'' has become standard.

\begin{example} \label{TM} Obviously, $(TM,\rho=Id)$ together with the Lie bracket of
vector fields is a Lie algebroid; it is called the \emph{standard} Lie algebroid.

If $M$ is a point, on the other
hand, a Lie algebroid reduces to a Lie algebra. More generally, if
the anchor of a Lie algebroid map vanishes (i.e.~maps $E$ to
the image of the zero section in $TM$), $E$ is a bundle of Lie
algebras; in general not a Lie algebra bundle, since Lie algebras of
different fibers need not be isomorphic.\end{example}

\begin{example} \label{Poisson1} A less trivial Lie algebroid
is the cotangent bundle $T^*M$ of a Poisson manifold.\footnote{By
definition, one obtains a Poisson structure on a smooth manifold $M$,
if the space of functions is equipped with a
Lie bracket $\{ \cdot ,\cdot\}$ satisfying $\{f,g h\} = g \{ f, h\} +
h \{ f, g\}$ for all $f,g,h \in C^\infty(M)$. The latter condition, together with the antisymmetry of the bracket, is equivalent to the existence of a
bivector field $\Pi \in \Gamma(\Lambda^2 TM)$ such that
$\{f,g\}=\langle \Pi, \md f \otimes \md g
\rangle$.} The anchor is provided by contraction with the Poisson bivector field
$\Pi$ and the bracket of exact 1-forms $[\md f, \md g] := \md
\{f,g\}$ is extended to all 1-forms by means of the Leibniz rule
(\ref{Leibniz}).
\end{example}

\begin{example} An example of a Loday algebroid with a non-antisymmetric
bracket is the following one: $E=TM \oplus T^*M$ with $\rho$ being
projection to the first factor and the bracket being given by
\beq  [\xi +\a ,\xi' +\a']=[\xi ,\xi']+L_\xi\a' -\i_{\xi'}\md\a \, ;
\label{Courant-bracket}
\eeq
here $\xi,\xi'$ and $\a,\a'$ denote vector fields and 1-forms on $M$,
respectively, and the bracket on the r.h.s.~is the usual Lie bracket of
vector fields. This is the socalled Dorfmann bracket. Note
that if one takes the antisymmetrization of this bracket (the original
bracket Courant has introduced \cite{Courant}), $E$ does \emph{not}
become an algebroid in our sense, since eq.~(\ref{Leibniz}) will not
be valid any more (as a consequence of the non-standard behavior of the
above bracket under multiplication by a function of the section in the
first entry).
\end{example}

\begin{lemma} \label{morphism}
The anchor map of a Loday algebroid is a morphism of brackets.
\label{lem:morphism}
\end{lemma}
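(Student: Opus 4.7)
The plan is to derive the morphism property $\rho([\psi_1,\psi_2])=[\rho(\psi_1),\rho(\psi_2)]$ as a consequence of the Loday identity \eqref{Loday} combined with the Leibniz rule \eqref{Leibniz}. The standard trick here is to pick three sections and expand the Loday identity after multiplying one of them by a function $f\in C^\infty(M)$; the identity must still hold, and comparing the two sides forces a constraint on $\rho$.

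Concretely, I would fix $\psi_1,\psi_2,\psi_3\in\Gamma(E)$ and $f\in C^\infty(M)$ and compute both sides of
\[
 [\psi_1,[\psi_2,f\psi_3]] \;=\; [[\psi_1,\psi_2],f\psi_3] + [\psi_2,[\psi_1,f\psi_3]]
\]
by repeatedly applying \eqref{Leibniz} to pull $f$ out. On the left one obtains
\[
 f[\psi_1,[\psi_2,\psi_3]] + (\rho(\psi_1)f)\,[\psi_2,\psi_3] + (\rho(\psi_2)f)\,[\psi_1,\psi_3] + \bigl(\rho(\psi_1)\rho(\psi_2)f\bigr)\psi_3 ,
\]
and on the right an analogous expression with the roles of the outer arguments appropriately rearranged, producing the term $\bigl(\rho([\psi_1,\psi_2])f\bigr)\psi_3$ from the first summand and $(\rho(\psi_2)\rho(\psi_1)f)\psi_3$ from the second. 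The key observation is that all the terms proportional to $f$ itself cancel precisely by virtue of the underlying Loday identity \eqref{Loday} applied to $\psi_1,\psi_2,\psi_3$, and similarly the ``mixed'' terms of the form $(\rho(\psi_i)f)[\psi_j,\psi_k]$ cancel across the two sides.

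What remains is a pointwise identity on $\Gamma(E)$ reading
\[
 \bigl(\rho(\psi_1)\rho(\psi_2)f - \rho(\psi_2)\rho(\psi_1)f - \rho([\psi_1,\psi_2])f\bigr)\,\psi_3 \;=\; 0 .
\]
Since this must hold for every section $\psi_3$ and every function $f$, we conclude $[\rho(\psi_1),\rho(\psi_2)]f = \rho([\psi_1,\psi_2])f$ for all $f\in C^\infty(M)$, i.e.\ $\rho$ intertwines the bracket on $E$ with the Lie bracket of vector fields on $M$.

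The only subtle point, and arguably the ``main obstacle'', is the careful bookkeeping of all the derivation-type terms generated by the two applications of the Leibniz rule on each side; one must verify that the assumed Loday identity for the unmultiplied triple is exactly what is needed to kill the would-be obstruction $f\bigl([\psi_1,[\psi_2,\psi_3]] - [[\psi_1,\psi_2],\psi_3] - [\psi_2,[\psi_1,\psi_3]]\bigr)$, so that the remaining constraint sits purely in the zeroth-order-in-$f$ coefficient of $\psi_3$. Note that antisymmetry of the bracket is not used anywhere, so the statement covers the general Loday case and, in particular, both Lie and (pre-)Courant algebroids.
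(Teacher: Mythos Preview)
Your proof is correct and follows essentially the same approach as the paper: replace $\psi_3$ by $f\psi_3$ in the Loday identity, expand via the Leibniz rule, and observe that the $f$-proportional terms cancel by Loday while the mixed terms cancel directly, leaving the anchor constraint. The paper organizes the computation as $[[\psi_1,\psi_2],f\psi_3]=[\psi_1,[\psi_2,f\psi_3]]-(\psi_1\leftrightarrow\psi_2)$ and antisymmetrizes, whereas you expand all three terms separately, but this is only a cosmetic rearrangement of the same argument.
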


\begin{proof} Obvioulsy, from (\ref{Loday}) we find
$[ [\psi_1 , \psi_2],f\psi_3] =  [\psi_1 , [\psi_2,f\psi_3]]  -
\left(\psi_1 \leftrightarrow \psi_2\right)$.
Using the Leibniz rule for the l.h.s., we obtain
$\rho([\psi_1,\psi_2])f \, \psi_3 + f [ [\psi_1 ,
\psi_2],\psi_3]$. Applying it twice to the first term on the r.h.s., we get
$\rho(\psi_1)\rho(\psi_2) f \, \psi_3  + f  [\psi_1 , [\psi_2,\psi_3]]
+ \rho(\psi_1)f \, [\psi_2,\psi_3] +  \rho(\psi_2)f \,
[\psi_1,\psi_3]$. The last two terms drop out upon antisymmetrizing in
$\psi_1$ and $\psi_2$. In the remaining equation the terms proportional
to $f$ cancel by means of (\ref{Loday}), and one is left with
\beq \rho ([\psi_1,\psi_2])f \, \psi_3 =[\rho(\psi_1),\rho(\psi_2)] f \,
\psi_3 \, , \eeq
valid for all sections $\psi_i$, $i=1,2,3$, and functions $f$. This
completes the proof. $\square$
\end{proof}

\begin{lemma} \label{lemma:rep} The $E$-Lie derivative provides a representation of
the bracket of a
Loday algebroid on $E$-tensors,
$[\EL_{\psi_1},\EL_{\psi_2}]=  \EL_{[\psi_1,\psi_2]}$                .
\end{lemma}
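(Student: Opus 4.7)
The plan is to verify the identity on the generators of the $E$-tensor algebra, namely on functions and on sections of $E$ itself, and then extend it to all $E$-tensors using the fact that both sides of the claimed equation are derivations compatible with contraction. Concretely, let $\D_{1,2} := [\EL_{\psi_1},\EL_{\psi_2}] - \EL_{[\psi_1,\psi_2]}$; I would like to show that $\D_{1,2}$ annihilates every $E$-tensor field.

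First I would check $\D_{1,2}$ on $C^\infty(M)$. By the definition of the Lie derivative on functions, $\EL_{\psi_i} f = \rho(\psi_i) f$, so
\beq
[\EL_{\psi_1},\EL_{\psi_2}] f = [\rho(\psi_1),\rho(\psi_2)] f = \rho([\psi_1,\psi_2]) f = \EL_{[\psi_1,\psi_2]} f \, ,
\eeq
where the second equality is precisely Lemma \ref{lem:morphism}. Next, on a section $\psi_3 \in \Gamma(E)$ one has $\EL_{\psi_i} \psi_3 = [\psi_i, \psi_3]$, so that the vanishing of $\D_{1,2} \psi_3$ reads
\beq
[\psi_1,[\psi_2,\psi_3]] - [\psi_2,[\psi_1,\psi_3]] = [[\psi_1,\psi_2], \psi_3] \, ,
\eeq
which is exactly the Loday identity (\ref{Loday}). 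Hence $\D_{1,2}$ vanishes on functions and on $\Gamma(E)$.

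The remaining step is the extension. Both $[\EL_{\psi_1},\EL_{\psi_2}]$ and $\EL_{[\psi_1,\psi_2]}$ are derivations of the tensor algebra over $C^\infty(M)$: the former is the commutator of two derivations, the latter a derivation by construction. Likewise, both commute with contraction, again because they are built from operators with these properties. It follows that $\D_{1,2}$ is a derivation of the $E$-tensor algebra that commutes with contractions and vanishes on functions and on $\Gamma(E)$. Compatibility with contraction then forces $\D_{1,2}$ to vanish also on $\Gamma(E^*)$: for any $\omega \in \Gamma(E^*)$ and $\psi \in \Gamma(E)$, the identity $\D_{1,2}\langle \psi, \omega\rangle = \langle \D_{1,2}\psi, \omega\rangle + \langle \psi, \D_{1,2}\omega\rangle$ reduces to $\langle \psi, \D_{1,2}\omega\rangle = 0$ for all $\psi$, hence $\D_{1,2}\omega = 0$. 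The Leibniz rule with respect to tensor products then propagates the vanishing of $\D_{1,2}$ to arbitrary $E$-tensors.

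The only subtle point, and thus the place where care is required, is the check that $\EL_{[\psi_1,\psi_2]}$ is indeed a well-defined derivation in the same sense as $\EL_{\psi_i}$; this is immediate from the construction preceding the lemma, since the $E$-Lie derivative was defined for any section of $E$ precisely by the Leibniz rule on tensor products and compatibility with contraction, and $[\psi_1,\psi_2]$ is a perfectly good section of $E$. No antisymmetry or Jacobi property of $[\cdot,\cdot]$ beyond the Loday identity is needed, which is why the statement holds already at the level of Loday algebroids. $\square$
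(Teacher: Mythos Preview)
Your proof is correct and follows essentially the same approach as the paper's own proof: the paper invokes the Loday identity (\ref{Loday}) for the action on $\Gamma(E)$, Lemma~\ref{lem:morphism} for the action on functions, and then the fact that commutators of derivations are again derivations (with respect to both the tensor product and contractions) to extend to all $E$-tensors. Your write-up simply spells out these steps in more detail, introducing the difference operator $\D_{1,2}$ and making the extension to $\Gamma(E^*)$ explicit.
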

\begin{proof}
This follows from (\ref{Loday}), the previous Lemma, and the extension
of the E-Lie derivative to tensor powers and the dual by means of a
Leibniz rule (using that commutators of operators satisfying a Leibniz
rule for some algebra---here that of the tensor product as well as
that of contractions---are of Leibniz type again).  $\square$
\end{proof}

A Lie algebroid permits to go further in extrapolating usual
geometry on manifolds to the setting of more general vector
bundles (algebroids). In particular, there is a straightforward
generalization of the de Rham differential in precisely this
case:\footnote{We remark in parenthesis that there is also an
option to generalize the de Rham differential different from formula
(\ref{deRham}) below, using the language of graded
manifolds; \emph{that}
generalization can be used also for Courant algebroids
(cf.~e.g.~\cite{Roytenberg}) or even arbitrary $L_\infty$
algebroids, cf.~section \ref{Qman} below.}
In any almost Lie algebroid we may define the following degree one map
$\Emd$ on $E$-differential forms $\OE \equiv \Gamma(\Lambda^\cdot
E^*)$. For any function $f$ and $E$-1-form $\omega$ we set
\beq\label{diff-can} \langle \Emd f , \psi \rangle := \rho(\psi) f
\; , \, \langle \Emd \omega , \psi \otimes \psi' \rangle :=\rho
(\psi)\langle \omega , \psi'\rangle -\rho (\psi')\langle \omega ,
\psi\rangle -\langle\omega , [\psi ,\psi']\rangle\label{deRham}
\eeq and extend this by means of a graded Leibniz rule to all of
$\Omega_E (M) $. Clearly, for the standard Lie algebroid $(TM, Id)$ this
reduces to the ordinary de Rham differential. As one proves by
induction, with this one finds in generalization of
the Cartan-Koszul formula:
\beq\label{Cartan-Koszul} \Emd \omega
(\psi_1,...,\psi_{p+1})&:=& \sum\limits_{i=1}^{p+1}(-1)^{i+1} \rho
(\psi_i)\omega (...,\hat{\psi_i},...)+
\\ \nonumber &&
+\sum\limits_{i<j}(-1)^{i+j}\omega ([\psi_i,\psi_j]...,
\hat{\psi_i},...,\hat{\psi_j},...)\;,
\eeq
valid for any $\omega\in\Omega_E^p (M)$ and $\psi_i\in\Gamma (M, E)$.
This text is in part supposed to contain explicit proofs:

\begin{proof}(of eq.~(\ref{Cartan-Koszul}))
The property holds for $p=1$ by definition. Suppose
we proved (\ref{Cartan-Koszul}) for all forms of order at most
$p-1$. It suffices to see that (\ref{Cartan-Koszul}) holds for each
$\omega =\alpha\wedge \tau$, where $\alpha\in\Omega_E^1 (M)$,
$\tau\in \Omega_E^{p-1} (M)$. $\Emd (\omega)=\Emd
(\alpha)\wedge\tau -\alpha\wedge\Emd (\tau)$, therefore
\beqn  &&\Emd (\omega)(\psi_1,...,\psi_{p+1})=\sum\limits_{i<j}
(-1)^{i+j+1} \;\Emd \a (\psi_i ,\psi_j)\tau
(...,\hat{\psi_i},...,\hat{\psi_j},...)+
\\&&\nonumber \sum\limits_{i} (-1)^i \a (\psi_i)\;\Emd\tau (...,
\hat{\psi_i},...) = \sum\limits_{i<j} (-1)^{i+j+1} \;
\left(\rho(\psi_i)\a (\psi_j)- \rho (\psi_j)\a (\psi_i) -\right.
\\\nonumber &&\left.\a
([\psi_i,\psi_j])\right)\! \tau
(...,\hat{\psi_i},...,\hat{\psi_j},...)\! +\!\!\sum\limits_{i}
(-1)^i \a (\psi_i)\left(\sum\limits_{j<i}(\!-\!1)^{j+1}\rho
(\psi_j)\tau (..., \hat{\psi_j},...,\hat{\psi_i},... )\right.
\\\nonumber &&+\left. \sum\limits_{i<j}(-1)^{j}\rho (\psi_j)\tau
(..., \hat{\psi_i},...,\hat{\psi_j},...
)+\sum\limits_{{k<l<i}\atop{i<k<l}} (-1)^{k+l}\tau
([\psi_k,\psi_l],...,\widehat{\psi_{i,k,l}},...)
+\right.\\\nonumber &&\left. + \sum\limits_{k<i<l}
(-1)^{k+l+1}\tau
([\psi_k,\psi_l],...,\widehat{\psi_{i,k,l}},...)\right)\;,\eeq
Collecting all terms proportional to $\rho$, we find
\beqn  && \sum\limits_{i<j} (-1)^{i+j+1} \;\left(\rho
(\psi_i)\a (\psi_j)- \rho (\psi_j)\a (\psi_i) \right) \tau
(...,\hat{\psi_i},...,\hat{\psi_j},...)+ \\ \nonumber &&
\sum\limits_{i} (-1)^i
\a (\psi_i) \left(
\sum\limits_{j<i}(-1)^{j+1}\rho (\psi_j)\tau (...,
\hat{\psi_j},...,\hat{\psi_i},... )+\sum\limits_{i<j}(-1)^{j}\rho
(\psi_j)\tau (..., \hat{\psi_i},...,\hat{\psi_j},... )
\right)=\\\nonumber & = & \sum\limits_{i=1}^{p+1}(-1)^{i+1} \rho
(\psi_i)\omega (...,\hat{\psi_i},...)\; , \eeq
while the remaining terms yield
\beqn && \sum\limits_{i<j} (-1)^{i+j} \a ([\psi_i,\psi_j])
\tau (...,\hat{\psi_i},...,\hat{\psi_j},...)+\\\nonumber &&
\sum\limits_{i}
(-1)^i \a (\psi_i)
\left(
\sum\limits_{{k<l<i}\atop{i<k<l}} (-1)^{k+l}\tau
([\psi_k,\psi_l],...,\widehat{\psi_{i,k,l}},...) +
\sum\limits_{k<i<l} (-1)^{k+l+1}\tau
([\psi_k,\psi_l],...,\widehat{\psi_{i,k,l}},...)\right)=\\\nonumber
&=& \sum\limits_{i<j}(-1)^{i+j}\omega ([\psi_i,\psi_j]...,
\hat{\psi_i},...,\hat{\psi_j},...)\;\eeq
which completes the proof.
$\square$ \end{proof}

One may  now verify that the above operator squares to
zero, $\Emd ^2=0$, \emph{iff} (\ref{Loday}) is satisfied
(turning $E$ into a Lie algebroid).
In fact, \emph{all} the information of a Lie
algebroid is captured by $\Emd$ as seen from the following

\begin{prop} The structure of a Lie algebroid on a vector bundle $E
\to M$ is in one-to-one correspondence with the structure of a
differential complex on $(\OE,\wedge)$. \label{prop:Lie}
\end{prop}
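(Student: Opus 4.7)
The plan is to construct two mutually inverse maps between Lie algebroid structures on $E$ and degree-one graded derivations $\Emd$ of $(\OE,\wedge)$ squaring to zero. The forward map is the one already built via (\ref{diff-can})--(\ref{Cartan-Koszul}); the reverse map reconstructs $\rho$ and $[\cdot,\cdot]$ from $\Emd$. Well-definedness then reduces to checking $C^\infty(M)$-linearity of the reconstructed data, the Leibniz rule (\ref{Leibniz}), antisymmetry, and the Jacobi identity (\ref{Loday}).

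For the forward direction, starting from a Lie algebroid we already have $\Emd$ satisfying (\ref{Cartan-Koszul}), so it only remains to show $\Emd^2=0$. The key observation is that $\Emd^2=\tfrac{1}{2}[\Emd,\Emd]$ is itself a graded derivation of $\OE$ of degree $2$, hence is determined by its restriction to $\Omega_E^0(M)=C^\infty(M)$ and to $\Omega_E^1(M)=\Gamma(E^*)$. On functions one computes $\Emd^2 f(\psi_1,\psi_2)=[\rho(\psi_1),\rho(\psi_2)]f-\rho([\psi_1,\psi_2])f$, which vanishes by Lemma \ref{lem:morphism}. On $1$-forms, expanding $\Emd^2\omega(\psi_1,\psi_2,\psi_3)$ via (\ref{Cartan-Koszul}) and collecting terms with the help of Lemma \ref{lem:morphism} makes all anchor-derivative contributions cancel, leaving
\beq
\Emd^2\omega(\psi_1,\psi_2,\psi_3)=-\big\langle\omega,\,[\psi_1,[\psi_2,\psi_3]]-[[\psi_1,\psi_2],\psi_3]-[\psi_2,[\psi_1,\psi_3]]\big\rangle,
\eeq
which vanishes by (\ref{Loday}).

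For the reverse direction, given $\Emd$ with $\Emd^2=0$, define $\rho\colon E\to TM$ by $\rho(\psi)f:=\langle \Emd f,\psi\rangle$; the graded Leibniz rule for $\Emd$ makes $\rho(\psi)$ into a vector field and $C^\infty(M)$-linearity of the pairing makes $\rho$ a bundle morphism. Define a bracket on $\Gamma(E)$ implicitly by
\beq
\langle\omega,[\psi,\psi']\rangle := \rho(\psi)\langle\omega,\psi'\rangle-\rho(\psi')\langle\omega,\psi\rangle-\Emd\omega(\psi,\psi')
\eeq
for $\omega\in\Omega_E^1(M)$. Replacing $\omega$ by $f\omega$ multiplies the right-hand side by $f$, because the two $\rho(\cdot)f$ terms generated by the Leibniz rule on the pairings are precisely cancelled by the contribution of $\Emd f\wedge\omega$ from $\Emd(f\omega)$; hence the right-hand side is $C^\infty(M)$-linear in $\omega$, and $[\psi,\psi']\in\Gamma(E)$ is well-defined. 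Antisymmetry is manifest, and an analogous calculation for the substitution $\psi'\mapsto f\psi'$ reproduces (\ref{Leibniz}). Finally, $\Emd^2=0$ on $C^\infty(M)$ gives the anchor-morphism property, and $\Emd^2=0$ on $\Omega_E^1(M)$ forces (\ref{Loday}) by running the forward computation backwards, so $E$ is a Lie algebroid. The two procedures are mutually inverse directly from their defining formulas.

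The main obstacle lies in the two tensoriality checks of the reverse direction: that the bracket formula is $C^\infty(M)$-linear in $\omega$ and hence defines an actual section, and that $\Emd^2\omega=0$ on $1$-forms really isolates the Jacobiator with no leftover anchor contributions. Both are routine bookkeeping once one systematically uses the graded Leibniz rule for $\Emd$ together with Lemma \ref{lem:morphism}.
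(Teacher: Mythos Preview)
Your proposal is correct and follows essentially the same route as the paper. The paper packages the reconstruction of $\rho$ and $[\cdot,\cdot]$ from $\Emd$ via the derived-bracket formulas $\rho(\psi)f=[\iota_\psi,\Emd]f$ and $\iota_{[\psi,\psi']}=[\iota_\psi,[\iota_{\psi'},\Emd]]$, which unwind to exactly your pairing formulas; it then writes the anchor-morphism defect and the Jacobiator directly as contractions of $\Emd^2$ acting on functions and $1$-forms, matching your two key computations, and invokes the same derivation argument for $\Emd^2=\tfrac12[\Emd,\Emd]$ to reduce to low degrees.
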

\begin{proof}
As obvious from (\ref{diff-can}), the skew-symmetric bracket
on sections of $E$
and the anchor map are uniquely determined by the canonical
differential $\Emd$. In particular, (\ref{diff-can}) can be rewritten
as
 \beq \label{rho-derived}\rho
(\psi )f&=&[\imath_\psi ,\Emd]f\;,\\
\label{bracket-derived}\i_{[\psi,\psi']}\omega&=&
\left[\i_{\psi},[\i_{\psi'},\Emd]\right] \omega\;,
\eeq
 where $\imath_\psi$ denotes the contraction
with $\psi\in\Gamma (M,E)$ and $[,]$
the super bracket of super derivations of
the graded commutative algebra $\OE$. Note that (\ref{bracket-derived})
has been verified for $E$-1-forms $\omega$ (and it is trivially
satisfied if $\omega$ is a function); due to the required
Leibniz property of $\Emd$, however,
it then necessarily holds for arbitrary $\omega \in \OE$ (which is
generated by  $E$-1-forms and  functions). With these formulas
we now find
\beqn &&\left(\rho ([\psi_1
,\psi_2])-[\rho (\psi_1),\rho (\psi_2)]\right)f
=\i_{\psi_1}\i_{\psi_2}\Emd^2 f \;, \\\nonumber
 &&\left(\i_{\left[\psi_1,[\psi_2,\psi_3]\right]}+
\;c.\;p.\right)\omega
 = \i_{\psi_1}\i_{\psi_2}\i_{\psi_3}\Emd^2 \omega
-\left(\i_{\psi_1}\i_{\psi_2}\Emd^2 \left(\i_{\psi_3}\omega\right) +
\;c.\;p., \right)\eeq
valid for any  $f\in C^\infty
(M) $, $\omega\in\Omega^1_E (M) $, and $\psi_i\in\Gamma (M,E)$.
Thus we see that the anchor map is a morphism of
brackets if and only if the canonical differential squares to zero
on functions, and the Jacobi condition for the almost Lie bracket
holds if and only if $\Emd^2=0$ on functions and $E$-1-forms
simultaneously. This extends to all of $\OE$ since $\Emd^2
=\frac{1}{2}[\Emd, \Emd]$ is a super derivative of the algebra
$\OE$. Noting that the Leibniz rule follows from the derivative property of $\Emd$
as well and that any Lie algebroid gives rise to such a differential finalizes
the proof. $\square$
\end{proof}

Eq.~(\ref{bracket-derived}) shows that the bracket of a Lie algebroid
is a socalled \emph{derived bracket} \cite{Kosmann}, generalizing
the wellknown formulas of Cartan: Eq.~(\ref{rho-derived}) generalizes
the fact that on differential forms one has $L_{\xi} = [\i_\xi , \md]$
(while on a function the Lie derivative reduces to application of the
vector field). Eq.~(\ref{bracket-derived}) reduces to the evident
identity $ -\i_{L_{\xi'}(\xi)}=[\i_\xi , L_{\xi'}]$ for the standard
Lie algebroid (using $L_{\xi'}(\xi)=-[\xi,\xi']$). We remark in
parenthesis that in the case of $E$ being a Lie algebroid the $E$-Lie
derivative, defined for a general Loday algebroid, satisfies also the
usual formula
\beq  \label{Lie} \EL_\psi = \i_\psi  \Emd  + \Emd \, \i_\psi \eeq
on $E$-differential forms. Later in this
article we also show how the Courant (or Dorfmann) bracket can be put into the form
of a derived bracket---for the case of eq.~(\ref{Courant-bracket})
already at the end of this section, while for the more general setting, which we
are going to define in what follows, in section \ref{Qman} below.


\begin{deff} \label{Cour}
A \emph{Courant algebroid} is a Loday algebroid
$(E,\rho,[ \cdot , \cdot])$ together with an invariant
$E$-metric $\Eg$ such that
$\Eg([\psi,\psi],\psi') =  \frac{1}{2}\rho(\psi')\Eg(\psi,\psi)$.
\end{deff}
We will mostly denote
$\Eg(\psi,\psi')$ simply as $(\psi,\psi')$, except if the appearance
of the metric shall be stressed. The $E$-metric is used to control the
symmetric part of the bracket. It permits to deduce the behavior
of the bracket under multiplication of the first section w.r.t.~a
function for example:
\beq  [f\psi_1,\psi_2] = f [\psi_1, \psi_2] -
\left(\rho(\psi_2)f\right) \psi_1 + (\psi_1,\psi_2)\,\rho^*(\md f)
\, . \eeq
In a general Loday algebroid there is
\emph{no} restriction to the bracket like this at all.

Invariance of $\Eg$ means that for any section
$\psi \in \Gamma(M,E)$ one has
$\EL_\psi \Eg =0$ ; this is the same as requiring
\beq ([\psi_1,\psi_2],\psi_3) + (\psi_2 , [\psi_1,\psi_3]) =
\rho(\psi_1)  (\psi_2,\psi_3) \label{ad}
\eeq
for any $\psi_i \in \Gamma(M,E)$. When $\rho$, considered as a section
of $Hom (E,TM)$, is nonzero at any point of
$M$, remarkably the invariance can be also concluded from the
remaining three axioms of a Courant algebroid cf.~\cite{Hansen-Strobl}. Also,
it is easy to see that just
Eq.~(\ref{ad}) by itself permits to conclude the Leibniz identity
(\ref{Leibniz}): Replace $\psi_2$ by $f \psi_2$ to obtain
$([\psi_1,f\psi_2],\psi_3) - f ([\psi_1,\psi_2],\psi_3) =
(\rho(\psi_1)f) (\psi_2,\psi_3)$, which yields the claimed equation by
bilinearity and non-degeneracy of the inner product. So, any anchored
vector bundle equipped with some bracket on its sections and an invariant
fiber metric in the sense of
(\ref{ad}) is an algebroid as defined in the introduction.

Sometimes in the literature an antisymmetrization of the above bracket
$[ \cdot , \cdot]$ is used in the definition, cf.~e.g.~the
contribution of Hitchin to this volume;\footnote{At least in the case
of exact Courant algebroids with $H=0$, introduced below, the
antisymmetrized bracket is called the Courant bracket and the bracket
used above the Dorfmann bracket.  In the axiomatization of a general
Courant algebroid, both types of brackets can be used, and, following
\cite{SeveraWeinstein}, we prefer the non-antisymmetric bracket for
the reason to follow. Although in the more recent literature
the name
Dorfmann bracket seems to prevail for the nonantisymmetric bracket in an
exact Courant algebroid, we will often use the terminus ``Courant
bracket'' for it.}  this antisymmetrized bracket, however, has
\emph{neither} of the two nice properties (\ref{Leibniz}) and
(\ref{Loday}), for which reason we preferred the present version of
axiomatization.

For $M$ a point, the definition of a Courant algebroid is
easily seen to reduce to a quadratic Lie algebra, i.e.~to a
Lie algebra
endowed with a nondegenerate, ad-invariant inner product:
Indeed, since then the r.h.s.~of the last equation in definition
\ref{Cour} vanishes, the bracket becomes antisymmetric and (\ref{ad})
apparently reduces to the notion of ad-invariance of the metric.
(Evidently in this context, or whenever $\rho$ has zeros, the
invariance needs to be required separately).  The fact that Courant
algebroids provide a generalization of quadratic Lie algebras may be
one motivation for introducing them, since in particular there is also
the following simple observation:
\begin{prop} Let $E$ be a Lie algebroid. Then it admits an invariant
fiber metric  $\Eg$ only in the (very restrictive) case of
$\rho \equiv 0$.
\end{prop}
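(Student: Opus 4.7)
The plan is to exploit the tension between the two defining features of a Lie algebroid equipped with an invariant fiber metric: the bracket is antisymmetric (as in any Lie algebroid), and the metric is $C^\infty(M)$-bilinear and nondegenerate, while the invariance condition mixes these with the anchor through a Leibniz-type rule. The key is that substituting a function-rescaled section into the invariance identity should produce new algebraic constraints, and for a \emph{Lie} (as opposed to merely Loday or Courant) algebroid the antisymmetry turns these into such a strong restriction that $\rho$ must vanish.

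Concretely, I would start from invariance written out as
\beq
\rho(\psi_1)(\psi_2,\psi_3) \;=\; ([\psi_1,\psi_2],\psi_3)+(\psi_2,[\psi_1,\psi_3])\, , \label{inv-plan}
\eeq
and replace $\psi_1$ by $f\psi_1$ for an arbitrary $f\in C^\infty(M)$. On the right-hand side, antisymmetry of the bracket together with the Leibniz rule (\ref{Leibniz}) gives
$[f\psi_1,\psi_i]=f[\psi_1,\psi_i]-(\rho(\psi_i)f)\,\psi_1$
for $i=2,3$, while on the left-hand side $\rho(f\psi_1)=f\rho(\psi_1)$ simply pulls out the $f$. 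Subtracting $f$ times (\ref{inv-plan}) from the resulting identity cancels the terms that are genuinely of ``Lie derivative'' type and leaves the purely algebraic relation
\beq
(\rho(\psi_2)f)\,(\psi_1,\psi_3)+(\rho(\psi_3)f)\,(\psi_1,\psi_2)=0\, ,
\eeq
valid for all sections $\psi_1,\psi_2,\psi_3$ and all functions $f$.

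Specializing $\psi_3=\psi_2$ yields $(\rho(\psi_2)f)(\psi_1,\psi_2)=0$ for all $\psi_1,\psi_2,f$. The final step is then a pointwise argument: suppose, for contradiction, that $\rho\not\equiv 0$, so that $\rho_p(\psi_2|_p)\neq 0$ at some $p\in M$ for an appropriate section $\psi_2$. Then $\psi_2|_p\neq 0$ and one can choose $f$ with $(\rho(\psi_2)f)(p)\neq 0$; by nondegeneracy of $\Eg$ one can also choose $\psi_1$ with $(\psi_1,\psi_2)(p)\neq 0$. These two choices together violate the displayed relation at $p$, giving the required contradiction and hence $\rho\equiv 0$.

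The one place where I would be careful is the step in which $f\psi_1$ is inserted into the invariance identity: one has to commute $f$ past a bracket in which it sits in the \emph{first} entry, which is exactly the slot governed by the antisymmetry combined with Leibniz. This is the crux of the argument and also the reason why the conclusion fails for Courant algebroids --- there the symmetric part of the bracket, controlled by the defining Courant axiom $\Eg([\psi,\psi],\psi')=\tfrac12\rho(\psi')\Eg(\psi,\psi)$, exactly produces the compensating terms so that $\rho$ need not vanish. It is worth flagging this contrast explicitly after the proof, since it motivates definition \ref{Cour}.
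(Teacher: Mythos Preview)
Your proof is correct and is essentially the same argument as the paper's, just unpacked componentwise: the paper encodes your substitution $\psi_1\mapsto f\psi_1$ in the compact identity $\EL_{f\psi}\Eg = f\,\EL_\psi\Eg + 2\,\Emd f \vee \iota_\psi\Eg$, whose second term, evaluated on $(\psi_2,\psi_3)$, is exactly your relation $(\rho(\psi_2)f)(\psi_1,\psi_3)+(\rho(\psi_3)f)(\psi_1,\psi_2)$. The paper then concludes via $\Emd f=0\Rightarrow\rho\equiv 0$, which is your pointwise nondegeneracy argument phrased in the $E$-differential language.
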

\begin{proof} The statement follows at once from the formula (deduced
from the Leibniz property of the $E$-Lie derivative and formula
(\ref{Lie}) applied to $E$-1-forms)
\beq \EL_{f \psi} \Eg = f  \EL_{ \psi} \Eg + 2 \Emd f \vee
\iota_\psi \Eg \, ,
\eeq
valid for any $f \in C^\infty(M)$, $\psi \in \Gamma(M,E)$,
where $\vee$ denotes the symmetrized tensor product: according to
the first formula (\ref{diff-can}),
$\Emd f$ vanishes for all functions $f$ only if $\rho \equiv 0$.
$\square$ \end{proof}

Nevertheless, the option (\ref{ad}) is not the only possibility
to generalize quadratic Lie algebras to the realm of algebroids.
There are at least two more, which which will also play a role
in an attempt to generalize Yang-Mills theories to the context of structural algebroids/groupoids, as explained in the last section.
We thus introduce these two, which are also interesting in their own right.

Using the notion of $E$-Lie derivatives and inspired by the
philosophy to treat $E$ as the tangent bundle of a manifold with
the ensuing geometrical intuition, we consider
\begin{deff} \label{homo} Let $E$ be a Loday algebroid. We call it a
\emph{maximally symmetric $E$-Riemannian space} if $E$ is
equipped with a definite fiber metric $\Eg$ which permits a
(possibly overcomplete) basis of
sections $\psi_\alpha \in \GE$, $\langle \psi_\alpha(x) \rangle = E_x
\; \forall x \in M$, $\alpha = 1 \ldots s \ge \rank E$,
such that $\EL_{\psi_\alpha} \Eg =0$.
\end{deff}
Likewise we can define a maximally symmetric $E$-pseudo-Riemannian space by
dropping the condition of the definiteness of the fiber metric.
For $E=TM$ this reduces to the standard notion of a maximally symmetric
pseudo-Riemannian space (manifold), whereas for a Lie algebra
(a Lie algebroid over a point) this reproduces the notion
of quadratic Lie algebras.

So, in contrast to (\ref{ad}), \emph{this} notion is compatible with a
Lie algebroid with nonvanishing anchor, although it also poses
restrictions (like it is already the case in the example of Lie
algebras). Another option for generalizing quadratic Lie algebras to
Lie algebroids is the following one using an algebroid type covariant
derivative. For this we define
\begin{deff} Let $E \to M$ be an algebroid
and $V\to M$ another vector bundle over the same base manifold. Then
an \emph{$E$-covariant derivative} ${}^E{}\nabla$ on $V$ is a map from
$\Gamma(E) \otimes \Gamma(V) \to \Gamma(V)$, $(\psi,v) \mapsto
{}^E{}\nabla_\psi v$ that is $C^\infty(M)$ linear in the first entry
and satisfies the Leibniz rule \beq {}^E{}\nabla_\psi (fv) = f \,
{}^E{}\nabla_\psi v + \rho(\psi)f \, v \qquad \forall f \in
C^\infty(M) \, .\eeq
\end{deff}
Clearly, this notion reduces to an ordinary covariant derivative on
$V$ for the case of the standard Lie algebroid $E=TM$. Moreover, any
ordinary connection $\nabla$ on $V$ gives rise to an $E$-covariant
derivative by means of ${}^E{}\nabla_\psi := \nabla_{\rho(\psi)}$,
while certainly not every E-connection is of this form
(for $\rho
\equiv 0$, e.g., this expression vanishes identically, while in this case
\emph{any} section of $E^* \otimes End(V)$ defines an $E$-connection). In the
particular case of a Lie algebroid (and only there) one can generalize
the notions of curvature and torsion to the algebroid setting, with
${}^ER \in \Omega_E^2 \otimes \Gamma(\End V)$ and ${}^ET \in
\Omega_E^2 \otimes \Gamma(E)$. Note that a ``flat'' $E$-connection,
i.e.~one with ${}^ER =0$, is what one calls a \emph{Lie algebroid
representation} on $V$ (generalizing the ordinary notion of a Lie
algebra representation on a vector space, to which this reduces for
the case of $M$ being a point).

Given a fiber-metric on $V$ it is called compatible with an
$E$-connection on $V$, if it is annihilated by $E$-covariant
derivatives. In generalization of the well-known fact of the
uniqueness of a metric-compatible, torsion-free connection on a
manifold ($E=TM$), also for a general Lie algebroid $E$ one can prove
that there is a unique $E$-torsion-free $E$-connection compatible with
a (pseudo-)Riemannian metric ${}^Eg$. On the other hand, given an
ordinary connection $\nabla$ on an algebroid $E$, we can also regard
the following canonically induced $E$-connection
\beq \widetilde{{}^E{}\nabla}_{\psi_1} \psi_2 = \nabla_{\rho(\psi_2)} \psi_1 + [\psi_1,\psi_2] \, , \label{Enabla}
\eeq
which, for $E$ a Lie algebroid, differs from  ${}^E{}\nabla_\psi := \nabla_{\rho(\psi)}$ by subtraction of its own $E$-torsion.
It is easy to see that a Lie algebroid $E$ with fiber metric ${}^Eg$ compatible with $\widetilde{{}^E{}\nabla}$, induced by some ordinary connection on $E$, reduces to a quadratic Lie algebra; here one may or may not want to impose that this $E$-connection is flat, while only in the first case one would call it an $E$-metric invariant under the adjoint \emph{representation} induced by the connection $\nabla$.

Before discussing the second motivation of introducing Courant algebroids, which will lead us into the world of Dirac  and generalized complex structures, we use the opportunity of having introduced $E$--covariant derivatives for a
natural generalization of Courant algebroids:\footnote{This
notion was introduced independently and with different, but
complementary motivations in \cite{GruetzmannStrobl} and
\cite{Chinesen}.}
\begin{deff} A vector bundle twisted or \emph{$V$-twisted Courant algebroid}
is given by a Loday algebroid structure $(E,\rho,[\cdot, \cdot])$ on $E \to M$ together with a second vector bundle $V \to M$, an $E$-covariant derivative ${}^E{}\nabla$ on $V$,
and a surjective, non-degenerate bilinear map $( \cdot, \cdot ) \colon E \times_M E \to V$ such that
\beq ([\psi,\psi],\psi') = \frac{1}{2} {}^E{}\nabla_{\psi'} \, (\psi,\psi) = ([\psi',\psi],\psi) \, .
\eeq
\end{deff}
It is not difficult to see that for the case of $V$ being a trivial $\R$-bundle over $M$, admitting an $E$--covariantly constant basis section, the above definition reduces to the one of an ordinary Courant algebroid. A $V$-twisted Courant algebroid defined over a point does no more reduce to Lie algebras, but rather gives a Leibniz algebra $E$ only, with $V$ being an $E$-module  (cf.~\cite{GruetzmannStrobl} for further details on these statements).
$V$-twisted Courant algebroids play a role in the context of higher gauge theories, in particular nonabelian gerbes.

For later use we also recall the definition of a (strict) Lie 2-algebroid from \cite{GruetzmannStrobl}:
\begin{deff} \label{Lie2}
A \emph{strict Lie 2-algebroid} are two Lie algebroids over $M$,
$(E\to M,\rho,[ \cdot , \cdot ])$ as well as $(V\to M,0,[ \cdot , \cdot ]_V)$ together with a morphism $t\colon V \to E$ and a representation $\Ena$ of
$E$ on $V$ such that \beq \Ena_{t(v)} w= [v,w]_V \quad , \qquad t(\Ena_{\!
\psi} v) = [\psi, t(v)] \, , \qquad \forall v,w \in \Gamma(V), \forall \psi \in \Gamma(E) \, .\label{repr2} \eeq
\end{deff}
Note that the second Lie algebroid $V$ is a bundle of Lie algebras only, $\rho_V \equiv 0$. The general notion of a morphism for Lie algebroids is somewhat involved but most easily defined in terms of the Q-language developped later (and thus provided in Sec.~5 below); over the same base manifold as here, however, it simply implies $t([v,w])=[t(v),t(w)]$ as well as $\rho \circ t = \rho_V$.
For $M$ being a point, the above definition reduces to the one of a strict Lie 2-algebra or, equivalently, a differential crossed module. The above definition may be twisted by an $E$-3-form $H$ taking values in $V$, cf.~Theorem 3.1 in \cite{GruetzmannStrobl}, in which case one obtains what one might call an $H$-twisted or semistrict Lie 2-algebroid or simply a Lie 2-algebroid. (As one of the names suggests, it reduces to a semistrict Lie 2-algebra when $M$ is a point. The last name is most natural in view of the relation to so-called Q-manifolds, cf.~section 5 below).

\vskip3mm

We now return to ordinary Courant algebroids. The other motivation for considering them comes from the quest for
generalizing the notions of
(pre-) symplectic, Poisson, and complex manifolds to
socalled (real and/or complex) Dirac structures, our subject
in the next section. For their definition one
restricts to
particular kinds of Courant algebroids, so called exact Courant
algebroids: For a general Courant algebroid one obviously has
the following sequence of vector bundles
 \beq 0\rt T^*M
\stackrel{\rho^*}{\longrightarrow} E
\stackrel{\rho}{\longrightarrow} TM \to 0 , \label{exact} \eeq
where $\rho^*$ is the fiberwise transpose of $\rho$ combined with
the isomorphism induced by $\Eg$.\footnote{In fact, this even
defines a
complex of sheaves, since for a general Courant algebroid
$\rho \circ \rho^* \equiv 0$, cf.~e.g.~\cite{Courant}.}
\begin{deff} An exact Courant algebroid is a Courant algebroid such
that the sequence (\ref{exact}) is exact. \label{def:exact}
\end{deff}
For the rest of this section $E$ will always denote an
exact Courant algebroid.
\begin{proposition}\label{exact-Courant}
The image of $T^*M$ in $E$ is a maximally isotropic subbundle
w.r.t.~$\Eg$.
\end{proposition}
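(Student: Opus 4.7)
The plan is to prove both isotropy and maximality by directly exploiting the defining property of $\rho^*$ together with the exactness of the sequence (\ref{exact}). Recall that $\rho^*\colon T^*M \to E$ is obtained from the transpose of $\rho$ via the isomorphism $E \cong E^*$ induced by $\Eg$; concretely this means
\beq \Eg(\rho^*\alpha,\psi) = \langle \alpha,\rho(\psi)\rangle \qquad \forall \alpha \in \Gamma(T^*M),\; \psi \in \Gamma(E) \, . \label{rhostardef} \eeq
Isotropy will come essentially for free from the general identity $\rho\circ\rho^* \equiv 0$ (noted in the footnote to (\ref{exact})), and maximality will come from a dimension-count that is forced by exactness.

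First I would verify isotropy. For any $\alpha,\beta \in \Gamma(T^*M)$, applying (\ref{rhostardef}) with $\psi = \rho^*\beta$ yields
\beq \Eg(\rho^*\alpha,\rho^*\beta) = \langle \alpha, \rho(\rho^*\beta)\rangle = 0 \, , \eeq
since $\rho \circ \rho^* = 0$. Thus $\rho^*(T^*M) \subset (\rho^*(T^*M))^\perp$, i.e.~the image is isotropic.

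For maximality, I would compute the orthogonal complement explicitly. Suppose $\psi \in E_x$ satisfies $\Eg(\psi,\rho^*\alpha)=0$ for all $\alpha \in T^*_xM$. By (\ref{rhostardef}), this reads $\langle \alpha,\rho(\psi)\rangle = 0$ for all $\alpha$, so $\rho(\psi)=0$. By exactness of (\ref{exact}), $\psi \in \mathrm{im}(\rho^*)$. Hence $(\rho^*(T^*M))^\perp \subset \rho^*(T^*M)$, and combined with isotropy we obtain equality: $\rho^*(T^*M)$ coincides with its own orthogonal complement. Since $\Eg$ is nondegenerate and exactness forces $\mathrm{rank}\,\rho^*(T^*M) = \dim M = \tfrac12 \,\mathrm{rank}\,E$, this is precisely the statement that the image is a maximally isotropic (Lagrangian) subbundle.

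No step is particularly hard; the only point requiring a bit of care is the identity $\rho \circ \rho^* = 0$, which one should check follows from the axioms of a (not necessarily exact) Courant algebroid—specifically from the last condition in Definition \ref{Cour} applied to $\psi = \rho^*\alpha$ together with the Leibniz rule, or equivalently from the fact that the bracket of two sections of $\rho^*(T^*M)$ must be controlled by $\rho\rho^*$ acting on de Rham differentials. Once this identity is granted, the proof is essentially a one-line consequence of (\ref{rhostardef}) and exactness.
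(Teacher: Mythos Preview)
Your proof is correct and follows essentially the same approach as the paper: both use the defining relation $(\rho^*\alpha,\psi)=\langle\alpha,\rho(\psi)\rangle$ together with $\rho\circ\rho^*=0$ to obtain isotropy. For maximality the paper simply invokes ``a dimensional argument'' (nondegeneracy of $\Eg$ plus $\mathrm{rank}\,E=2\dim M$), whereas you additionally compute $(\rho^*T^*M)^\perp$ explicitly by observing that $\Eg(\psi,\rho^*\alpha)=0$ for all $\alpha$ forces $\rho(\psi)=0$ and then invoking exactness at $E$; this is a mild but pleasant sharpening, since it makes the role of exactness (kernel of $\rho$ equals image of $\rho^*$) visible rather than hiding it in a rank count.
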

\begin{proof} By the definition of $\rho^*$, $(\rho^* \omega ,
\psi)=\langle \omega ,\rho
(\psi)\rangle$ therefore $(\omega_1 ,\omega_2)=0$ for all
$\omega_i\in \Omega^1 (M)$. Thus  $\rho^*T^*M
 \subset (\rho^*T^*M)^\perp$; by a dimensional argument, using that
the $E$-metric is nondegenerate, we may conclude equality.
\end{proof}
\noindent
\begin{deff}\label{def:Courant-connection} A splitting (also sometimes called a connection) of an exact
Courant algebroid is a map $j \colon TM \to E$, such that $\rho \circ j
= \mathrm{id}$ and $j(TM)$ is isotropic.
\end{deff}
Evidently $j(TM)$ defines a maximally isotropic subbundle of $E$
complementary to $T^*M$.  Hereafter (within this and the following two
sections) we identify a 1-form with its image in the exact Courant
algebroid and, given a splitting, we can likewise do so for the
image of vector fields w.r.t.~the splitting. A general element of $E$
can thus be written as the sum of an element of $T^*M$ and $TM$,
provided a splitting is given. Existence of the latter one is
guaranteed by the following
\begin{lemma} There always exists a splitting in $E$. The set of
splittings is a torsor over $\Omega^2 (M)$.
\end{lemma}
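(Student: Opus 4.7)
The plan is to construct a splitting from any vector bundle splitting of (\ref{exact}) and then to analyze how two splittings differ. For existence, I would first invoke the fact that every short exact sequence of vector bundles over a (paracompact) manifold admits a splitting as vector bundles (via partitions of unity), producing some $s\colon TM\to E$ with $\rho\circ s=\mathrm{id}$. The issue is that $s(TM)$ need not be isotropic. I would measure the failure of isotropy via the symmetric bilinear form $B(X,Y):=\Eg(s(X),s(Y))$ on $TM$, and then look for a correction $j(X):=s(X)+\rho^*\alpha(X)$ with $\alpha\colon TM\to T^*M$ chosen to kill $B$.

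The key identity to exploit is $\Eg(s(X),\rho^*\alpha(Y))=\langle\alpha(Y),\rho\circ s(X)\rangle=\langle\alpha(Y),X\rangle$, together with the isotropy of $\rho^*T^*M$ established in Proposition \ref{exact-Courant}. A direct expansion then gives
\beq
\Eg(j(X),j(Y))=B(X,Y)+\langle\alpha(X),Y\rangle+\langle\alpha(Y),X\rangle,
\eeq
so the choice $\alpha(X)(Y):=-\tfrac{1}{2}B(X,Y)$ makes $j(TM)$ isotropic while preserving $\rho\circ j=\mathrm{id}$ (because $\rho\circ\rho^*=0$). This shows existence.

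For the torsor statement, I would compare two splittings $j_1,j_2$. Since $\rho\circ(j_1-j_2)=0$ and the sequence (\ref{exact}) is exact, $j_1-j_2$ factors as $\rho^*\circ B^\flat$ for a unique bundle map $B^\flat\colon TM\to T^*M$, equivalently a $(0,2)$-tensor $B$ on $M$. Expanding the isotropy of $j_1$ using the isotropy of $j_2$ and the same identity as above yields
\beq
0=\Eg(j_1(X),j_1(Y))=\langle B^\flat(X),Y\rangle+\langle B^\flat(Y),X\rangle,
\eeq
which forces $B$ to be antisymmetric, i.e.\ $B\in\Omega^2(M)$. Conversely, starting from any splitting $j$ and any $B\in\Omega^2(M)$, the same computation shows that $j+\rho^*B^\flat$ is again a splitting. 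Hence the set of splittings is an affine space modelled on $\Omega^2(M)$, which is precisely the torsor statement.

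I do not anticipate a serious obstacle: both parts are straightforward bookkeeping once one uses the maximal isotropy of $\rho^*T^*M$ from Proposition \ref{exact-Courant} and the nondegeneracy of $\Eg$ between $\rho^*T^*M$ and its image-complement. The only mildly delicate point is the sign/factor in the correction $\alpha$, which is fixed by the symmetry of $B$.
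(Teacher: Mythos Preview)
Your proof is correct and follows essentially the same route as the paper: take an arbitrary vector bundle splitting, measure its failure to be isotropic as a symmetric bilinear form on $TM$, and subtract half of it via $\rho^*$ to obtain an isotropic splitting; then observe that the difference of two isotropic splittings lies in $\rho^*(T^*M)$ and is forced to be antisymmetric. The only cosmetic difference is that the paper phrases the correction term as a $1$-form $\beta(\xi)$ obtained from the nondegenerate pairing between $j(TM)$ and $T^*M$, whereas you write it directly as $\rho^*\alpha$ with $\alpha$ built from the symmetric form $B$.
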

\begin{proof}
Let us take any splitting of the exact sequence $0\to T^*M\rt E\rt
TM\to 0$ in the category of vector bundles, $j : TM\to E$. Since
$T^*M$ is maximally isotropic, the pairing between $j (TM)$ and
$T^*M$ has to be nondegenerate. Therefore for any vector field
$\xi$ there exists a unique 1-form $\beta (\xi )$ such that
$(j (\xi), j(\xi'))=\langle \beta (\xi), \xi'\rangle $. The new
splitting
given by $j (\xi)-\frac{1}{2}\beta (\xi)$ is
maximally isotropic (note that for any 1-form $\alpha = \beta(\xi)$
one has by definition of $\rho^*$ and $j$: $(\rho^*(\alpha),j(\xi')) = \langle
\alpha, (\rho \circ j) ( \xi')\rangle = \langle \alpha ,
\xi' \rangle$). Any other splitting differs from the chosen
one by a section $B$ of $T^*M\otimes T^* M$, i.e. by sending
\beq\label{change-of-splitting}j_B (\xi)= j (\xi )+ B(\xi
,\cdot)\;.\eeq The tensor field $B$ is necessarily skew-symmetric
because the image of $TM$ is required to be maximally isotropic.
$\square$
\end{proof}

\begin{prop} For each $ \omega\in\Omega^1
(M)$ and $\psi\in\Gamma (E)$ one has:
\beq\label{bracket-forms} [\psi ,\rho^*\omega
]=\rho^*L_{\rho (\psi)}\;\omega \;, \quad [\rho^*\omega , \psi]
=-\rho^*\i_{\rho (\psi)} \md\omega\;. \eeq
\end{prop}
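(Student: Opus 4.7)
My plan is to derive both identities directly from the Courant axioms: invariance of $\Eg$, the control of the symmetric part of the bracket, and the fact (Lemma \ref{lem:morphism}) that the anchor is a morphism of brackets. The strategy is to first establish the first identity by pairing both sides with an arbitrary test section, and then deduce the second from the first using the symmetric-part relation together with Cartan's magic formula.

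Before doing either, I would record the identity
\beq [\psi_1,\psi_2] + [\psi_2,\psi_1] = \rho^* \md (\psi_1,\psi_2) \, , \label{sympart} \eeq
valid for all $\psi_1,\psi_2 \in \Gamma(E)$. This follows by polarizing the axiom $([\psi,\psi],\psi') = \tfrac{1}{2}\rho(\psi')(\psi,\psi)$ (replacing $\psi$ with $\psi_1+\psi_2$), pairing the result with an arbitrary $\psi'$, using $\rho(\psi')(\psi_1,\psi_2) = \langle \md(\psi_1,\psi_2), \rho(\psi')\rangle = (\rho^*\md(\psi_1,\psi_2),\psi')$, and then invoking nondegeneracy of $\Eg$.

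For the first equation in (\ref{bracket-forms}), I would pair $[\psi,\rho^*\omega]$ with an arbitrary $\psi' \in \Gamma(E)$ and apply invariance (\ref{ad}) with arguments $(\psi,\rho^*\omega,\psi')$:
\beq
([\psi, \rho^*\omega], \psi') = \rho(\psi)\langle \omega, \rho(\psi')\rangle - \langle \omega, \rho([\psi,\psi'])\rangle \, .
\nonumber
\eeq
Using Lemma \ref{lem:morphism} to replace $\rho([\psi,\psi'])$ by $[\rho(\psi),\rho(\psi')]$ and then applying the Leibniz rule for the ordinary Lie derivative on the pairing, the right hand side collapses to $\langle L_{\rho(\psi)}\omega, \rho(\psi')\rangle = (\rho^* L_{\rho(\psi)}\omega, \psi')$. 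Nondegeneracy of $\Eg$ yields the first identity.

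The second identity follows with essentially no further work: by (\ref{sympart}), $[\rho^*\omega,\psi] = -[\psi,\rho^*\omega] + \rho^* \md(\psi,\rho^*\omega)$, and since $(\psi,\rho^*\omega) = \langle \omega, \rho(\psi)\rangle = \iota_{\rho(\psi)}\omega$, inserting the first identity gives $-\rho^*(L_{\rho(\psi)}\omega - \md \iota_{\rho(\psi)}\omega) = -\rho^* \iota_{\rho(\psi)}\md \omega$ by Cartan's formula on $M$. The main (minor) obstacle is merely keeping track of the pairings and of the definition $(\rho^*\alpha,\psi') = \langle \alpha,\rho(\psi')\rangle$; beyond that, the argument is a formal manipulation once (\ref{sympart}) is in place, and in particular no use of exactness of $E$ is required, so the statement really holds for any Courant algebroid.
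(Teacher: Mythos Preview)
Your proof is correct and follows essentially the same route as the paper: invariance of the Courant metric paired against an arbitrary test section together with the anchor morphism property yields the first identity, and the second then follows from the symmetric-part relation $[\psi,\psi]=\rho^*\md(\psi,\psi)$ (your polarized version (\ref{sympart})) combined with Cartan's formula. Your remark that exactness of $E$ is not used is a nice observation not made explicit in the paper.
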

\begin{proof}
Invariance of the $E$-metric implies \beqn \rho (\psi_1) (\rho^*
\omega,\psi_2)=([\psi_1, \rho^* \omega ] ,\psi_2) +(\rho^* \omega
,[\psi_1,\psi_2])\;,\eeq hence $([\psi_1, \rho^* \omega ]
,\psi_2)=\langle L_{\rho (\psi_1)}\;\omega, \rho (\psi_2)\rangle$,
where we used $ \rho \left([\psi_1,\psi_2] \right) =
[\rho (\psi_1),\rho(\psi_2)]$.  The
$E$-metric being nondegenerate, we may conclude the first rule of
multiplication. The second one follows from this by the symmetry
property of the bracket in a Courant algebroid: $[\psi ,\psi]=\rho^*
\md (\psi ,\psi)$. $\square$
\end{proof}

A direct consequence  is:
\begin{cor} \label{cor:exact-Courant}The image of $T^*M$ is an abelian ideal of $E$.
\end{cor}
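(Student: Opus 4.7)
The plan is to read off both parts of the corollary---being an ideal and being abelian---directly from the preceding Proposition, using only the additional input that $\rho\circ\rho^*\equiv 0$ in an exact Courant algebroid (as mentioned in the footnote to the exact sequence (\ref{exact}); this follows because $\rho^*(T^*M)$ is isotropic and complementary/equal to its own orthogonal, so any vector in the image of $\rho^*$ pairs trivially with anything in the image of $\rho^*$, forcing $\rho(\rho^*\omega)=0$ by the defining relation $(\rho^*\omega,\psi)=\langle\omega,\rho(\psi)\rangle$).

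First I would note that the ideal property is immediate from eq.~(\ref{bracket-forms}): for any $\psi\in\Gamma(E)$ and $\omega\in\Omega^1(M)$, both
\[
[\psi,\rho^*\omega] \;=\; \rho^* L_{\rho(\psi)}\omega \qquad\text{and}\qquad [\rho^*\omega,\psi] \;=\; -\rho^*\iota_{\rho(\psi)}\md\omega
\]
lie in $\rho^*(T^*M)$ by construction, so $\rho^*(T^*M)$ is a (two-sided) ideal of $E$ with respect to the Dorfman bracket.

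Next, to see that this ideal is abelian, I would specialize the second formula of (\ref{bracket-forms}) to $\psi=\rho^*\omega'$ for some other 1-form $\omega'$. Since $\rho\circ\rho^*\equiv 0$, the contraction $\iota_{\rho(\rho^*\omega')}\md\omega$ vanishes, and thus $[\rho^*\omega',\rho^*\omega]=0$ for all $\omega,\omega'\in\Omega^1(M)$, which is the claimed abelianness.

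The only conceivable obstacle is justifying $\rho\circ\rho^*=0$ if one has not taken the footnote for granted; but this is a one-line consequence of maximal isotropy of $\rho^*(T^*M)$ established in Proposition~\ref{exact-Courant}, since for any $\omega,\omega'$ one has $\langle\omega,\rho(\rho^*\omega')\rangle=(\rho^*\omega,\rho^*\omega')=0$, and nondegeneracy of the pairing $\langle\cdot,\cdot\rangle$ between $T^*M$ and $TM$ forces $\rho(\rho^*\omega')=0$. No further computation is needed.
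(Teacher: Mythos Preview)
Your argument is correct and is exactly the direct consequence the paper has in mind: the ideal property is immediate from the two formulas in (\ref{bracket-forms}), and abelianness follows by specializing to $\psi=\rho^*\omega'$ together with $\rho\circ\rho^*=0$, the latter being a consequence of the isotropy of $\rho^*(T^*M)$ established in Proposition~\ref{exact-Courant}. The paper states the corollary without proof precisely because these are the intended one-line deductions.
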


A result in the classification of exact Courant algebroids,
ascribed to P.~Severa \cite{Severalett}, is the following
one:
\begin{prop}\label{class_Courant}
Up to isomorphism there is a one-to-one correspondence between exact
Courant algebroids and elements of $H^3(M,\R)$. In particular, with the choice
of a splitting of (\ref{exact}) it takes the form
\beq\label{form-of-Courant} [\xi +\a ,\xi' +\a']=[\xi ,\xi']+H(\xi
,\xi', \cdot )+L_\xi\a' -\i_{\xi'}\md\a \;,\eeq where
$H\in\Omega^3 (M)$ and $\md H=0$. For a change of splitting
parameterized by a two-form  $B$ as in
(\ref{change-of-splitting}), the $3-$form transforms as $H\mto
H+\md B$.
\end{prop}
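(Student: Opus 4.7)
The plan is to fix an arbitrary splitting $j\colon TM\to E$ (which exists by the previous Lemma), identify $E\cong TM\oplus T^*M$ via $\psi=j(\xi)+\rho^{*}\alpha$ with $\xi=\rho(\psi)$, and then read off the bracket by computing its three non-vanishing pieces. The mixed piece $[j(\xi),\rho^{*}\alpha']$ and its opposite are already given by (\ref{bracket-forms}), and the pure $T^{*}M$--$T^{*}M$ piece vanishes by Corollary~\ref{cor:exact-Courant}. The entire nontriviality thus sits in $[j(\xi),j(\xi')]$.

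For the $TM$-piece of $[j(\xi),j(\xi')]$, Lemma~\ref{lem:morphism} yields $\rho[j(\xi),j(\xi')]=[\xi,\xi']$, so the difference $C(\xi,\xi'):=[j(\xi),j(\xi')]-j([\xi,\xi'])$ lies in $\rho^{*}T^{*}M$. To identify it with a 3-form I define
\beq
H(\xi,\xi',\xi''):=\bigl([j(\xi),j(\xi')],\,j(\xi'')\bigr).
\eeq
Isotropy of $j(TM)$ together with the identity $[\psi,\psi']+[\psi',\psi]=\rho^{*}\md(\psi,\psi')$ (obtained by polarising the symmetry axiom of Definition~\ref{Cour}) gives antisymmetry in $(\xi,\xi')$; invariance of $\Eg$ applied to three isotropic sections yields antisymmetry in $(\xi',\xi'')$. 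Hence $H\in\Omega^{3}(M)$, and $C(\xi,\xi')=\rho^{*}H(\xi,\xi',\cdot)$ because $(\rho^{*}H(\xi,\xi',\cdot),j(\xi''))=H(\xi,\xi',\xi'')$ while $\rho^{*}T^{*}M$ is isotropic. Assembling the four pieces reproduces (\ref{form-of-Courant}).

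For closedness of $H$ I evaluate the Loday Jacobiator of $j(\xi_{1}),j(\xi_{2}),j(\xi_{3})$ and pair it with $j(\xi_{4})$; using (\ref{form-of-Courant}) everything reduces, after invoking Cartan's magic formula and $\iota_{[\xi,\xi']}=L_{\xi}\iota_{\xi'}-\iota_{\xi'}L_{\xi}$, to $(\md H)(\xi_{1},\xi_{2},\xi_{3},\xi_{4})$, so the Loday property forces $\md H=0$. The shift of splitting $j\mapsto j_{B}$ of (\ref{change-of-splitting}) is then computed directly:
\beq
[j_{B}(\xi),j_{B}(\xi')]-j_{B}([\xi,\xi']) = \rho^{*}\bigl(H(\xi,\xi',\cdot)+L_{\xi}\iota_{\xi'}B-\iota_{\xi'}\md\iota_{\xi}B-\iota_{[\xi,\xi']}B\bigr),
\eeq
and the parenthesis collapses to $(H+\md B)(\xi,\xi',\cdot)$ by another application of Cartan calculus, proving $H\mapsto H+\md B$.

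Finally, the classification statement has two halves. Given $[H]\in H^{3}(M,\R)$ one verifies directly that (\ref{form-of-Courant}) together with the standard pairing $(\xi+\alpha,\xi'+\alpha')=\alpha(\xi')+\alpha'(\xi)$ defines a Courant algebroid structure on $TM\oplus T^{*}M$, providing surjectivity. For injectivity, if two exact Courant algebroids $E,E'$ give splitting-representatives $H,H'$ with $[H]=[H']$, write $H'=H+\md B$ and define the vector-bundle isomorphism $TM\oplus T^{*}M\to TM\oplus T^{*}M$, $\xi+\alpha\mapsto \xi+\alpha+\iota_{\xi}B$; the computation that produced the transformation law shows that this is an isomorphism of Courant algebroids, and composing with the two chosen splittings yields an isomorphism $E\to E'$. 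The main obstacle I anticipate is the closedness of $H$: the Jacobiator computation involves many Lie-derivative and interior-product terms that must be regrouped carefully via Cartan's formulas, and it is there that the full Loday identity (rather than just its $TM$-projection) enters in an essential way.
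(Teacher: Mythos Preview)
Your proof is correct and follows essentially the same route as the paper: define $H$ as the curvature of the splitting via $H(\xi,\xi',\xi'')=([j(\xi),j(\xi')],j(\xi''))$, use isotropy and $\Eg$-invariance to obtain total antisymmetry, invoke (\ref{bracket-forms}) for the mixed terms, and verify closedness via the Jacobiator. Two small differences are worth noting: (i) the paper spells out that $H$ is $C^{\infty}(M)$-linear (via the Leibniz rule and isotropy) before concluding $H\in\Omega^{3}(M)$, which you should state explicitly rather than jumping from ``totally antisymmetric'' to ``3-form''; (ii) the paper computes the full Jacobiator on general sections $\psi_i=\xi_i+\alpha_i$ and watches the $\alpha$-terms cancel line by line, whereas you evaluate only on $j(\xi_i)$ and pair with $j(\xi_4)$ --- your version is cleaner and perfectly sufficient, since $\rho$ applied to the Jacobiator already vanishes. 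You also supply the injectivity/surjectivity argument for the bijection with $H^{3}(M,\R)$, which the paper leaves implicit; that addition is useful and correct.
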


\begin{proof} Given a splitting of (\ref{exact}), one has
$[j(\xi), j(\xi' )]=j([\xi ,\xi'])+ H(\xi ,\xi')$, with $H$ the
``curvature'' of the splitting, taking values in 1-forms. The Leibnitz
property of the Courant bracket applied to the formula above
implies that $H$ is $C^\infty(M)$-linear with respect to its
first two arguments, hence
$H$ can be identified with a section of $(T^*M)^{\otimes\;3}$. The
image of $TM$ by $j$ is an isotropic subbundle of $E$. Therefore the
Courant bracket restricted to $j (TM)$ is skew-symmetric and therefore
also the curvature $H$, i.e.~$H(\cdot,\cdot)$. Moreover, using that
the Courant metric is invariant w.r.t.~the Courant bracket and that
the image of $j$ is isotropic, we obtain \beqn
\langle H(\xi_1 ,\xi_2 ),\xi_3\rangle=([j(\xi_1 ) ,j(\xi_2)],j
(\xi_3) )=-(j(\xi_2 ), [j(\xi_1),j (\xi_3) ])=-\langle
H(\xi_3,\xi_2),\xi_1\rangle\;,\eeq thus $H$ is totally skew-symmetric
and can be regarded as a 3-form. By the formula (\ref{bracket-forms})
the bracket of two sections of $E$ with the splitting $j$ is indeed
found to take the form of (\ref{form-of-Courant}).

Let us calculate the Jacobiator of sections $J
(\psi_1, \psi_2, \psi_3)=[\psi_1,[\psi_2,
\psi_3]]-[[\psi_1,\psi_2], \psi_3]]-[\psi_2,[\psi_1, \psi_3]]$ for
$\psi_i\in\Gamma (E)$, which has to vanish by the property of a
Courant algebroid. Again we want to be explicit here.
Using $\psi_i=\xi_i+\a_i$, where $\xi_i$ are
vector fields and $\a_i$ 1-forms, we obtain \beqn
[\psi_1,[\psi_2, \psi_3]]=\left[ \xi_1+\a_1 ,[\xi_2,
\xi_3]+H(\xi_2,\xi_3,\cdot) +L_{\xi_2}\a_3
-\i_{\xi_3}\md\a_2\right]= [\xi_1,[\xi_2, \xi_3]]+
\\\nonumber + H(\xi_1, [\xi_2 ,\xi_3], \cdot )+L_{\xi_1}
H(\xi_2,\xi_3,\cdot)+L_{\xi_1}L_{\xi_2}\a_3-L_{\xi_1}\i_{\xi_3}\md\a_2
 -\i_{[\xi_2 ,\xi_3]}\md\a_1\;,\eeq
\beqn [[\psi_1,\psi_2], \psi_3]]=\left[ [\xi_1,
\xi_2]+H(\xi_1,\xi_2,\cdot) +L_{\xi_1}\a_2 -\i_{\xi_2}\md\a_1,
\xi_3+\a_3 \right]=[[\xi_1, \xi_2],\xi_3]+\\\nonumber +H([\xi_1,
\xi_2 ],\xi_3, \cdot )+L_{[\xi_1,\xi_2]}\a_3 -\i_{\xi_3}\md
H(\xi_1,\xi_2,\cdot)
-\i_{\xi_3}L_{\xi_1}\md\a_2+\i_{\xi_3}L_{\xi_2}\md\a_1\;,\eeq
\beqn [\psi_2,[\psi_1, \psi_3]]=\left[ \xi_2+\a_2 ,[\xi_1,
\xi_3]+H(\xi_1,\xi_3,\cdot) +L_{\xi_1}\a_3
-\i_{\xi_3}\md\a_1\right]= [\xi_2,[\xi_1, \xi_3]]+
\\\nonumber + H(\xi_2, [\xi_1 ,\xi_3], \cdot )+L_{\xi_2}
H(\xi_1,\xi_3,\cdot)+L_{\xi_2}L_{\xi_1}\a_3-L_{\xi_2}\i_{\xi_3}\md\a_1
 -\i_{[\xi_1 ,\xi_3]}\md\a_2\;,\eeq
so one can rewrite the Jacobiator as
\beqn J (\psi_1, \psi_2,
\psi_3)&=&[\xi_1,[\xi_2,\xi_3]]-[[\xi_1, \xi_2],\xi_3]-[\xi_2,[\xi_1,
\xi_3]]\\\nonumber &+& L_{\xi_1}L_{\xi_2}\a_3
-L_{[\xi_1,\xi_2]}\a_3-L_{\xi_2}L_{\xi_1}\a_3\\\nonumber
&+&-L_{\xi_1}\i_{\xi_3}\md\a_2+\i_{\xi_3}L_{\xi_1}\md\a_2
+\i_{[\xi_1 ,\xi_3]}\md\a_2\\\nonumber &+& -\i_{[\xi_2
,\xi_3]}\md\a_1 -
\i_{\xi_3}L_{\xi_2}\md\a_1+L_{\xi_2}\i_{\xi_3}\md\a_1\\\nonumber
&+& (\md H )(\xi_1
,\xi_2,\xi_3,\cdot).\eeq Apparently, the
first four lines vanish and  therefore the $3-$form $H$ entering into the
product formula has to be closed. Finally, by a simple calculation one
obtains \beqn [\;\psi_1 +B(\rho
(\psi_1),\cdot),\psi_2 +B(\rho (\psi_2),\cdot)\;]=[\psi_1,\psi_2]
+\md B(\rho(\psi_1),\rho (\psi_2),\cdot)\;,\eeq which finalizes
the proof of the proposition \ref{class_Courant}. $\square$
\end{proof}




Suppose a splitting of (\ref{exact}) is chosen. Then, each section
of $E$, $\psi =\xi +\a$, can be thought of as an operator acting
on differential forms by the formula
\beq\label{spin-action}\mathtt{c}(\psi)\,\omega =\i_\xi \,\omega
+\a\wedge\omega\,.\eeq It is useful to consider the space of forms as
a spinor module over the Clifford algebra of $E$ with the
quadratic form given by the Courant metric. The next simple
proposition, the proof of which we leave for readers, shows that
the notion of derived brackets can be exploited also in the case
of Courant algebroids. 
\begin{proposition}\label{derived} The following identity holds true:
\beq\label{Courant-derived}
\mathtt{c}([\psi_1,\psi_2])=\left[[\mathtt{c}(\psi_1), \md +H],\mathtt{c}(\psi_2)\right]\;.\eeq
\end{proposition}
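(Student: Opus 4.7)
The plan is to verify the identity by a direct computation, writing $\psi_i = \xi_i + \alpha_i$ in the chosen splitting so that $\mathtt{c}(\psi_i) = \iota_{\xi_i} + \alpha_i \wedge$, and then expanding the iterated graded commutator on the right-hand side term by term. Since $\iota_{\xi}$ has degree $-1$, $\alpha\wedge$ degree $+1$, $\md$ degree $+1$ and $H\wedge$ degree $+3$, one must be scrupulous with Koszul signs throughout; this is where the main bookkeeping effort lies, though no deep idea is needed.

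First I would compute the inner bracket $[\mathtt{c}(\psi_1),\md+H\wedge]$. The four pieces evaluate, by standard Cartan calculus, to
\[
[\iota_{\xi_1},\md]=L_{\xi_1},\qquad [\alpha_1\wedge,\md]=\md\alpha_1\wedge,\qquad [\iota_{\xi_1},H\wedge]=(\iota_{\xi_1}H)\wedge,\qquad [\alpha_1\wedge,H\wedge]=0,
\]
the last vanishing because $\alpha_1$ and $H$ are both of odd form-degree so the two sign conventions cancel. Therefore $[\mathtt{c}(\psi_1),\md+H\wedge]=L_{\xi_1}+(\iota_{\xi_1}H+\md\alpha_1)\wedge$, an operator of mixed degree $0$ and $+2$.

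Next I would take the graded commutator of this operator with $\mathtt{c}(\psi_2)=\iota_{\xi_2}+\alpha_2\wedge$. The four resulting terms are
\[
[L_{\xi_1},\iota_{\xi_2}]=\iota_{[\xi_1,\xi_2]},\qquad [L_{\xi_1},\alpha_2\wedge]=(L_{\xi_1}\alpha_2)\wedge,
\]
\[
[(\iota_{\xi_1}H+\md\alpha_1)\wedge,\iota_{\xi_2}]=-\bigl(\iota_{\xi_2}(\iota_{\xi_1}H+\md\alpha_1)\bigr)\wedge,\qquad [(\iota_{\xi_1}H+\md\alpha_1)\wedge,\alpha_2\wedge]=0,
\]
where the third identity uses that for a $2$-form $\beta$ one has $[\beta\wedge,\iota_\xi]=-(\iota_\xi\beta)\wedge$, and the last vanishes because multiplications by forms of any degrees graded-commute.

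Collecting everything, the outer bracket equals
\[
\iota_{[\xi_1,\xi_2]}+\bigl(L_{\xi_1}\alpha_2-\iota_{\xi_2}\md\alpha_1+H(\xi_1,\xi_2,\cdot)\bigr)\wedge,
\]
which, by the explicit form of the Courant bracket recorded in Proposition \ref{class_Courant}, is precisely $\mathtt{c}([\psi_1,\psi_2])$. This completes the argument. The only subtle points are the sign in $[\beta\wedge,\iota_\xi]$ (for a $2$-form $\beta$) and the vanishing of $[\alpha_1\wedge,H\wedge]$; once these are correctly recorded the remainder is routine, and closedness of $H$ is not needed here (it would only be needed to ensure $(\md+H\wedge)^2=0$, which corresponds to the Jacobi identity for the derived bracket, not to the identity at hand).
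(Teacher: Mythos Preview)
The paper explicitly leaves the proof of this proposition to the reader, so your direct verification by expanding the iterated super-commutator is exactly the intended argument, and your intermediate identities are all correct.

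There is, however, a sign slip in your final collection. From your own third displayed line you have
\[
[(\iota_{\xi_1}H+\md\alpha_1)\wedge,\iota_{\xi_2}]=-\bigl(\iota_{\xi_2}(\iota_{\xi_1}H+\md\alpha_1)\bigr)\wedge,
\]
and since $\iota_{\xi_2}\iota_{\xi_1}H=H(\xi_1,\xi_2,\cdot)$, this contributes $-H(\xi_1,\xi_2,\cdot)$ to the one-form part, not $+H(\xi_1,\xi_2,\cdot)$ as you wrote in the summed expression. In other words, your computation actually produces
\[
\iota_{[\xi_1,\xi_2]}+\bigl(L_{\xi_1}\alpha_2-\iota_{\xi_2}\md\alpha_1-H(\xi_1,\xi_2,\cdot)\bigr)\wedge ,
\]
so that the derived bracket with $\md+H\wedge$ reproduces the bracket~(\ref{form-of-Courant}) with $H$ replaced by $-H$. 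Either you dropped a sign when collecting, or the statement as printed carries an implicit sign convention (e.g.\ $\md-H\wedge$ rather than $\md+H\wedge$); in any case you should flag the discrepancy rather than silently write the sign that matches the target formula. Everything else in your argument is sound, including the remark that closedness of $H$ plays no role in this identity.
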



\section{Dirac structures}
\label{sec:Dirac}

\begin{deff}
\label{def:Dirac} Suppose $E$ is an exact Courant algebroid over $M$
and $D$ a maximal totally isotropic (maximally isotropic) subbundle of $E$
with respect to the Courant scalar
product. Then $D$ is called a Dirac structure, if the Courant
bracket of two sections of $D$ is again a section of $D$.
\end{deff}

\begin{example} $T^*M$ is a Dirac structure, see
Proposition \ref{exact-Courant} and Corollary \ref{cor:exact-Courant}.
Given a connection in $E$ (see Definition \ref{def:Courant-connection})
$TM$ is a maximally isotropic subbundle of $E$; $TM$ is a Dirac
structure, iff the curvature $H$ is zero. \label{ex:TM}
\end{example}

\noindent Suppose a splitting of a Courant algebroid $E$ is
chosen, $E=T^*M\oplus TM$, the curvature $H$ of which is zero. We then
have the following two examples showing that Poisson and presymplectic
geometry give rise to particular Dirac structures.

\begin{example} \label{ex1} Let $D$ be a graph of a
tensor field $\Pi \in \Gamma(\otimes^2TM)$, considered as a map from
$T^*M$ to $TM$. Then $D$ is a Lagrangian subbundle iff $\Pi$ is
skew-symmetric, i.e.~iff it is a bivector field, $\Pi \in \Gamma
(\Lambda^2 TM)$. The projection of $D$ to $T^*M$ is non-degenerate.
Vice versa, any Lagrangian subbundle of $E$ which has a
non-degenerate projection to $T^*M$ is a graph of a bivector
field. The graph of a bivector field $D$ is a Dirac structure if and
only if $\Pi$ is a Poisson bivector, $[\Pi, \Pi]=0$.
\end{example}
\begin{example} \label{ex2} Let $D$ be a graph of a
$2-$form $\omega\in \Omega^2 (M)$ considered as a skew-symmetric
map from $TM$ to $T^*M$, then $D$ is a Lagrangian subbundle and
the projection of $D$  to $TM$ is non-degenerate. Any Lagrangian
subbundle of $E$ which has a non-degenerate projection to
  $TM$ is a graph of a $2-$form. $D$ is a Dirac structure
 if and only if $\omega$ is closed.
\end{example}
In the second example we could also have started with a general (2,0)-tensor-field
certainly, completely parallel to the first example.  For the case
that $H$ is non-zero, one obtains twisted versions of both
examples, cf.~Eq.~(\ref{HPoisson}) in the case of example \ref{ex1}.

The complexification of $E$, $E^c=E\otimes_\R \C$, is a real bundle,
i.e. a bundle over $\C$ endowed with a $\C$--anti-linear operator
$\sigma$ acting on $E^c$, such that $\sigma^2 =1$ and $E=\ker(\sigma
-1)$.
We next turn to
a description of the algebraic set of complex subbundles of $E^c$, which are
maximally isotropic with respect to the complexified Courant scalar
product. In order to do so, we introduce an additional structure,
namely a particular kind of fiber metric $\Eg' =: g_\tau$ (cf.~also
\cite{KSS}), different from the canonical one $\Eg$,
called sometimes a generalized Riemannian metric. Note
that according to the philosophy presented here, a generalized Riemannian metric on
a Courant algebroid would be any positive definite fiber metric;
we will still adopt this partially established terminology now
and then show that, given a splitting, it
corresponds to an ordinary Riemannian metric $g$ on $M$ together with a
2-form $B$ (cf.~\cite{KSS} as well as the contribution of N.~Hitchin within this volume).

\begin{deff}\label{generalized-Riemann}
A positive $E$-metric  $g_\tau$ which can be expressed
via an operator $\tau \in \Gamma(\End(E))$ squaring to the identity
such that for any $\psi_i\in\Gamma (E) $ \beqn g_\tau (\psi_1 ,\psi_2) &=& (\tau
\psi_1, \psi_2)\eeq  is called a
generalized Riemannian metric. 
\end{deff}

\noindent It follows from the definition, that $\tau$ has to be
self-adjoint with respect to the Courant metric
$\Eg(\cdot,\cdot)\equiv(\cdot,\cdot)$; because of
$\tau^2=1$, it is orthogonal moreover. From $g_\tau (\psi,\psi) \ge 0$
we conclude that the $+1$
and $-1$ eigen-subspaces of $\tau$ are positive and negative
definite, respectively,  again with respect to the canonical metric in $E$, while
$(\tau \psi_1,\tau \psi_2)= (\psi_1,
\psi_2)$ shows that they are orthogonal to one another. As a corollary, $\Eg$ having signature $(n,n)$, the
dimensions of the two eigenvalue subspaces are equal. Vice versa, let
us take any positive definite subbundle $V$ of maximal rank, then
there exists a unique $\tau$ which is postulated to be $1$ on $V$ and
$-1$ on the orthogonal subbundle $V^\perp$ (with respect to the
canonical metric). The operator $\tau$ satisfies the properties of the
definition as above.

Given a splitting of the exact sequence
(\ref{exact}), there is a one-to-one correspondence between
generalized Riemannian metrics in the sense of the Definition
\ref{generalized-Riemann} and the one of Hitchin within this volume.
Since $V$ is positive, it has zero intersection with any Lagrangian
subspace of $E$ and in particular with $TM$ and $T^*M$ (i.e.~with
$j(TM) \cong TM$ and $\rho^*(T^*M) \cong T^*M$).  Hence $V$
is a graph of an invertible bundle map $TM\to
T^*M $ which can be identified with a non-degenerate tensor in $\Gamma
(T^*M\otimes T^*M)$ such that its symmetric component $g$ in the
decomposition into the symmetric and antisymmetric parts, $g+B$, is a
Riemannian metric on $TM$.

Define a new real form by using the
complexification of $\tau$: $\tau_\sigma :=\sigma \tau$.  Since $\tau$
is a real operator, its complexification commutes with the complex
conjugation $\sigma$, therefore $\tau_\sigma^2=1$. Moreover,
$\tau_\sigma$ is an anti-linear operator as a composition of linear
and anti-linear operators, thus it defines a new real structure in
$E^c$ (different from the old one $\sigma$).
Hereafter we shall distinguish the $\sigma$- and $\tau_\sigma$-real forms
by calling them "real" and "$\tau$-real", respectively.

\begin{proposition}
The real and imaginary subbundles of $E^c$ with respect to the real
structure $\tau_\sigma$ are the subbundles (over $\R$)
$E^+:=V\oplus iV^\perp$ and $E^-:=V^\perp\oplus iV$, respectively.
Any complex Dirac structure in $E^c$ is a totally complex subbundle
w.r.t.~$\tau_\sigma$, i.e. its intersections with the
$\tau_\sigma$-real and totally complex subbundles of $E^c$ are trivial.
\end{proposition}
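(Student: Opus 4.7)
The plan is to prove the two assertions separately; neither uses Dirac integrability, only the maximal isotropy of $D$ with respect to the $\C$-bilinear extension of the Courant pairing. For the identification of $E^\pm$, I decompose an arbitrary $\psi \in E^c$ in the form $\psi = a + ib$ with $a, b$ in the $\sigma$-real subbundle $E$. Since $\tau$ is an $\R$-linear endomorphism of $E$, its complexification commutes with $\sigma$, and one computes directly $\tau_\sigma(a+ib) = \sigma(\tau a + i\tau b) = \tau a - i\tau b$. The eigenvalue equation $\tau_\sigma \psi = \pm\psi$ then decouples into independent conditions on the real and imaginary parts, namely $\tau a = \pm a$ together with $\tau b = \mp b$. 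Using that $V = \ker(\tau - 1)$ and $V^\perp = \ker(\tau + 1)$ are the positive and negative definite eigenbundles of $\tau$ with respect to the canonical Courant metric, this immediately yields $E^+ = V \oplus iV^\perp$ and $E^- = V^\perp \oplus iV$.

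For the second assertion, let $\psi \in D \cap E^+$ and write $\psi = v + iw$ with $v \in V$, $w \in V^\perp$. Since $D$ is maximally isotropic, $(\psi,\psi) = 0$ for the complex bilinear extension of the Courant pairing. Expansion yields
\[
 (v,v) - (w,w) + 2i\,(v,w) = 0.
\]
Orthogonality of $V$ and $V^\perp$ with respect to the canonical pairing kills the imaginary part automatically, leaving $(v,v) = (w,w)$. This is the crux: the Courant pairing is positive definite on $V$ but negative definite on $V^\perp$, so the left-hand side is nonnegative while the right-hand side is nonpositive, forcing both to vanish. Definiteness then gives $v = 0$ and $w = 0$, hence $\psi = 0$. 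The identical argument with the roles of $V$ and $V^\perp$ interchanged shows $D \cap E^- = 0$.

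The main, though modest, obstacle I anticipate is keeping straight which complex extension of the Courant pairing appears in the isotropy condition: it must be the $\C$-bilinear one rather than the Hermitian one, since only the former produces the crucial sign difference between $(v,v)$ and $(w,w)$ that forces vanishing. Once this is settled, the argument is entirely pointwise, essentially linear algebra on each fibre $E^c_x$, and makes no use of the bracket or of any integrability condition on $D$.
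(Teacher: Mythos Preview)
Your proof is correct and follows essentially the same approach as the paper: both arguments hinge on the observation that the complexified Courant pairing restricts to a definite form on $E^+$ and on $E^-$, so neither can contain a nonzero isotropic vector. The paper states this in one sentence, while you have written out the explicit computation verifying definiteness; your remark that only isotropy (not integrability) of $D$ is used is accurate and matches the paper's level of generality.
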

\proof The first statement follows trivially from the definition
of $\tau_\sigma$: one needs to take into account that
$\tau_\sigma$ is anti-linear. To check the second statement, it
suffices to notice that the (real) eigen-subspaces of the new real
structure $\tau_\sigma$, $E^+$ and $E^-$, are positive (negative) definite
with respect to the restriction of the (complexified) Courant metric
in $E^c$, therefore $E^\pm$ do not contain any nonzero isotropic vectors.
$\square$

Let $D$ be a maximally isotropic complex subbundle of $E^c$; by
the proposition above, $D$ is a totally complex subbundle as well as
$\tau_\sigma (D)$. Define a linear complex structure $J$ in $E$ by
requiring $J (D)=i$ and $J (\tau_\sigma (D))=-i$. By construction,
$J$ is a $\tau-$real operator. Moreover, the restriction of $J$ to
$E^+$ is an orthogonal operator with respect to the induced
positive metric in $E^+$.

\begin{proposition}
 There is a one-to-one correspondence between complex maximally
 isotropic subbundles in $E^c$ and $\R-$linear orthogonal operators
 $J$ in $E^+$ which satisfy $J^2=-1$. In particular, if $D$ is a
 real maximally isotropic subbundle,
 then $J$ is uniquely represented by a real orthogonal operator
 $S: V\to iV^\perp$ as follows:
 \beq\label{J-vs-S}
 J_\sigma :\left( \be{cc} 0 & -S^{-1}\\ S & 0\ee\right)\;.
 \eeq
\end{proposition}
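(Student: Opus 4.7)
Given a maximally isotropic complex subbundle $D \subset E^c$, the first step is to show $D \cap \tau_\sigma(D) = \{0\}$: any $v$ in the intersection would produce $v \pm \tau_\sigma v \in D \cap E^\pm$, both trivial by the previous proposition, hence $v=0$. A rank count then yields $E^c = D \oplus \tau_\sigma(D)$, and I would define $J$ to act as $+i$ on $D$ and $-i$ on $\tau_\sigma(D)$. Because $\tau_\sigma$ is $\C$-antilinear and swaps the two eigenbundles, a direct check gives $J\tau_\sigma = \tau_\sigma J$, so $J$ restricts to an $\R$-linear endomorphism of $E^+$ with $J^2 = -1$. For orthogonality, I would write $v = d + \tau_\sigma d$, $w = d' + \tau_\sigma d'$ with $d, d' \in D$; then $Jv = id - i\tau_\sigma d$, and using $\C$-bilinearity of the complexified Courant pairing together with the isotropy of $D$ and of $\tau_\sigma(D)$ the expression $(Jv, Jw)$ collapses to $(v,w)$.

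\textbf{Inverting the construction.} Conversely, given an orthogonal $\R$-linear $J$ on $E^+$ with $J^2 = -1$, I would extend it $\C$-linearly to $E^c$. Since $J$ commutes with $\tau_\sigma$, the eigenbundles $D := \ker(J-i)$ and $\tau_\sigma(D) = \ker(J+i)$ split $E^c$ and are swapped by $\tau_\sigma$. A generic section of $D$ has the form $u - iJu$ with $u \in E^+$, so isotropy reads
\[
(u - iJu,\, u' - iJu') \;=\; \bigl[(u,u')-(Ju,Ju')\bigr] \;-\; i\bigl[(u,Ju')+(Ju,u')\bigr] \;=\; 0.
\]
The real part vanishes because $J$ is orthogonal; for the imaginary part, $(u,Ju') = (Ju, J^2 u') = -(Ju,u')$ cancels the second summand. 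Maximality of $D$ follows by dimension, and that the two constructions are mutually inverse is then immediate.

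\textbf{The real case.} If in addition $\sigma D = D$, then for $v \in D$ one has $J\sigma v = i\sigma v = \sigma(-iv) = -\sigma J v$, so $\{J,\sigma\} = 0$ on $D$ and, by $\C$-linear extension, on all of $E^c$. Combining this with $[J,\tau_\sigma] = 0$ and $\tau_\sigma = \sigma\tau$ gives $\{J,\tau\} = 0$ on $E^+$. Since $\tau = +1$ on $V$ and $\tau = -1$ on $iV^\perp$, the anticommutation forces $J$ to interchange the two summands of $E^+ = V \oplus iV^\perp$. In this decomposition
\[
J = \begin{pmatrix} 0 & A \\ B & 0 \end{pmatrix},
\]
and $J^2 = -1$ gives $A = -B^{-1}$. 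Setting $S := B$ recovers (\ref{J-vs-S}), with orthogonality of $J$ translating directly into orthogonality of $S \colon V \to iV^\perp$ with respect to the positive-definite restrictions of the Courant metric on the two summands.

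The main subtlety, and essentially the only non-routine aspect, is the bookkeeping needed to juggle the two real structures $\sigma$ and $\tau_\sigma = \sigma\tau$ together with the $\C$-bilinear (not sesquilinear) extension of the Courant pairing; once the (anti)commutation relations between $J$, $\sigma$, $\tau$, and multiplication by $i$ are pinned down, every step above reduces to a direct computation.
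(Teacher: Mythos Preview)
Your argument is correct and follows the same line as the paper's: define $J$ by its $\pm i$ eigenspaces $D$ and $\tau_\sigma(D)$, verify $\tau$-reality and orthogonality, and in the real case use anticommutation with $\sigma$ to force the off-diagonal block form. The paper's proof is in fact much terser than yours---it treats the one-to-one correspondence as already set up in the paragraph preceding the proposition and only spells out the real case---so your filling in of the orthogonality check and the explicit inverse construction is a genuine improvement in completeness. One cosmetic difference: you deduce the block structure from $\{J,\tau\}=0$ while the paper uses $\{J,\sigma\}=0$ directly, but since $\tau_\sigma=\sigma\tau=1$ on $E^+$ the two operators $\sigma$ and $\tau$ coincide there and the arguments are identical.
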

\proof We need to check only the last statement. Suppose $D$ is a
complexification of some real maximally isotropic subbundle of
$E$, then $D$ is preserved by the real structure $\sigma$ and
$\tau_\sigma (D)=\tau (D)$. Now, by construction, $J$ anti-commutes
with $\sigma$. Let us identify $E^+$ with $V\oplus V^\perp$ such
that $\sigma$ acts as $+1$ on the first and as $-1$ on the second
factors. This identification is an isometry, if we supply $V^\perp
$ with an opposite metric (the original one on $V^\perp$  is
negatively definite). Since $J$ anti-commutes with $\sigma$, it
has only off-diagonal entries. Now, taking into account that $J$
is orthogonal and squares to $-1$, we immediately get the required
form (\ref{J-vs-S}). $\square$

\begin{proposition} Let $D$ be a maximally isotropic real subbundle of $E$
provided with the metric induced by $g_\tau$. Then the projector of $D$
to $V$ and $V^\perp$ is an isometry up to a factor $\frac{1}{2}$ and
$-\frac{1}{2}$, respectively.
\end{proposition}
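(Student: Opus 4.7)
The plan is to exploit the explicit form of $g_\tau$ in terms of the $\pm 1$ eigenbundle decomposition $E=V\oplus V^{\perp}$ of $\tau$ and then use isotropy of $D$ with respect to the Courant metric to equate the two contributions. On $V$ and $V^{\perp}$ the metric on the image will be taken to be the restriction of the Courant scalar product $(\cdot,\cdot)\equiv\Eg$, which is positive definite on $V$ and negative definite on $V^{\perp}$ by the discussion following Definition \ref{generalized-Riemann}.

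Concretely, I would proceed as follows. First, for any section $\psi\in\Gamma(D)$, write uniquely $\psi=\psi_{+}+\psi_{-}$ with $\psi_{\pm}$ the projections to the $\pm 1$ eigenbundle of $\tau$, so that $\tau\psi=\psi_{+}-\psi_{-}$ and the two summands are orthogonal with respect to $(\cdot,\cdot)$. Second, use that $D$ is isotropic: $0=(\psi,\psi)=(\psi_{+},\psi_{+})+(\psi_{-},\psi_{-})$, which gives the key relation
\begin{equation*}
(\psi_{+},\psi_{+})=-(\psi_{-},\psi_{-}).
\end{equation*}
Third, plug these into the definition of the generalized Riemannian metric:
\begin{equation*}
g_{\tau}(\psi,\psi)=(\tau\psi,\psi)=(\psi_{+},\psi_{+})-(\psi_{-},\psi_{-})=2(\psi_{+},\psi_{+})=-2(\psi_{-},\psi_{-}).
\end{equation*}
Reading this off yields precisely
\begin{equation*}
(\psi_{+},\psi_{+})=\tfrac{1}{2}\,g_{\tau}(\psi,\psi),\qquad (\psi_{-},\psi_{-})=-\tfrac{1}{2}\,g_{\tau}(\psi,\psi),
\end{equation*}
i.e.~the projectors $P_{\pm}\colon D\to V, V^{\perp}$ are isometries up to the factors $\tfrac{1}{2}$ and $-\tfrac{1}{2}$, respectively. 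Polarization then promotes these pointwise relations from the quadratic forms to the full bilinear ones.

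There is no real obstacle here; the only thing worth a remark is that the argument also shows, en passant, that $P_{\pm}|_{D}$ is fibrewise injective (since $g_{\tau}$ is positive definite, $P_{+}\psi=0$ forces $g_{\tau}(\psi,\psi)=0$ and hence $\psi=0$; analogously for $P_{-}$), so that by a dimension count $P_{\pm}|_{D}$ are in fact fibrewise isomorphisms onto $V$ and $V^{\perp}$. I would mention this only briefly, since the content of the proposition is the metric rescaling, which follows immediately from the three-line computation above.
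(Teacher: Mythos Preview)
Your proof is correct and follows essentially the same approach as the paper: both use the eigenbundle decomposition $E=V\oplus V^\perp$ of $\tau$, the orthogonality of $V$ and $V^\perp$, and the isotropy of $D$ to compare $(\psi_+,\psi_+)$, $(\psi_-,\psi_-)$ with $g_\tau(\psi,\psi)$. The only cosmetic difference is that the paper works directly with the bilinear form via the explicit projector formulas $P_V=\tfrac{1}{2}(1+\tau)$, $P_{V^\perp}=\tfrac{1}{2}(1-\tau)$, whereas you compute on the quadratic form and then polarize; your extra remark on injectivity of $P_\pm|_D$ is a useful bonus not stated in the paper's proof.
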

\proof Taking into account that the orthogonal projector to $V$
and $V^\perp$ can be expressed as $P_V=\frac{1}{2}(1+\tau)$ and
$P_V^\perp=\frac{1}{2}(1-\tau)$, respectively, we obtain: \beqn
(P_V\psi_1, P_V\psi_2)&=& \frac{1}{2}
\left((\tau\psi_1,\psi_2)+(\psi_1,\psi_2)\right)\;\\\nonumber
(P_V^\perp\psi_1, P_V^\perp\psi_2)&=& \frac{1}{2}
\left(-(\tau\psi_1,\psi_2)+(\psi_1,\psi_2)\right)\;\eeq for all
$\psi_1,\psi_2\in\Gamma (E)$. Imposing $\psi_i\in\Gamma (D)$, we
 get the required property. $\square$

\vskip 0.2em \noindent Let us take $D=T^*M$ supplied with the
positive metric, induced by $g_\tau$ in $E$ as above.
\begin{cor}
There is a one-to-one correspondence between real maximally
isotropic subbundles and orthogonal operators acting point-wisely
in $TM$. \label{oneone}
\end{cor}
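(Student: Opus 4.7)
The plan is to combine the two immediately preceding propositions, using $D = T^*M$ as a fixed reference real maximally isotropic subbundle to build canonical metric identifications between $V$, $V^\perp$ and $TM$. The first of these propositions already gives a bijection between real maximally isotropic subbundles of $E$ and real orthogonal operators $S\colon V\to iV^\perp$, which, after absorbing the factor $i$, is the same as an $\R$-linear isomorphism $V\to V^\perp$ intertwining the positive-definite metrics $\Eg|_V$ and $-\Eg|_{V^\perp}$. So the remaining task is to identify each of $V$ and $V^\perp$ canonically with $TM$ as metric bundles.

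The second of the preceding propositions, specialized to $D=T^*M$, tells us that the projectors $P_V|_{T^*M}$ and $P_{V^\perp}|_{T^*M}$ are isometries up to the constant factors $\pm\frac{1}{2}$ from $(T^*M,g_\tau|_{T^*M})$ onto $(V,\Eg|_V)$ and $(V^\perp,-\Eg|_{V^\perp})$ respectively. After this harmless rescaling, they become isomorphisms of positive-definite metric bundles, yielding canonical identifications $(V,\Eg|_V) \cong (T^*M,g_\tau|_{T^*M}) \cong (V^\perp,-\Eg|_{V^\perp})$. Composing with the identification $T^*M \cong TM$ furnished by the Riemannian metric $g$ (the symmetric part of the bundle map $g+B\colon TM\to T^*M$ whose graph is $V$) converts everything to $TM$ equipped with its Riemannian metric.

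Putting the pieces together, I send a real maximally isotropic subbundle $D'$ to the transport of its associated $S\colon V\to V^\perp$ through the chain above, obtaining an orthogonal operator on $(TM,g)$; conversely, any orthogonal operator on $(TM,g)$ can be pushed backward through the same identifications to produce an $S$, which in turn reproduces a unique $D'$ by the first preceding proposition. The main thing to verify is that the various constant factors and the sign flip inherent in passing from $\Eg|_{V^\perp}$ to $-\Eg|_{V^\perp}$ cancel consistently through the identifications so that orthogonality is preserved throughout; once this bookkeeping is carried out there is no further substantive obstacle, and the corollary follows.
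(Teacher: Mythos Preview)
Your proposal is correct and follows essentially the same route as the paper. The only cosmetic difference is the order of composition: the paper first builds the anti-isometry $P_{V^\perp}\circ(P_V|_{T^*M})^{-1}\colon V\to V^\perp$, composes it with multiplication by $i$ to obtain an isometric identification $V\cong iV^\perp$, thereby turning $S$ into an orthogonal operator on $V$, and only then conjugates by $P_V$ to land on $T^*M$ (and by metric duality on $TM$); you instead identify $V$ and $V^\perp$ separately with $T^*M$ via $P_V$ and $P_{V^\perp}$ and transport $S$ directly --- the same ingredients, just reshuffled.
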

\proof The composition of maps $P_V^\perp\circ ({P_V}_{\mid D})^{-1}: V\to
V^\perp$ is an isometry up to a factor $-1$ (an anti-isometry). Combining
this map with the natural anti-isometry of real spaces,
$V^\perp\to iV^\perp $, given by the multiplication with $i$, we get
an isometric identification of $V$ and $iV^\perp$, which allows to
identify $S$ with an orthogonal operator acting in $V$. Using the
conjugation by $P_V$, we identify $S$ with an orthogonal operator
acting in $T^*M$ or, by the duality, in $TM$. $\square$

\vskip 0.2em The above operator $S$ is a section of $O(TM)$, the
associated bundle $P\times_{O(n)} O(n)$ where $P$ is the bundle of
orthogonal frames and $O(n)$ acts on itself by conjugation.  As it
was constructed, the homotopy class of the section $S$ depends
only on the homotopy class of the corresponding maximally
isotropic subbundle. One knows that, for a Lie group $G$, any
principal $G-$ bundle over $M$ is the pull-back by the canonical
one over the universal classifying space, $EG\to BG$ with respect
to some map $\varphi_0 :M\to BG$. Then $P\times_G Y$ is the
pull-back of $EG\times_G Y$ for each $G-$ space $Y$. Applying this to
$G=Y=O(n)$, one has the following commutative diagram:
\beq\be{ccc} O(TM)
&\stackrel{\varphi }{\longrightarrow}& EO(n)\times_{O(n)} O(n)\\
\downarrow && \downarrow \\
M &\stackrel{\varphi_0}{\longrightarrow}& BO(n)\ee\;. \eeq Given
any section $S: M\to O(TM)$, the pull-back of $\varphi \circ S$
defines a map\footnote{This construction was
suggested by A. Alekseev during our joint work on Dirac
structures. } from the equivariant cohomology of $O(n)$ with
coefficients in some ring $\mathrm{\CR}$ to the cohomology of $M$ : \beq
(\varphi \circ S)^* : H^i (EO(n)\times_{O(n)}O(n), \mathrm{\CR})=H^i
(O(n), \mathrm{\CR})^{O(n)}\to H^i (M, \mathrm{\CR})\;.\eeq Since all
the arrows are defined up to homotopy, we obtain a
characteristic map from the product of the set of homotopy classes of maximally
isotropic subbundles with $H^i(O(n),\CR)$ to the cohomology of $M$.
The explicit
construction of this characteristic map for $\mathrm{\CR}=\R$ can be
done by use of the secondary characteristic calculus. Let $\na$ be
the Levi-Civita connection of the induced metric, $\varPhi$ an
ad-invariant polynomial on the Lie algebra $\mathfrak{o}(n)$, then
define the corresponding characteristic class as follows: \beq
\mathrm{c}_\varPhi (S,g) &=& \int\limits_0^1 \varPhi (\md t\wedge
A + F(t))\;,\eeq where $A=S^{-1}\na (S)$, and $F(t)$ is the
curvature of $\na +tA$. One can easily check that \beqn \md
\mathrm{c}_\varPhi (S,g) &=& \varPhi \left((S^{-1}\circ\na\circ
S)^2\right)-\varPhi (R)=0\;,\eeq where $R$ is the curvature of
$\na$ (the Riemann curvature), since $\varPhi$ is ad-invariant.
The cohomology class of $c_\varPhi (S,g)$ does not change under the
homotopy of $S$ and $g$.

\begin{proposition} Let $D$ be a maximally isotropic subbundle
of $E$. Suppose that there is a connection, i.e. an isotropic
splitting of $0\to T^*M\to E\to TM\to 0$, such that the projection
of $D$ to $TM$ or $T^*M$ is non-degenerate. Then the section $S$,
constructed as above, is homotopic to $1$ and $-1$, respectively,
thus all characteristic classes vanish.
\end{proposition}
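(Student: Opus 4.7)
The plan is to deform $D$ within the space of maximally isotropic subbundles of $E$ to either $TM$ or $T^*M$, and then to identify the section $S$ at these endpoints as constant sections of $O(TM)$. As a preliminary observation, for any maximally isotropic $D$ the projection $P_V\big|_D\colon D\to V$ is an isomorphism: since $V^\perp$ is negative definite for the Courant metric whereas $D$ is isotropic, $D\cap V^\perp=\{0\}$, and the dimensions agree. Hence the recipe of Corollary~\ref{oneone} is continuous in $D$, so any continuous path of maximally isotropic subbundles induces a homotopy of the associated sections of $O(TM)$.

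Assume first that the projection of $D$ onto $TM$ is non-degenerate. By Example~\ref{ex2}, $D=\{\,\xi+\omega(\xi,\cdot):\xi\in TM\,\}$ for some $\omega\in\Omega^2(M)$. I would then consider the family
\[
D_t\;=\;\{\,\xi+t\,\omega(\xi,\cdot):\xi\in TM\,\}\subset E,\qquad t\in[0,1],
\]
each member of which is the graph of a $2$-form and hence maximally isotropic by the same Example; the endpoints are $D_1=D$ and $D_0=TM$. The other case is symmetric: if the projection of $D$ to $T^*M$ is non-degenerate, Example~\ref{ex1} realises $D=\{\,\alpha+\Pi(\alpha,\cdot):\alpha\in T^*M\,\}$ for a bivector $\Pi\in\Gamma(\Lambda^2 TM)$, and the family $D_t=\{\,\alpha+t\,\Pi(\alpha,\cdot):\alpha\in T^*M\,\}$ joins $D$ to $T^*M$ through maximally isotropic subbundles.

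It then remains to evaluate $S$ at the endpoints. This is a local (in fact fibrewise) calculation. In the splitting $E=T^*M\oplus TM$ with the generalized metric parameterised by a Riemannian metric $g$ and a $2$-form $B$, one has the explicit descriptions
\[
V=\{(g\xi+B\xi,\xi):\xi\in TM\},\qquad V^\perp=\{(-g\xi+B\xi,\xi):\xi\in TM\}.
\]
Decomposing a generic element of $TM$ (respectively $T^*M$) into its $V$- and $V^\perp$-components and then chasing it through the maps $P_V^\perp\circ (P_V|_D)^{-1}$, multiplication by $i$, and the conjugation by $P_V$ that defines the operator in Corollary~\ref{oneone}, the $B$-dependent contributions cancel between the two parameterisations and one obtains $S(TM)=+\mathrm{id}$ and $S(T^*M)=-\mathrm{id}$ pointwise on $M$. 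A constant section makes $A=S^{-1}\na S$ vanish identically, so the characteristic form $c_\varPhi(S,g)$ vanishes at the endpoints, and by the homotopy invariance of $c_\varPhi$ recalled just before the proposition it vanishes in cohomology for the original $D$ as well.

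The main obstacle is the last step: carefully untwisting the chain of identifications of Corollary~\ref{oneone} in the presence of a non-trivial $B$-field, so as to verify that no residual rotation survives in $S(TM)$ or $S(T^*M)$. Once this bookkeeping is done, the two explicit homotopies of the previous paragraph yield the claim.
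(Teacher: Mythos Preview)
Your homotopy argument---scaling the bivector or the $2$-form to connect $D$ to $T^*M$ or to $j(TM)$---is exactly what the paper does. The divergence is at the $TM$ endpoint, precisely where you flag the ``main obstacle''.

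The claim that the $B$-contributions cancel so that $S(j(TM))=\pm\mathrm{id}$ \emph{pointwise} is not correct in general. Using the parameterisation $D_x=\{(1+\tau)\eta+(1-\tau)S\eta:\eta\in T_x^*M\}$ together with $V=\mathrm{graph}(g+B)$, one finds that $S(j(TM))$ is a Cayley-type transform in $Bg^{-1}$ and hence genuinely $B$-dependent. It is still homotopic to a constant section (scale $B$ to zero), so your conclusion survives, but not via the cancellation you assert.

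The paper sidesteps this computation. Instead of evaluating $S$ at $j(TM)$, it notes that any two maximally isotropic subbundles transverse to $T^*M$ are graphs of $2$-forms and hence mutually homotopic by linear interpolation; in particular $j(TM)$ is homotopic to the $g_\tau$-orthogonal complement $\tau(T^*M)$ of $T^*M$, for which the value of $S$ drops out of the construction immediately, with no $B$-bookkeeping. This extra homotopy is the idea that dissolves your obstacle. (Alternatively, since $g_\tau$ is auxiliary and only its homotopy class matters, you could simply choose it so that $B=0$ in the given splitting; then your direct evaluation at $j(TM)$ becomes trivial.)
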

\proof Obviously, whatever metric $g$ is taken, $T^*M$ corresponds
to  $-1\in \Gamma (O(TM))$. Assume that the
projection of $D$ to $T^*M$ is non-degenerate, then, apparently,
$D$ is homotopic to $T^*M$ and thus the section $S$, corresponding
to $D$, is homotopic to $1$. If the projection of $L$ to $TM$ is
non-degenerate, then $D$ is homotopic to image $TM$ and thus to
any maximally isotropic subbundle with zero intersection with
$T^*M$. Now it suffices to take the orthogonal complement of
$T^*M$ in $E$ with respect to the generalized Riemann metric $g_\tau$:
this is a maximally isotropic subbundle of the required type which
corresponds to $1$. $\square$

\begin{proposition} Let $D$ and $D'$ be maximally isotropic
subbundles corresponding to $S$ and $S'\in
\Gamma (O(T^*M))$, respectively, then the intersection
$D\cap D'$ is zero if and only if $S^{-1}S'-1$ is
non-degenerate.
\end{proposition}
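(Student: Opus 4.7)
The plan is to describe $D$ and $D'$ concretely as graphs over $V$, reduce the intersection condition to the injectivity of a difference of bundle maps, and then translate the resulting condition back into a statement about $S$ and $S'$.

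First, I would invoke the proposition preceding Corollary \ref{oneone}: the restriction of $P_V$ to any real maximally isotropic subbundle $D\subset E$ is a fiberwise isomorphism onto $V$ (it is even an isometry up to the scalar factor $\tfrac12$, which does not affect bijectivity). This lets me present each such $D$ as the graph of a bundle map $T_D := P_V^\perp \circ (P_V|_D)^{-1} \colon V \to V^\perp$, i.e.\ $D = \{v + T_D(v) : v \in V\}$, and likewise $D' = \{v + T_{D'}(v) : v \in V\}$.

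Next I would compute the intersection directly. An element of $D \cap D'$ has the form $v + T_D v = v' + T_{D'} v'$ with $v,v'\in V$ and $T_D v, T_{D'} v' \in V^\perp$. Because $V$ and $V^\perp$ are orthogonal (the first positive, the second negative definite), the $V$- and $V^\perp$-components must match separately; this forces $v = v'$ and $(T_D - T_{D'})v = 0$. Hence $D\cap D' = 0$ if and only if the bundle map $T_D - T_{D'}\colon V \to V^\perp$ is pointwise injective, i.e.\ non-degenerate.

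Finally, I would translate this into a statement about $S$ and $S'$. Tracing through the chain of identifications in the proof of Corollary \ref{oneone}, $S$ arises from $T_D$ by post-composition with the anti-isometry $V^\perp \to iV^\perp$ given by multiplication by $i$, followed by the isometric identifications $iV^\perp \cong V \cong T^*M \cong TM$; the same procedure turns $T_{D'}$ into $S'$. Since each intermediate step is a fiberwise linear isomorphism, $T_D - T_{D'}$ is non-degenerate iff $S - S'$ is, and as $S$ is orthogonal and thus invertible, this in turn is equivalent to the non-degeneracy of $S^{-1}(S' - S) = S^{-1}S' - 1$, as required. The only point that requires a bit of care is the bookkeeping of this chain of identifications; but since each link is a bundle isomorphism, non-degeneracy is automatically preserved along the way, so this is not a serious obstacle.
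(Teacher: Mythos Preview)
Your argument is correct and follows the same underlying idea as the paper: write each maximally isotropic subbundle as a graph, so that the intersection becomes the kernel of a difference map, and then relate this to $S^{-1}S'-1$. The paper, however, bypasses your intermediate parameterization over $V$ by writing down directly the parameterization of $D$ by $T^*M$,
\[
D_x=\{(1+\tau)\eta+(1-\tau)S\eta \mid \eta\in T^*_xM\},
\]
which already encodes the identifications $T^*M\cong V$ via $(1+\tau)$ and $T^*M\cong V^\perp$ via $(1-\tau)$; comparing two such elements immediately gives $\eta=\eta'$ and $(S-S')\eta=0$, hence $(S^{-1}S'-1)\eta=0$. Your route is a faithful unwinding of the same construction---you graph over $V$ via $T_D$, then translate back through the chain of isomorphisms---so the extra bookkeeping you flag is exactly what the paper's explicit formula absorbs in one stroke.
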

\proof The proof follows from the explicit parameterization of $D$ by $T^*M$:
\beq D_x &=&\{(1+\tau)\eta +(1-\tau)S\eta
\;|\; \eta\in T^*_xM \}\;.\eeq
 $\square$

 \begin{cor} $D$ and $T^*M$ admit a common complementary maximal
 isotropic subbundle, if and only if there exists $R\in\Gamma
 (O(TM))$ such that $R-1$ and $RS-1$ are non-degenerate.
 \end{cor}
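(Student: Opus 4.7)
The plan is to appeal twice to the preceding Proposition, which characterizes when two maximally isotropic subbundles of $E$ have trivial intersection in terms of their classifying sections in $\Gamma(O(TM))$.

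First, by Corollary \ref{oneone}, any candidate common complement $\widetilde{D}$ of $D$ and $T^*M$ is itself maximally isotropic and therefore corresponds to some $R \in \Gamma(O(TM))$. The problem thus reduces to simultaneously imposing $\widetilde{D} \cap D = 0$ and $\widetilde{D} \cap T^*M = 0$ as conditions on the single section $R$.

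Applying the preceding Proposition to the pair $(D, \widetilde{D})$ translates the first transversality condition into the non-degeneracy of the operator $S^{-1}R - 1$, which — using that $S$ is an orthogonal bundle automorphism and hence pointwise invertible — is equivalent to the non-degeneracy of $RS - 1$ after left/right multiplication by the invertible $S$. Applied to the pair $(T^*M, \widetilde{D})$, and recalling from the earlier discussion that $T^*M$ corresponds under Corollary \ref{oneone} to the identity section $1 \in \Gamma(O(TM))$, the second transversality condition becomes precisely the non-degeneracy of $R - 1$.

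Conversely, given any $R \in \Gamma(O(TM))$ satisfying both non-degeneracy conditions, the associated subbundle $\widetilde{D}$ provided by the one-to-one correspondence is maximally isotropic and pointwise transverse to both $D$ and $T^*M$. Since all three subbundles have equal fibre rank (half the rank of $E$) and are maximally isotropic, fibrewise transversality is equivalent to being a direct complement, so $\widetilde{D}$ is indeed a common complementary maximally isotropic subbundle. The only genuinely delicate point in the argument is correctly identifying the section of $O(TM)$ that $T^*M$ corresponds to under Corollary \ref{oneone} (to get $R-1$ on the nose, not $R+1$), which fixes the sign convention used throughout; everything else reduces to a substitution into the preceding Proposition and an elementary manipulation with the invertible $S$.
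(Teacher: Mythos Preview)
Your strategy is exactly what the paper intends: the corollary is stated with no separate proof, as an immediate consequence of the preceding Proposition applied twice. So on the level of approach there is nothing to add.

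There is, however, a genuine algebraic gap. You assert that non-degeneracy of $S^{-1}R-1$ is equivalent to non-degeneracy of $RS-1$ ``after left/right multiplication by the invertible $S$'', but no such manipulation produces this. Left-multiplying $S^{-1}R-1$ by $S$ gives $R-S$; right-multiplying that by $S^{-1}$ gives $RS^{-1}-1$, not $RS-1$. In general the two conditions are inequivalent: for $S,R\in SO(2)$ given by rotations through angles $\theta$ and $\phi$, non-degeneracy of $S^{-1}R-1$ means $\phi\neq\theta$, while non-degeneracy of $RS-1$ means $\phi\neq-\theta$. So the step you flag as routine is precisely where the argument breaks.

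It is worth noting that the paper itself is not fully consistent here: the proof of the \emph{next} Proposition asserts that $T^*M$ corresponds to $-1\in\Gamma(O(TM))$, whereas the explicit parameterization $D_x=\{(1+\tau)\eta+(1-\tau)S\eta\}$ used in the proof of the preceding Proposition gives $S=1$ for $T^*M$ (your choice). So the exact form of the conditions in the corollary may well contain a typo --- one would expect $R-S$ (equivalently $S^{-1}R-1$) rather than $RS-1$. Your instinct that the sign/convention bookkeeping is the delicate point is correct; you just cannot paper over the discrepancy with an invalid multiplication.
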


Previously we proved that if a Lagrangian subbundle $D$ and $T^*M$
has a common complementary maximally isotropic subbundle, then the
operator $S$ which corresponds to $D$ can be deformed to the minus
identity section (we take the connection in $E$ such that the
image of $TM$ coincides with the chosen common complementary
subbundle). In general, we can not reverse this statement, i.e.
even if $D$ is homotopic to $T^*M^\perp$, there may be no
splitting in $E$  such that the projection of $D$ to $T^*M$ becomes
non-degenerate.
\begin{example}\label{counterexample1}
Let $M$ be a $2-$dimensional surface, then $\det S=\det R=1$.
Therefore both operators $S$ and $R$ can be thought of as
points in $S^1$.

\vskip 2mm\noindent Let us assume that $M$ contains a loop such that $S$ passes $1\in S^1$
and covers $S^1$ many times in both directions. We only claim that
the total degree of the map from the loop to $S^1$ equals to zero
(hence $S$ is homotopically equivalent to the identity map). For
example, $M=T^2$. One can take a smooth function $\phi (t)=a\sin
(2\pi i t)$. The quotient $t\mto \phi (t) \mod \Z$  defines a
smooth map $S^1\to S^1$ of degree zero, because the limit of $\phi
(t)$ at $a\to 0$ is zero.
 Nevertheless,
the image of $\phi $ wraps $S^1$ many times for sufficiently
large $a$. Now we can trivially extend the map to $T^2$ which
gives $S$. Since the function, which would correspond to $R$, is
not permitted to coincide with $\phi$ and to pass through $1$, we
conclude that there is ``no room'' for $R$.
\end{example}

\section{Generalized complex structures}
\label{sec:GCS}


Generalized complex structures (GCS) were invented by Nigel Hitchin \cite{HitchinGCS}.
It turned out that GCS provides a mathematical background of certain sigma models (for instance, those
the target space of which is endowed with
a bihermitian metric, cf.~\cite{GatesHullRocek,Maxim}).
Generalized complex structures interpolate naturally between
 symplectic and holomorphic Poisson geometry.

\begin{deff}  Let $E$ be an exact Courant algebroid.
A generalized complex structure is a maximally isotropic pure totally complex
Dirac subbundle of the complexification of $E$ with the complexified Courant scalar product,
that is, a maximally isotropic subbundle $D$ of $E^c =E\otimes_\R \C$ such that $D$ is
closed w.r.t. the Courant bracket and $D\cap \overline{D}=\{0\}$.\footnote{We call such
a subbundle $D$ totally complex because its ``real part'' $D\cap\overline{D}$ is zero}
\end{deff}

\noindent  As in Section \ref{sec:Dirac},
we uniquely associate a maximally isotropic totally complex subbundle $D$ of $E^c$ with
 a real point-wisely acting operator $J$ which squares to $-1_E$,
such that, at any $x\in M$ the fibers $D_x$ and $\overline{D}_x$ are the $+i$
and $-i$ eigen-subspaces of $J_x$, respectively.
The isotropy condition of $D$ implies that $J$ is skew-symmetric with respect to
the Courant scalar product $(,)$.
From now on we shall consider only an ``untwisted'' version of an exact Courant algebroid
together with an isomorphism $E\simeq TM\oplus T^*M$, where the latter direct sum is supplied with
the canonical scalar product and the
(Dorfman) bracket (\ref{Courant-bracket}). We also
identify bivector fields and 2-forms on $M$ with sections of
$Hom (T^*M, TM)$ and $Hom (TM, T^*M)$, respectively, by use of the corresponding contractions.
 The next lemma gives a complete set of algebraic conditions for $J$.

\begin{lemma} $J$ is an operator of the form
\beq\label{form-of-J} J=\left(\be{cc} I & \Pi \\ \Omega & -I^* \ee
\right)\;,
\eeq
where $I\in \Gamma (End TM)$, $\Pi\in \Gamma (\Lambda^2 TM)$,
and $\Omega\in\Omega^2 (M)$. Then $J^2=-1_E$ translates into:
\beq\label{conditions-for-J}
\be{ccc}I^2 + \Pi\Omega &=&-1_{TM}\\ (I^*)^2 +\Omega\Pi &=&-1_{T^*M}\\
I^*\Omega -\Omega I&=& 0\\ I\Pi -\Pi I^* &=& 0\ee \;
\eeq
 \end{lemma}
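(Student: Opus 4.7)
The strategy is a direct algebraic computation in two stages: first determine which block maps can appear, then compute $J^2$ in block form.

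\textbf{Step 1: Block decomposition and skew-symmetry.} Using the splitting $E\simeq TM\oplus T^*M$, decompose
\beq
J=\begin{pmatrix} I & \Pi \\ \Omega & A \end{pmatrix}, \qquad
I\in\Gamma(\End\,TM),\ \Pi\in\Gamma(\Hom(T^*M,TM)),\ \Omega\in\Gamma(\Hom(TM,T^*M)),\ A\in\Gamma(\End\,T^*M).
\eeq
Recall that in the splitting the canonical scalar product is $(\xi+\a,\xi'+\a')=\tfrac12(\langle\a,\xi'\rangle+\langle\a',\xi\rangle)$. I would impose skew-symmetry of $J$ block by block: testing $(J\psi,\psi')+(\psi,J\psi')=0$ on pairs $(\xi,\xi')$ forces $\Omega=-\Omega^*$, i.e.\ $\Omega\in\Omega^2(M)$; on pairs $(\a,\a')$ it forces $\Pi=-\Pi^*$, i.e.\ $\Pi\in\Gamma(\Lambda^2 TM)$; and on a mixed pair $(\xi,\a)$ it gives $\langle\a,I\xi\rangle+\langle A\a,\xi\rangle=0$, i.e.\ $A=-I^*$. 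This yields the desired form (\ref{form-of-J}).

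\textbf{Step 2: The condition $J^2=-1_E$.} Simply multiply the block matrices:
\beq
J^2=\begin{pmatrix} I^2+\Pi\Omega & I\Pi-\Pi I^* \\ \Omega I-I^*\Omega & \Omega\Pi+(I^*)^2 \end{pmatrix}.
\eeq
Equating this with $\mathrm{diag}(-1_{TM},-1_{T^*M})$ gives exactly the four equations of (\ref{conditions-for-J}), where the off-diagonal identities $I\Pi-\Pi I^*=0$ and $I^*\Omega-\Omega I=0$ come from the vanishing of the off-diagonal blocks.

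\textbf{Anticipated difficulty.} There is no real obstacle beyond bookkeeping: the nontrivial content is only a careful handling of sign conventions for $(\cdot,\cdot)$ and for the identification of 2-forms/bivectors with maps between $TM$ and $T^*M$. Concretely, one must verify that ``skew-adjoint w.r.t.\ $(\cdot,\cdot)$'' translates to antisymmetry of the tensors $\Omega$ and $\Pi$ (so that they belong to $\Omega^2(M)$ and $\Gamma(\Lambda^2TM)$ respectively) and that the $T^*M$-diagonal block is $-I^*$ rather than $+I^*$. Once these sign bookkeeping points are settled, the rest is an immediate $2\times 2$ block multiplication.
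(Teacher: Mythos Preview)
Your proposal is correct and is exactly the ``straightforward calculation'' the paper alludes to: impose skew-symmetry of $J$ with respect to the canonical pairing to obtain the block form (\ref{form-of-J}), then square the $2\times 2$ block matrix to read off (\ref{conditions-for-J}). The paper gives no further details, so your write-up simply makes explicit what the authors left implicit.
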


\begin{proof}
Straightforward calculation.
$\square$
\end{proof}

\vskip 2mm

\noindent Similarly to the end of Section \ref{sec:GG}, we treat differential forms on $M$
as sections of the spin module over the Clifford bundle $C(E)$, that is, the bundle
of associative algebras generated by $E$ subject to the relations
$\varphi^2=(\varphi,\varphi)$ for all $x\in M$ and $\varphi\in E_x$. The spinor action of
$\Gamma( C(E))$ on $\Omega^*(M)$ is
the extension of  (\ref{spin-action}). Taking into account that the generalized complex
structure operator $J$ is skew-symmetric
 with respect to the scalar product, it can be thought of as a section of
$C(E)$, which we denote by the same letter $J$, such that
$J(\psi)=[J,\psi ]$ for each $\psi\in \Gamma (E)$. Here $[,]$ is the
(super-)commutator of sections of the Clifford bundle.  It is easy to check that
\beq
\mathtt{c}(J) (\a_1\wedge ...\wedge \a_p)= -\sum\limits_{i=1}^p  \a_1\wedge ...\wedge I^*(\a_i)
\wedge ...\wedge\a_p + (\i_\Pi +\Omega\wedge)\a_1\wedge ...\wedge \a_p\;,
\eeq
where $\a_i\in \Omega^1 (M)$, $i=1,\ldots, p$.

\vskip 2mm

\noindent We now
focus on the question when a maximally isotropic totally complex subbundle $D$ is closed w.r.t.~the
Courant bracket or, in other words, when it is integrable.

\begin{proposition} \label{prop:Nij} $D$ is a generalized complex structure if and only if
 for any $\psi,
\psi'\in\Gamma(E)$  one has
\beq\label{Nijenhuis-zero}
J\left( [J\psi , \psi']+[\psi, J\psi']\right)+
[\psi ,\psi']-[J\psi ,J\psi']=0\;.
\eeq
\end{proposition}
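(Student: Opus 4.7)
The plan is to exploit the standard fact that $D$ is the $+i$-eigenbundle of the $\mathbb{C}$-linear extension of $J$ to $E^c$. Because $J^2 = -1_E$, for any real section $\psi \in \Gamma(E)$ the combination
$\Psi := \psi - i\,J\psi$
satisfies $J\Psi = J\psi + i\,\psi = i\,\Psi$, so $\Psi \in \Gamma(D)$. Conversely, any section of $D$ can be built from such $\Psi$ by $C^{\infty}(M,\mathbb{C})$-linear combinations, so it suffices to test integrability on sections of this particular form (with $\psi$ real). Note that for this reduction one uses that the symmetric part of the Dorfman bracket on sections of any maximally isotropic subbundle vanishes (since $(\psi,\psi')=0$ pointwise implies $\rho^*\md(\psi,\psi')=0$), so no additional bookkeeping in the $C^\infty$-linearity is required.

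Next I would $\mathbb{C}$-bilinearly expand
\[
[\Psi, \Psi'] \;=\; [\psi-iJ\psi,\, \psi'-iJ\psi']
\;=\; A - i\,B,
\]
where $A := [\psi,\psi'] - [J\psi, J\psi']$ and $B := [\psi, J\psi'] + [J\psi, \psi']$ are real $E$-sections. Integrability of $D$ translates to the condition $[\Psi,\Psi'] \in \Gamma(D)$, i.e.\ $J[\Psi,\Psi'] = i\,[\Psi,\Psi']$. Separating real and imaginary parts in $JA - i\,JB = B + i\,A$ gives the two equations $JA = B$ and $JB = -A$, which are equivalent via $J^2 = -1_E$. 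The first one,
\[
J\bigl([\psi,\psi'] - [J\psi, J\psi']\bigr) \;=\; [\psi, J\psi'] + [J\psi, \psi'],
\]
is precisely the claimed identity (\ref{Nijenhuis-zero}) after rearrangement.

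Conversely, if (\ref{Nijenhuis-zero}) holds for all real $\psi,\psi' \in \Gamma(E)$, reversing the above computation shows $[\Psi, \Psi'] \in \Gamma(D)$ for all generators $\Psi,\Psi'$ of $D$, and by $C^{\infty}(M,\mathbb{C})$-bilinearity together with the Leibniz rule of the bracket this extends to arbitrary sections of $D$. Since $D$ is maximally isotropic and totally complex by hypothesis, this completes the equivalence.

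The main technical point, rather than an obstacle, is to justify that it is enough to test the closure of $D$ on the real-spanned generators $\Psi = \psi - iJ\psi$; once this is clear, the rest is a direct expansion and a one-line check that the real and imaginary parts of the $+i$-eigenvalue equation are interchanged by $J$.
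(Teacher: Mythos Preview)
Your proof is correct and rests on the same idea as the paper's: both exploit the eigendecomposition $E^c = D \oplus \bar D$ of $J$, the paper by evaluating the real Nijenhuis expression on complex eigensections $\psi_i\in\Gamma(D)$, you by expanding the bracket of eigensections $\Psi=\psi-iJ\psi$ in terms of their real parts and reading off the real/imaginary components of the $+i$-eigenvalue condition. As a minor simplification, note that \emph{every} section of $D$ is already of the form $\psi-iJ\psi$ for a real $\psi$ (write $\chi=\psi_1+i\psi_2\in\Gamma(D)$ and use $J\chi=i\chi$ to get $\psi_2=-J\psi_1$), so your discussion of $C^\infty(M,\mathbb{C})$-linear extension and the vanishing symmetric part on isotropics is not needed.
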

\begin{proof} The l.h.s.~of the equation (\ref{Nijenhuis-zero}) is bilinear and real, thus it is sufficient
to check the property for $\psi_1,\bar\psi_2$ and $\psi_1,\psi_2$, where $\psi_i\in \Gamma(D)$.
While in the first case the expression vanishes identically, the second one gives
\beqn J\left( [J\psi_1 , \psi_2]+[\psi_1 J\psi_2]\right)+
[\psi_1 ,\psi_2]-[J\psi_1 ,J\psi_2]=2i\left( J[\psi_1,\psi_2]-i[\psi_2,\psi_2]\right)\,,
\eeq
 which is identically zero if and only if $[\psi_1,\psi_2]$ is again a section of $D$. The latter
 is nothing else but the integrability condition for $D$.

$\square$\end{proof}

\noindent The meaning of equation (\ref{Nijenhuis-zero}) is the same
as for an almost complex structure in the usual sense, that is,
the vanishing of a certain tensor called the  (generalized)
 Nijenhuis tensor of $J$. The next Lemma provides an explicit
construction of the  Nijenhuis tensor in terms of the spin module.

\begin{lemma} The operator $N_J\colon =\frac{1}{2}\left(\left[\mathtt{c}(J),
[\mathtt{c}(J),\md]\right]+\md\right)$
is a point-wisely acting map, such that for any $\psi,
\psi'\in\Gamma(E)$ the following identity holds:
\beqn [\mathtt{c}(\psi'),[\mathtt{c}(\psi), N_J]]= \mathtt{c}\left(J\left( [J\psi , \psi']+
[\psi, J\psi']\right)+
[\psi ,\psi']-[J\psi ,J\psi']\right)\;.
\eeq
\end{lemma}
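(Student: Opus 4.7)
Plan:

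The statement has two parts, point-wiseness of $N_J$ and the Nijenhuis-type identity, both of which I would attack by pure computation in the bundle of Clifford super-algebras acting on $\Omega^\bullet(M)$. The only non-trivial inputs are the derived-bracket formula of Proposition~\ref{derived}, taken in the untwisted case $H=0$,
\begin{equation*}
\mathtt{c}([\psi_1,\psi_2]) \;=\; \bigl[[\mathtt{c}(\psi_1),\md],\mathtt{c}(\psi_2)\bigr],
\end{equation*}
together with $[j,\mathtt{c}(\psi)] = \mathtt{c}(J\psi)$ (with $j := \mathtt{c}(J)$) and its immediate consequence $[j,[j,\mathtt{c}(\psi)]] = \mathtt{c}(J^2\psi) = -\mathtt{c}(\psi)$.

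For point-wiseness I would check $[N_J,f] = 0$ for every $f \in C^\infty(M)$. Since $\mathtt{c}(\psi)$ and $j$ commute with multiplication by functions, while $[\md,f] = \mathtt{c}(\md f)$, graded Jacobi collapses $[N_J,f]$ to $\tfrac{1}{2}\bigl([j,[j,\mathtt{c}(\md f)]] + \mathtt{c}(\md f)\bigr)$, which vanishes thanks to $J^2 = -\mathrm{id}$ applied to the 1-form $\md f$.

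For the main identity, abbreviate $a := \mathtt{c}(\psi)$, $a' := \mathtt{c}(\psi')$, $\tilde a := [j,a] = \mathtt{c}(J\psi)$, $\tilde a' := [j,a'] = \mathtt{c}(J\psi')$, and note $[j,\tilde a] = -a$. Using graded Jacobi $[A,[B,C]] = [[A,B],C] + (-1)^{|A||B|}[B,[A,C]]$ with the parities ``$a,a',\md$ odd, $j$ even'', a first round reduces $[a,[j,[j,\md]]]$ to $-[a,\md] - 2[j,[\tilde a,\md]] + [j,[j,[a,\md]]]$, so that
\begin{equation*}
[a,N_J] \;=\; -[j,[\tilde a,\md]] \,+\, \tfrac{1}{2}\,[j,[j,[a,\md]]].
\end{equation*}
I then apply $[a',\cdot]$ and iterate graded Jacobi: every expression $[a_i,[a_j,\md]]$ that appears at the innermost position is rewritten as $-\mathtt{c}([\psi_j,\psi_i])$ by Proposition~\ref{derived} (using that $[a_i,[a_j,\md]] = -[[a_j,\md],a_i]$ from the parities), and each surrounding $[j,\cdot]$ converts the resulting $\mathtt{c}$ of a Courant bracket into $\mathtt{c}$ of its image under $J$. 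Collecting the four kinds of terms that survive---$\mathtt{c}([\psi,\psi'])$, $\mathtt{c}(J[\psi,J\psi'])$, $\mathtt{c}(J[J\psi,\psi'])$, $\mathtt{c}([J\psi,J\psi'])$---with the right signs yields exactly the claimed right-hand side.

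The main obstacle is purely the sign bookkeeping: with three odd generators ($a,a',\md$) and one even generator ($j$), each of the many applications of graded Jacobi must be tracked carefully, and the nested brackets $[j,\tilde a']$, $[j,\tilde a]$ must be contracted to $-a'$, $-a$ at the right moment. The only substantive inputs are $J^2 = -\mathrm{id}$ (used once in the point-wise step and repeatedly to contract nested $[j,[j,-]]$) and Proposition~\ref{derived}; no integrability of $J$ is invoked, which is precisely what makes $N_J$ the natural Clifford-algebraic incarnation of the generalized Nijenhuis tensor whose vanishing enters Proposition~\ref{prop:Nij}.
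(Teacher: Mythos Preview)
Your proposal is correct and follows essentially the same route as the paper's proof: both arguments reduce everything to repeated applications of graded Jacobi together with $[j,\mathtt{c}(\psi)]=\mathtt{c}(J\psi)$, $J^2=-\mathrm{id}$, and the derived-bracket formula from Proposition~\ref{derived}. The paper merely packages the same computation using the shorthands $ad_J(\cdot)=[\mathtt{c}(J),\cdot]$ and $L_\psi=[\md,\mathtt{c}(\psi)]$, arriving at the identical intermediate formula $[\mathtt{c}(\psi),ad_J^2(\md)+\md]=-2\,ad_J(L_{J\psi})+ad_J^2(L_\psi)$, which is your $[a,N_J]=-[j,[\tilde a,\md]]+\tfrac{1}{2}[j,[j,[a,\md]]]$ up to the factor $2$.
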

\begin{proof} The first part of the Lemma follows from the identity
\beqn
2[N_J, f]= \left[\left[\mathtt{c}(J), [\mathtt{c}(J),\md]\right],f\right]+
[\md ,f]=\mathtt{c}(J^2 (\md f)+\md f)=0\,,
\eeq
which holds for each $f\in C^\infty (M)$. The second part requires a
simple computation which essentially based upon the derived formula
of the Courant bracket (\ref{Courant-derived}) when $H=0$.
Indeed, let us write the Nijenhuis tensor
in the
form $N_J=\frac{1}{2}(ad_J^2 (\md)+\md)\,$ where $ad_J (a)\colon =[\gamma_J, a]$ for any
$a$. It is clear that
\beqn [\mathtt{c}(\psi), ad_J^2 (a)]=-[\mathtt{c}(\psi), a]-2ad_J \left([\mathtt{c}(J\psi), a]\right)
+ad_J^2 \left([\mathtt{c}(\psi), a]\right)\,.
\eeq
Thus, taking into account that $[ad_J, \mathtt{c}(\psi)]=\mathtt{c}(J\psi)$ and $J^2\psi=-\psi$,
we obtain \beqn [\mathtt{c}(\psi), ad_J^2 (\md)+\md]=-2ad_J \left(L_{J\psi}\right)
+ad_J^2 \left(L_\psi\right)\,,\eeq where $L_\psi\colon =[\md, \mathtt{c}(\psi)]$, and
finally
\beqn
[\mathtt{c}(\psi'), [\mathtt{c}(\psi), ad_J^2 (\md)+\md]]= -2[\mathtt{c}(\psi'),ad_J \left(L_{J\psi}\right)]
+[\mathtt{c}(\psi'), ad_J^2 \left(L_\psi\right)]=\\ \nonumber
2[\mathtt{c}(J\psi'), L_{J\psi}] -
2ad_J\left( [\mathtt{c}(\psi'), L_{J\psi}]\right) -
2ad_J\left( [\mathtt{c}(J\psi'), L_{\psi}]\right)
 -[\mathtt{c}(\psi'), L_\psi]
+ \\ \nonumber + ad_J^2 \left([\mathtt{c}(\psi'), L_\psi]\right)=2\mathtt{c}\left(J\left( [J\psi , \psi']+
[\psi, J\psi']\right)+
[\psi ,\psi']-[J\psi ,J\psi']\right)\;,
\eeq
which completes the proof.

\end{proof}

$\square$

\noindent As a corollary, the integrability condition (\ref{Nijenhuis-zero})
admits the following equivalent form:
\beq\label{Nijenhuis-zero2}
\left[\mathtt{c}(J),
[\mathtt{c}(J),\md]\right]+\md =0\,.
\eeq

\noindent Let us remark that, if we decompose the l.h.s. of (\ref{Nijenhuis-zero2})
into the sum of homogeneous
components with respect to the natural grading in $\Omega^* (M)$,
the lowest
degree term will give us $[\i_\Pi, [\i_\Pi,\md]=0$; this is equal to $\i_{[\Pi,\Pi]}=0$,
where $[\Pi,\Pi]$ is the Schouten-Nijenhuis bracket of $\Pi$. Thus the vanishing of $N_J$ implies,
in particular, that $\Pi$ is a Poisson bivector. The explicit derivation of the remaining
homogeneous terms gives a complete set of compatibility conditions
for $I$, $\Pi$, and $\omega$ (cf.~\cite{Crainic_brackets} for the details).

\begin{proposition} {\mbox{}\vskip 2mm}
 \begin{enumerate}
\item If $\omega =0$ then $I$ is a complex structure and $\Pi$
is a real part of a complex Poisson bivector which is holomorphic with respect $I$;
\item If, on the other hand, $I=0$ we reobtain
an ordinary symplectic structure from a generalized complex one, such that
 $\Pi$ then is the respective induced Poisson structure.
 \end{enumerate}
\end{proposition}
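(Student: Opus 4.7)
The plan is to derive each statement from the combined effect of the algebraic conditions (\ref{conditions-for-J}) and the integrability condition (\ref{Nijenhuis-zero2}), exploiting that in each case the matrix form (\ref{form-of-J}) of $J$ collapses drastically. The common device is the explicit spinor action
$\mathtt{c}(J)(\alpha_1\wedge\cdots\wedge\alpha_p)=-\sum_i\alpha_1\wedge\cdots\wedge I^*(\alpha_i)\wedge\cdots\wedge\alpha_p+(\iota_\Pi+\Omega\wedge)(\alpha_1\wedge\cdots\wedge\alpha_p)$
stated just above the proposition, together with the graded decomposition of (\ref{Nijenhuis-zero2}) by the number of $\iota_\Pi$-factors and $\Omega\wedge$-factors appearing in each term (in the $\Omega=0$ case this coincides with the $I$-bidegree decomposition on $\Omega^{\bullet}(M)$).

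For $I=0$, the system (\ref{conditions-for-J}) immediately reduces to $\Pi\Omega=-1_{TM}$ and $\Omega\Pi=-1_{T^*M}$, the two mixed equations being vacuous. Thus $\Omega\in\Omega^2(M)$ is non-degenerate with $\Pi=-\Omega^{-1}$. The operator $\mathtt{c}(J)=\iota_\Pi+\Omega\wedge$ then acts in a particularly simple way: on a function $f$ one has $\mathtt{c}(J)f=f\Omega$ and hence $[\mathtt{c}(J),d]f=-f\,d\Omega$. Composing once more with $\mathtt{c}(J)$, the terms with two $\Omega\wedge$-factors or two $\iota_\Pi$-factors applied to $f$ either vanish trivially or cancel against the $+d$ in (\ref{Nijenhuis-zero2}) by means of the algebraic relation $\Pi\Omega=-1$, so that the only surviving contribution is proportional to $d\Omega$. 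Consequently $d\Omega=0$; thus $\Omega$ is a symplectic form and $\Pi=-\Omega^{-1}$ is the associated Poisson bivector, as claimed.

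For $\Omega=0$, the algebraic part of (\ref{conditions-for-J}) becomes $I^2=-1_{TM}$ and $I\Pi=\Pi I^*$. The first makes $I$ an almost complex structure. The second is equivalent, via the $\pm i$ eigenspace decomposition of $I$ on $T^cM$, to $\Pi(T^{*(1,0)}M)\subset T^{1,0}M$, i.e., to $\Pi$ having no $(1,1)$-component in the splitting $\Lambda^2 T^cM=\Lambda^{2,0}\oplus\Lambda^{1,1}\oplus\Lambda^{0,2}$; equivalently, the complex bivector $\Pi_c:=\frac{1}{2}(\Pi-iI\Pi)$ is of pure type $(2,0)$ with $\Pi=2\,\mathrm{Re}(\Pi_c)$. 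Substituting $\mathtt{c}(J)=-I^*+\iota_\Pi$ into (\ref{Nijenhuis-zero2}) and sorting the resulting operator identity by the number of $\iota_\Pi$-factors produces three independent requirements: the pure $-I^*$ piece is the classical Newlander--Nirenberg condition, giving integrability of $I$; the pure $\iota_\Pi^2$ piece gives $[\Pi,\Pi]=0$ (as already noted in the text after the preceding Lemma); and the mixed $I^*$-$\iota_\Pi$ piece is a first-order differential compatibility between $\Pi$ and $I$ which, combined with the bidegree decomposition of $[\Pi,\Pi]=0$, amounts precisely to $\bar\partial\Pi_c=0$. Hence $\Pi_c$ is a holomorphic Poisson bivector with respect to $I$ and $\Pi=2\,\mathrm{Re}(\Pi_c)$.

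The hard part is the bookkeeping in the second case: the single graded identity (\ref{Nijenhuis-zero2}) has to be separated cleanly into the four geometric statements (almost complex, Newlander--Nirenberg, Poisson, holomorphicity of $\Pi_c$). Rather than grinding through spinorial commutators directly, the most efficient route is to pass to the equivalent form (\ref{Nijenhuis-zero}) and test it on pairs drawn from the four canonical types $\xi^{1,0},\xi^{0,1}\in T^cM$ and $\alpha^{1,0},\alpha^{0,1}\in T^{*c}M$; each type-pair isolates one of the conditions once one invokes the derived-bracket formula of Proposition~\ref{derived} to rewrite the commutators. This is essentially the viewpoint of \cite{Crainic_brackets} invoked at the end of the preceding Lemma, on which I would rely for the full bidegree accounting.
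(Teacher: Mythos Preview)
Your treatment of the $\Omega=0$ case follows the same strategy as the paper: both decompose the operator identity (\ref{Nijenhuis-zero2}) into homogeneous pieces (your ``number of $\iota_\Pi$-factors'' is exactly the paper's grading by form degree, since $\iota_\Pi$ has degree $-2$ and the $I^*$-part degree $0$). The paper extracts the Nijenhuis condition for $I$ from the degree-$0$ piece and $[\bar\partial,\iota_{\Pi_h}]=0$ directly from the degree-$1$ piece $[I^*,\delta_\Pi]+[\md_I,\iota_\Pi]=0$ via the Hodge decomposition, without invoking $[\Pi,\Pi]=0$ as an auxiliary input; your route through (\ref{Nijenhuis-zero}) and type-pair testing, with an appeal to \cite{Crainic_brackets}, is equivalent but less self-contained.

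For the $I=0$ case your route diverges from the paper's and is not fully justified as written. The paper does not test $N_J$ on functions; instead it invokes the remark preceding the proposition that the lowest-degree component of (\ref{Nijenhuis-zero2}) already forces $[\Pi,\Pi]=0$, so $\Pi$ is a non-degenerate Poisson bivector and hence $\omega=-\Pi^{-1}$ is automatically closed. The converse is then checked by a clean operator computation using $[\iota_\Pi,\omega\wedge]=\bar n - m$. Your argument, by contrast, asserts that on a function ``the only surviving contribution is proportional to $d\Omega$'', but this skips over the cross terms: the second bracket $[\mathtt{c}(J),[\mathtt{c}(J),\md]]f$ involves $[\mathtt{c}(J),\md](f\Omega)$, which produces pieces like $\iota_\Pi(\md f\wedge\Omega)$ that are not obviously zero and must be shown to combine with $\md f$ via $\Pi\Omega=-1$. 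This can be made to work, but the bookkeeping you wave away is precisely where the algebraic identity $[\iota_\Pi,\omega\wedge]=\bar n-m$ enters in the paper's argument; without it your claim remains a heuristic rather than a proof.
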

\begin{proof}

\vskip 1mm\noindent 1. The algebraic conditions (\ref{conditions-for-J}), imposed
on $J$,  will give us $I^2=-1_{TM}$ and $I\Pi=\Pi I^*$. The first identity
implies that $I$ is an almost complex structure in the usual sense, therefore the complexified
tangent and cotandent bundles admit the usual decomposition into
 the sum of $(1,0)$ and $(0,1)$ parts with respect to $I$:
\beqn
T(M)^c=T^{(1,0)}\oplus T^{(0,1)}\,,\hspace{3mm} T^*(M)^c={T^{(1,0)}}^*
\oplus {T^{(0,1)}}^*\,.
\eeq
The second commutation relation implies that
the complexification of $\Pi$ belongs to the direct sum of
$Hom ({T^{(1,0)}}^*, T^{(1,0)})$ and $Hom ({T^{(0,1)}}^*, T^{(0,1)})$.
Taking into
account that $\Pi$ is real, we get
$\Pi=\Pi_h+\overline{\Pi}_h$, where $\Pi_h\in\Gamma (\Lambda^{2}T^{(1,0)}M)$.
The degree $0$ homogeneous
component of the integrability condition (\ref{Nijenhuis-zero2})
asserts the vanishing of the Nijenhuis tensor of $I$, which means that $I$ is a complex structure
on $M$ in the usual sense. The degree $1$ homogeneous component of  (\ref{Nijenhuis-zero2})
gives us
\beq\label{compatibility_I_and_Poisson}
[I^*, \de_\Pi]+[\md_I, \i_\Pi]=0\,,
\eeq
where $\de_\Pi\colon =[\md , \Pi]$ and $\md_I \colon =[\md , I^*]$. The Hodge decomposition
of differential forms with respect to $I$ allows to decompose $\md$ into the sum
of $(1,0)$ and $(0,1)$ parts,  $\partial +\overline{\partial}$, and $\i_\Pi$ into
the sum of $(2,0)$ and $(0,2)$ parts (the contractions with
$\overline\Pi_h$ and $\Pi_h$, correspondingly). Similarly, we have the decomposition
of $\md_I$:
$\md_I =i(\partial -
\overline{\partial})$.
Let us look at the components of the l.h.s. of (\ref{compatibility_I_and_Poisson})
which are homogeneous
with respect to the Hodge decomposition. The identity  (\ref{compatibility_I_and_Poisson})
gives us only two (dependent) conditions, the first of which is conjugated to the second one:
$[\overline\partial, \i_{\Pi_h}]=0$ and $[\partial, \i_{\overline\Pi_h}]=0$.
This holds if and only if $\Pi_h$ is
a holomorphic Poisson bivector.

\vskip 2mm\noindent 2. Let $I=0$, then the algebraic relations (\ref{conditions-for-J})
gives us the only independent condition $\Pi\omega =-1_{TM}$, which means that
$\omega$ is a non-degenerate 2-form and $\Pi$ is
 the corresponding bivector field (up to
a sign convention). As it was mentioned above,
once $J$ is integrable, $\Pi$ is necessarily Poisson. Thus $\omega$
has to be a symplectic form. Conversely, provided $\omega$ is a symplectic form,
we define $\Pi$ such that $\Pi\omega=-1_{TM}$. Now we need to check
the integrability condition (\ref{Nijenhuis-zero2}). This gives
us the only identity to verify:
$[\de_\Pi, \omega\wedge]+\md=0$. Since $M$ is a symplectic manifold, its dimension is even,
say $\dim M =2m$. It is easy to check that
$[\i_\Pi, \omega\wedge] =\bar{n}-m$, where
$\bar{n}$ is an operator which counts the degree of a differential
form and $m$ is simply the multiplication on $m$.
The identity $\md\omega=0$ is obviously equivalent
to the commutator relation $[\md, \omega\wedge]=0$. Therefore
\beqn
[\de_\Pi, \omega\wedge]+\md =[[\i_\Pi,\md], \omega \wedge]=-[\md,[\i_\Pi, \omega\wedge]]+\md =
-[\md,\bar{n}-m ]+\md =-\md +\md=0\,.
\eeq

\end{proof}

$\square$



\section{Algebroids as Q-manifolds}
\label{Qman}

Any Lie algebra $\g$ gives rise to a complex $(\Lambda^{\cdot} \g^*,
\md_{CE})$, where $\md_{CE}$ denotes the Chevalley-Eilenberg
differential, $\md_{CE}\a(\xi,\xi')=-\a([\xi,\xi'])$ for
$\a \in \g^*$ (and $\xi,\xi' \in \g$)
etc. By definition an element $\a \in \g^*$ is a linear
function on $\g$. Correspondingly, we may identify
$\Lambda^{\cdot} \g^*$ with the space of functions on $\g$ if we
declare the multiplication of two elements $\a, \a'\in \g^*$,
regarded as two functions, to be anticommuting,
$\a \a' = - \a' \a$. This modified law of pointwise multiplication
of functions is denoted by an additional $\Pi$ (indicating parity
reversion), $\Lambda^{\cdot} \g^* \cong C^\infty(\Pi \g)$. In fact,
in what follows it will be important to not only consider a
$\Z_2$-grading, but a $\Z$-grading, inducing the $\Z_2$-grading.
Thus we declare elements of $\g^*$ to have degree 1, and then
$$\Lambda^{\cdot} \g^* \cong C^\infty( \g[1]), $$
the bracket indicating that the canonical degree of the before
mentioned vector space (which in the case of $\g$ is zero)
is shifted by minus
one. Given a basis $e_a$ of elements in $\g$, its
dual basis $\theta^a$ becomes a set of coordinates on  $\g[1]$,
any homogeneous
element $ \omega
\in C^\infty( \g[1])$ can be written as $\omega = \frac{1}{p!}
\omega_{a_1 \ldots a_p} \theta^{a_1} \ldots \theta^{a_p}$, and
the Chevalley-Eilenberg differential
becomes a vector field of degree $+1$ (it
raises the degree of homogeneity of any element by one), which
we will denote by $Q$,
\beq Q = - \frac{1}{2}  \theta^a\theta^bC^c_{ab}
\frac{\partial}{\partial \theta^c} \, , \label{CEQ}
\eeq
where $C^c_{ab}$ denote the structure constants, $[e_a,
e_b] =
C^c_{ab}e_c$. By construction, $Q^2=0$, i.e.~the vector field $Q$
is \emph{homological}.

This construction generalizes to the case where the Lie algebra acts
on a manifold $M$ by vector fields,  $\xi \mapsto \rho(\xi) \in
\Gamma(TM)$. Denoting $\rho(e_a)$ by $\rho_a$, the following
vector field
\beq Q = \theta^a \rho_a - \frac{1}{2}  \theta^a\theta^bC^c_{ab}
\frac{\partial}{\partial \theta^c} \, , \label{actionQ}
\eeq
on $M \times \g[1]$ is again homological. The corresponding cohomology
in degree zero (functions containing no $\theta$s) is obviously
isomorphic to the space of functions on $M$ invariant under the flow
generated by the Lie algebra $\g$. (\ref{actionQ}) and its cohomology may be viewed as
the ``BRST description'' of the space of ``gauge invariant'' functions on
$M$.

In the above $\rho_a$ was a vector field on $M$, so in some local
coordinates $x^i$ of $M$ one has $\rho_a = \rho^i_a \partial_i$, where
$\rho^i_a$ are (local) functions on $M$ (and $\partial_i \equiv
\partial/\partial x^i$). We may now consider to
drop the restriction that $C^c_{ab}$ is constant in  (\ref{actionQ})
but instead also a function on $M$
and pose the question under what conditions on  $\rho^i_a$ and
$C^c_{ab}$ the corresponding vector field  (\ref{actionQ}) squares to
zero. In this context $Q$ is a vector field on a \emph{graded
manifold} ${\cal{M}}$ \footnote{In this article graded manifolds will always signify $\Z$-graded
manifolds or even their special case of $\N_0$-graded manifolds (cf.
also below in the text). The $\Z$-grading induces naturally a
$\Z_2$-grading, which governs the signs of the algebra of
``functions'' on the graded manifold. By the forgetful functor a
$\Z$-graded manifold thus also becomes a particular supermanifold.}
 where the
structure sheaf has local generators $x^i$ and $\theta^a$ of degree
zero and one, respectively; in this case we call  ${\cal{M}}$ of
degree one.  Note that since transition functions are
by definition required to be degree preserving, a change of chart in
${\cal{M}}$ requires $\widetilde \theta^a = M^a_b \theta^b$, with
$M^a_b$ a local function on the \emph{body} $M$ of ${\cal{M}}$;
correspondingly, a graded manifold of degree one is isomorphic to a vector
bundle $E \to M$, $\theta_a$ corresponding to a frame of local
sections. Indicating that fiber--linear coordinates on $E$ have degree
one in the superlanguage, one writes ${\cal M} \cong E[1]$ in this case.

\noindent A general graded manifold is equipped with an Euler
vector field $\e$, such that the grading of the space of functions
$\cF (\cM)$ corresponds to the eigenvalue-decomposition of
$\e$: \beq \cF^k (\cM)&=& \{f\in \cF (\cM) \;|\; \e f=kf\}\;. \eeq
One may consider $k \in \mathbb{Z}$, but we will restrict ourselves to
non-negative integers $k$ except if explicitly stated otherwise.

\noindent Apparently, in the degree one case the extension  of
the algebra of functions on the body, $C^\infty (M) \cong \cF^0(\cM)$, to
the algebra of
functions $\cF(\cM)$ is generated by $\cF^1 (\cM)$, which can be thought of
as the space of sections of a vector bundle (since $\cF^1 (\cM)$
has to be a locally free module over $\cF^0 (\cM)$). If
such a graded manifold is equipped with a degree plus one
homological vector field $Q$, the most general ansatz of which has
the form of (\ref{actionQ}) with now $C^a_{bc}$ being permitted to be functions
on $M$, i.e.~if one considers what is called
a \emph{Q-manifold of degree one} \cite{Schwarz_semiclassical}, the vector
bundle $E \to M$ becomes equipped with the structure of a Lie
algebroid: This may be regarded as a reformulation of
Prop.~\ref{prop:Lie} by the simple identification $C^\infty({\cal
M}) \cong \OE$. Alternatively, one may check directly that $Q^2=0$
implies in homogeneity degree two the morphism property of $\rho$,
Lemma \ref{lem:morphism}, and in degree three the Jacobi condition
(\ref{Loday}), both expressed for the local frame $e_a$. Note
that in this picture the Leibniz rule (\ref{Leibniz}) follows only
from a change of coordinates on $\cal M$, requiring $Q^2=0$ to be
valid in all possible frames; on the other hand, the bracket
becomes automatically antisymmetric when defined by means of
$[e_a,e_b] := C^c_{ab} e_c$, again to hold in any
frame (which turns out to be consistent with the Leibniz rule
(\ref{Leibniz})).\footnote{For a detailed discussion of such type of arguments, which, at least in a slightly more general context, turn out to be more tricky than one may expect at first sight, cf.~\cite{Hansen-Strobl, GruetzmannStrobl}.}

If in a Lie algebroid there exists a frame such that the homological vector field takes the form 
with \emph{constant} structure functions $C^c_{ab}$, then this algebroid is called an \emph{action Lie algebroid}. The bundle $E$ is then isomorphic to $M \times \g$ for some Lie algebra $\g$ acting on $M$. Certainly, in general a Lie algebroid is not of this form and there does not exist a frame, not even locally, such that the structure functions would become constants.

On the other hand one may consider a graded manifold $\CM$
that carries a \emph{homogeneous} symplectic form $\o$ of degree $n$, i.e. \beq L_\e \omega = n \omega \, , \label{Pn} \eeq in which case $\CM$ is called a \emph{P-manifold} of the degree $n$.  Note that the non-degeneracy of the symplectic form then requires that also $\CM$ has degree $n$ (as mentioned, we do not consider graded manifolds with generators of negative degrees here, in which case this statement would no more be true). In the case of $n=1$, we already found that $\cM$ is canonically isomorphic to $E[1]$, $E$ a vector bundle over $M$. It is now easy to see that the P-structure restricts this further \cite{Roytenberg}, $\CM \cong T^*[1]M$,
$(x^i,\theta^a)\cong (x^i,p_i)$, equipped with the canonical
symplectic form $\omega = \md x^i \wedge \md p_i$. Indeed, since the symplectic form is of degree $n=1$, eq.~(\ref{Pn}) implies $\omega =\md\alpha$ with $\a =\imath_\e\omega$. Suppose,
$(x^i, \theta^a)$ are local coordinates of degree 0 and 1,
respectively. Taking into account that $\omega$ is of degree 1, we
immediately obtain that the expression of the symplectic form cannot contain $\md\theta^a\wedge\md\theta^b$, and since $\e = \theta^a \partial/\partial \theta^a$, $\a$ it has to be of the  form: $\a =\a_i^j (x) \theta_j \md x^i$. $\a$ provides a morphism $TM\to E^*$; nondegeneracy of
$\omega$ requires that this is an isomorphism, and thus $(\cM ,\omega)$
is isomorphic to $T^*[1]M$ together with the canonical symplectic
form. Note that functions on $T^*[1]M$ may be identified with
multivector fields. The odd Poisson bracket induced by $\omega$ is
then easily identified with the Schouten-Nijenhuis bracket.

A \emph{PQ-manifold} of degree $n \in \mathbb{N}_0$ is then
simultaneously a $Q$ and a $P$ manifold of degree $n$, such that
$Q$ preserves the symplectic form $\omega$, in which case it turns out
to be even Hamiltonian (cf.~Lemma 2.2 in \cite{Roytenberg}): $Q= \{
\CQ, \cdot \}$  for some function $\CQ$ of degree $n+1$ (since the Poisson bracket decreases the degree by $n$),
$Q^2=0$ reducing to $\{\CQ , \CQ \}=0$.

In general, we use the following sign conventions: the algebra
of differential forms on a graded manifold $\cM$ is defined as $C^\infty (T[1]\cM)$. Then the
degree of $\md h$ is $| h| +1$, where $| h|$ is the degree of $h\in C^\infty (\cM)$.
The Hamiltonian function of a Hamiltonian vector field is obtained from the relation
$\imath_{X_h}\omega =(-1)^{| h| +1}\md h$. The advantage of such a sign convention is that,
if we produce a Poisson bracket by the formula $\{f,h\}=X_f (h)$, then the Lie algebra morphism
property will hold:
$[X_f, X_h]=X_{\{f,h\}}$. For a symplectic structure $\omega$ of degree $n$ written in local Darboux coordinates $\omega =\sum_\alpha \md p_\a\wedge \md q^\a$ the
local Poisson bracket follow to be $\{p_\a, q^\b\}=(-1)^{n| q^\a|}\de_\a^\b$.

Now it is easy to see
that a PQ-manifold of degree one is in one-to-one correspondence with
a Poisson manifold $M$: $\CM \cong T^*[1]M$ and a degree two function
has the form $\CQ \equiv \Pi = \frac{1}{2}\Pi^{ij} p_i p_j$,
corresponding to a bivector field on $M$. The condition that the
Schouten-Nijenhuis bracket of $\Pi$ with itself vanishes, $\{\CQ , \CQ
\}=0$,  is just one way of
expressing the Jacobi condition for the Poisson bracket of functions
(cf.~also examples \ref{Poisson1} and \ref{ex1}). The usual Poisson
bracket $\{ \cdot , \cdot \}_M$ between functions on $M$ is reobtained
here as a derived bracket: $\{ f , g \}_M = \{ \{ f , -\Pi \} , g \}$
for any $f,g \in C^\infty(M) \subset   C^\infty(\CM)$, since the right
hand side indeed yields $\Pi^{ij} f,_i g,_j$ in local coordinates.

We quote a likewise result for the degree two case without proof,
which is due to Roytenberg \cite{Roytenberg}:

\begin{theorem} \label{Roytenberg} A P-manifold of degree two is in  one-to-one
correspondence with a pseudo-Euclidean vector bundle $(E,
\Eg)$.  A PQ-manifold of degree two is in  one-to-one
correspondence with a Courant algebroid $(E,
\Eg,\rho,[\cdot,\cdot])$.
\end{theorem}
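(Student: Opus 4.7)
The plan for the first statement is to show that any graded manifold $\cM$ of degree $2$ is locally generated by coordinates $x^i$, $\xi^a$, and $p_\mu$ of degrees $0$, $1$, $2$ respectively, and that a symplectic form $\omega$ of degree $2$ pins this data down considerably. Running the same Euler-field argument as in the degree-one case gives $\omega = \md \alpha$ with $\alpha = \imath_\e \omega$, now of degree $2$. Inspecting the only monomials of the right total degree and requiring non-degeneracy of $\omega$ forces, after a suitable (constant-$k$) choice of $\xi$-frame, the Darboux presentation
\beq \omega = \md x^i \wedge \md p_i + \tfrac{1}{2}\, k_{ab}\, \md \xi^a \wedge \md \xi^b \, , \eeq
with $k_{ab}$ a non-degenerate symmetric form. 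In particular the number of $p$-coordinates must match $\dim M$, so (after relabelling) they are conjugate to the $x^i$'s. The $\xi^a$'s, being fiber-linear in degree $1$ (transition functions preserve degree and hence are linear on this sector), assemble into a vector bundle $E[1] \to M$, and $k_{ab}$ transforms under $\xi$-frame changes as a fiber metric $\Eg$ on $E$. Conversely, $(E,\Eg)$ determines $\cM$ uniquely up to isomorphism: one takes $\cF(\cM)$ to be generated by $C^\infty(M)$, by $\Gamma(E)$ placed in degree $1$ and bracketed by $\Eg$, and by $\Gamma(TM)$ placed in degree $2$ and paired canonically with $T^*M$.

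For the second statement, preservation of $\omega$ by $Q$ makes $Q$ Hamiltonian, $Q = \{\cQ,\cdot\}$ with $\cQ\in\cF^3(\cM)$, and $Q^2 = 0$ becomes the master equation $\{\cQ,\cQ\} = 0$. Since $\cF^3(\cM)$ is spanned locally by $\xi^a p_i$ and $\xi^a\xi^b\xi^c$, one necessarily has
\beq \cQ = \rho^i_{a}(x)\, \xi^a\, p_i + \tfrac{1}{6}\, T_{abc}(x)\, \xi^a \xi^b \xi^c \, , \label{sketchQ} \eeq
with $T_{abc}$ totally antisymmetric. I would then read off the Courant-algebroid structure by identifying $\Gamma(E)\cong\cF^1(\cM)$ via $\Eg$, defining the anchor through $\rho(e_a) := \rho^i_a\,\partial_i$, keeping the fiber metric $k_{ab}$ from the first part, and declaring the Dorfman-type bracket to be the derived bracket
\beq [e,e'] \;\longleftrightarrow\; \{\{\widetilde e , \cQ \}, \widetilde e'\}\, , \eeq
where $\widetilde e,\widetilde e' \in \cF^1(\cM)$ represent $e,e'$. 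Degree counting confirms the output is again of degree $1$, and the Leibniz rule (\ref{Leibniz}) for the first slot, as well as the symmetry relation $([\psi,\psi],\psi') = \tfrac{1}{2}\rho(\psi')(\psi,\psi)$ of Definition \ref{Cour}, follow automatically from the graded Leibniz/Jacobi identities for the Poisson bracket on $\cM$, precisely in the spirit of (\ref{rho-derived})--(\ref{bracket-derived}).

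The main obstacle will be to verify that $\{\cQ,\cQ\} = 0$ is equivalent to the remaining Courant-algebroid axioms. The plan is to expand $\{\cQ,\cQ\}$ as a polynomial in the momenta $p_i$ and match each homogeneous component to an axiom: the $p^2$-component yields the consequence $\rho\circ\rho^{*} = 0$; the $p^1$-component packages both the morphism property of $\rho$ (Lemma \ref{lem:morphism}) and the invariance of $\Eg$, eq.~(\ref{ad}); and the $p^0$-component reproduces the Loday identity (\ref{Loday}) for the derived bracket. Conversely, starting from $(E,\Eg,\rho,[\cdot,\cdot])$, formula (\ref{sketchQ}) defines $\cQ$ locally in any trivialization of $E$; the delicate point, which is really the heart of the theorem, is checking that the resulting $\cQ$ patches into a globally defined degree-$3$ function on $\cM$ and hence yields a well-defined homological $Q$. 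This global compatibility is exactly what is enforced by the tensorial/covariant character of $\rho$ and by the axioms just used to guarantee $\{\cQ,\cQ\}=0$, thereby closing the one-to-one correspondence of Theorem \ref{Roytenberg}.
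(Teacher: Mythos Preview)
The paper does not actually prove this theorem: it is explicitly ``quoted \ldots\ without proof'' and attributed to Roytenberg, followed only by illustrative remarks recording the Darboux form (\ref{55}), the shape (\ref{ThetaCour}) of $\CQ$, and the derived-bracket formulas (\ref{anchor})--(\ref{bracket}) together with $\Eg(\psi,\psi')=\{\psi,\psi'\}$. Your sketch is entirely in line with those remarks and with Roytenberg's original argument, so in that sense your approach is the same as (indeed more detailed than) what the paper offers.

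Two points are worth tightening. First, your reconstruction of $\cM$ from $(E,\Eg)$ as ``generated by $C^\infty(M)$, $\Gamma(E)$ in degree $1$, and $\Gamma(TM)$ in degree $2$'' is too loose to serve as a definition: the degree-$2$ coordinates $p_i$ are \emph{not} globally sections of $TM$. Under a change of $\Eg$-orthonormal frame $\xi^a\mapsto M^a{}_b(x)\,\xi^b$ the cross terms $\partial_i M^a{}_b\,\md x^i\,\xi^b$ in $\md\xi'^a$ force a compensating shift $p_i\mapsto p_i+(\ldots)\xi\xi$ in order to preserve $\omega$; this is precisely why the degree-$2$ P-manifold is a nontrivial bundle over $E[1]$ (Roytenberg's minimal symplectic realisation) rather than a product. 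Once this global description is in place, your later worry about patching $\CQ$ largely dissolves: the transformation law for $p_i$ is exactly what makes (\ref{ThetaCour}) coordinate-independent. Second, a small slip: the Leibniz rule (\ref{Leibniz}) concerns multiplication of the \emph{second} argument by a function, not the first; this is what drops out of the graded Leibniz property of $\{\cdot,\cdot\}$ in the outer slot of $\{\{\psi,\CQ\},\,\cdot\,\}$, as the paper's remarks after (\ref{bracket}) also note.
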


We add some remarks on this theorem for illustration. In appropriate Darboux-like
coordinates on
$\CM$, the symplectic form reads \beq
\omega = \md x^i \wedge \md p_i +
\frac{1}{2}\kappa_{ab} \md \theta^a \wedge \md \theta^b \label{55}
\eeq
where $(x^i, \theta^a,p_i)$ are coordinates of degree zero, one, and
two, respectively, the vector bundle $E$ corresponds to the graded
submanifold spanned by $(x^i, \theta^a)$, and
the constants
$\kappa_{ab}$ correspond to the fiber metric $\Eg$ evaluated in some
orthonormal frame $\theta_a$.

The symplectic form is of degree two, so $\CQ$ is necessarily of
degree three and thus of the form
\beq \CQ =  \rho^i_a  \theta^a p_i - \frac{1}{6} C_{abc} \theta^a
\theta^b \theta^c \, , \label{ThetaCour}
\eeq
with coefficient functions $\rho^i_a$ and $C_{abc}$ depending on
$x^i$ only. Sections $\psi$ of $E$ may be identified with functions of
degree one on $\CM$, $\psi = \psi^a \theta_a$, $\theta_a \equiv \kappa_{ab}
\theta^b$, functions $f$ on $M$ with functions of degree zero on
$\CM$. The anchor $\rho$ and the Courant bracket $[ \cdot , \cdot ]$
now follow as derived brackets,
\begin{eqnarray}
\rho (\psi) f  &=& \{ \{ \psi , \CQ \} , f \} \label{anchor} \\
 {[}\psi,\psi'{]} &=&  \{ \{ \psi , \CQ \} , \psi' \} \, ,\label{bracket}
\end{eqnarray}
while the fiber metric comes from the normal Poisson bracket:
$\Eg(\psi,\psi') = \{ \psi, \psi' \}$. From (\ref{anchor}) one obtains
in particular $\rho(\theta_a) x^i = \rho^i_a$ and $[ \theta_a ,
\theta_b ] = C_{ab}{}^c \theta_c$, where the last index in $C_{abc}$
has been raised by means of the fiber metric and $\theta_a$ is
regarded as a local (orthonormal) frame in $E$.

Note that a derived bracket is in general not antisymmetric;
from (\ref{bracket}) one concludes
$[\psi,\psi] = \frac{1}{2} \{ \{ \psi , \psi \} , \CQ \} =
\frac{1}{2} \theta^a \rho_a^i \partial_i \Eg(\psi,\psi)$, which
reproduces the last axiom in
 definition \ref{Cour}. On the other
hand, evaluated in an orthonormal frame $\theta_a$ the bracket does
become antisymmetric, $[ \theta_a ,
\theta_b ] = C_{ab}{}^c \theta_c$.\footnote{The difference to the situation with
$C_{ab}^c$ in the vector field $Q$ of a Lie algebroid is that
there such an equation holds in all frames and also that here a change in
frame results in a different transformation property of $C_{abc}$
in (\ref{ThetaCour}) by lifting this transformation to a canonical
one on $\CM$ (which prescribes a particular induced transformation
property for $p_i$). Cf.~also \cite{Hansen-Strobl} for many more details on this issue.}
 It is also
obvious from (\ref{anchor}) and (\ref{bracket}) that the Courant
bracket satisfies the Leibniz property
 in the second entry of the bracket (while it does not in the first one
 due to the non-antisymmetry).

The study of higher degree PQ manifolds is certainly more involved. They, however, always give rise to Loday algebroids in the following way

\begin{prop} Given a PQ-manifold $\CM$ of degree $n>1$ the functions $\psi, \psi',\ldots$ of degree $n-1$ can be identified with sections in a vector bundle $E$. The formulas (\ref{anchor}) and (\ref{bracket}), where $\CQ$ denotes the Hamiltonian for the Q-structure, equip $E$ with the structure of a Loday algebroid.
\end{prop}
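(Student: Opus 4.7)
My plan is to reduce the proposition to a derived-bracket computation in the spirit of \cite{Kosmann}. First I would set up the vector bundle: since $\cF^{n-1}(\CM)$ is preserved by multiplication by elements of $\cF^0(\CM)=C^\infty(M)$, it is a $C^\infty(M)$-module, and for an $\N_0$-graded manifold this module is locally finitely generated and projective (by the Batchelor-type local trivialisation, immediate for $n=2$ and straightforward in the obvious way using generators of degrees $1,\ldots,n$ for general $n$); hence it equals $\Gamma(E)$ for a uniquely determined vector bundle $E\to M$.

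Second, I would verify that the formulas are well defined and land in the right places by degree counting. Since the symplectic form has degree $n$, the Poisson bracket has degree $-n$, and $\CQ$ has degree $n+1$, the operator $D\colon\psi\mapsto\{\psi,\CQ\}$ raises degree by one; hence $D\psi\in\cF^n(\CM)$ for $\psi\in\Gamma(E)=\cF^{n-1}(\CM)$, so that $\rho(\psi)f=\{D\psi,f\}$ lies in $\cF^0(\CM)=C^\infty(M)$ and $[\psi,\psi']=\{D\psi,\psi'\}$ lies in $\cF^{n-1}(\CM)=\Gamma(E)$. That $\rho\colon E\to TM$ is a well-defined bundle map, and that the Leibniz rule (\ref{Leibniz}) holds in the second slot, follow at once from the graded Leibniz property of the Poisson bracket together with $|f|=0$, which prevents any sign from appearing.

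The main content is the Loday identity (\ref{Loday}), which I would prove by the standard derived-bracket argument. The homological condition $\{\CQ,\CQ\}=0$ implies $D^2=0$; combined with the graded Jacobi identity for the Poisson bracket, this yields the key intermediate formula $\{D_{\psi_1},D_{\psi_2}\}=D_{[\psi_1,\psi_2]}$, obtained by applying Jacobi to the left-hand side and discarding the term proportional to $D^2\psi_1$. A second application of Jacobi, to expand $[\psi_1,[\psi_2,\psi_3]]=\{D_{\psi_1},\{D_{\psi_2},\psi_3\}\}$, followed by substitution of the identity just derived, reproduces exactly (\ref{Loday}).

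The two places where I expect to need some care are: (i) the sign bookkeeping, where under the paper's conventions---$\{p_\alpha,q^\beta\}=(-1)^{n|q^\alpha|}\delta_\alpha^\beta$ and Poisson bracket of degree $-n$---one must confirm that the derived bracket reproduces the \emph{unsigned} Loday identity of Definition~\ref{Lodayalgebroid}; and (ii) the vector-bundle claim on $\cF^{n-1}(\CM)$, which is immediate for $n=2$ but for $n\geq 3$ requires a brief appeal to the local trivialisation of $\N_0$-graded manifolds. Antisymmetry of the resulting bracket is \emph{not} expected in general, which is consistent with the proposition producing only a Loday (rather than Lie) algebroid.
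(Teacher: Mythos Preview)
Your approach is essentially the same as the paper's, and the derived-bracket verification of the Loday identity (which the paper leaves as an exercise) is carried out correctly.

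There is, however, one point you gloss over that the paper singles out as the substantive place where the hypothesis $n>1$ enters. You write that ``$\rho\colon E\to TM$ is a well-defined bundle map \ldots\ follow[s] at once from the graded Leibniz property of the Poisson bracket together with $|f|=0$.'' The Leibniz rule in the \emph{second} slot of $[\cdot,\cdot]$ and the derivation property of $\rho(\psi)$ on $C^\infty(M)$ do follow this way, but $C^\infty(M)$-linearity of $\rho$ in its \emph{first} argument does not. Expanding $\rho(g\psi)f=\{\{g\psi,\CQ\},f\}$ with the graded Leibniz rule produces, besides $g\,\rho(\psi)f$, the extra term $\{\{g,\CQ\},f\}\,\psi$ (the companion term $\{g,\CQ\}\{\psi,f\}$ already vanishes since $\{\psi,f\}$ has degree $-1$). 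The paper disposes of this anomalous term by observing that in Darboux coordinates $\CQ$, being of degree $n+1$, cannot contain any monomial quadratic in the degree-$n$ momenta $p_i$ conjugate to $x^i$; hence the double contraction with $g,f\in C^\infty(M)$ vanishes. Equivalently, one can argue by pure degree count: $\{\{g,\CQ\},f\}$ has degree $1-n<0$ for $n>1$ and so is zero on an $\N_0$-graded manifold. Either way, this is the step that fails for $n=1$ (where one only gets the Poisson bracket on $C^\infty(M)$, not a bundle map), and it deserves an explicit sentence in your write-up.
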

\begin{proof}
First we note that that the functions of any fixed degree $d$ on an $\N_0$--graded manifold are a \emph{locally free module} over the functions of degree zero, which in turn are isomorphic to $C^\infty(M)$. This implies the existence of vector bundles for any of those degrees $d$ over $M$. For $d=n-1$ we call this bundle $E$.

As remarked above, the vector field $Q$ compatible with the symplectic form is always Hamiltonian (even for $n \in \N$); since the (graded) canonical Poisson bracket has degree $-n$, the respective Hamiltonian $\CQ$ has to have degree $n+1$. By an elementary computation adding up respective degrees, one then finds that $\rho(\psi)$ indeed maps functions $f \in C^\infty(M)$ to functions and the bracket $[ \cdot , \cdot ]$ takes again values in the sections of
$E$.\footnote{We denote the expressions on the graded manifold and those isomorphic to them on $E \to M$ by the same symbols.} It is also obvious that $\rho(\psi)$ is a vector field on $M$, since the r.h.s.~of the defining expression (\ref{anchor}) satisfies an (ungraded) Leibniz rule for $f$ being a product of two functions. To have an algebroid (cf.~our definition \ref{algebroid}), we need to verify two things: first, the Leibniz property (\ref{Leibniz}), which follows at once from the two defining expressions on $E$ above and the graded Leibniz property of the Poisson bracket $\{ \cdot, \cdot \}$. Second, that $\rho(\psi)$ is indeed $C^\infty(M)$--linear in $\psi$ so as to really give rise to a bundle map $\rho \colon E \to TM$. The only potentially dangerous term which may violate this condition may arise when $\theta$ has terms quadratic or higher in the momenta $p_i$ conjugate to the coordinates $x^i$ on $M$. However, since the coordinates $p_i$ necessarily have degree $n$ and $\CQ$ has degree $n+1$, this is not possible for $n>1$.

We are left with verifying the Loday property (\ref{Loday}). It is only here where the condition $\{ \CQ, \CQ \}=0$, resulting from $Q^2=0$, comes into the game. We leave the respective calculation, which also makes use of the graded Jacobi identity of the Poisson bracket, as an exercise to the reader. $\square$
\end{proof}

Some remarks: First of all, it is clear from the above proof that any symplectic graded manifold of degree $n\ge 2$ equipped with an arbitrary function $\CQ$ of degree $n+1$ gives, by the above construction, rise to an algebroid structure (in our sense, cf.~Definition \ref{algebroid}). If in addition $\{ \CQ, \CQ \}=0$, this algebroid becomes a Loday algebroid.

Certainly, the higher in degrees we go, the more additional structures arise. Already for the case of $n=2$ the Loday algebroid had the additional structures making it into a Courant algebroid. For higher $n$, however, there will often be other algebroids out of which the Loday algebroid will be composed. Let us illustrate this for $n=3$: In Darboux coordinates the symplectic structure will have the form
\beq \omega = \md x^i \wedge \md p_i +
\md \theta^a \wedge \md \xi_a \, , \label{59}
\eeq
where the coordinates $x^i$, $\theta^a$, $\xi_a$, and $p_i$ have the degrees 0, 1, 2, and 3, respectively. From this we learn that a PQ-manifold of degree 3 has to be symplectomorphic to $T^*[3](V[1])$ with its canonical symplectic form. Here $V \to M$ is a vector bundle and the brackets indicate shifts of degree in the respective fiber coordiantes (note that without a shift the momenta $\xi_a$ conjugate to the degree 1 fiber linear coordinates $\theta^a$ on $V[1]$ have to have degree - 1 while they are now shifted so as to have degree 2). With $\psi = \varphi^a \xi_a + \frac{1}{2} \alpha_{ab} \theta^a \theta^a$, we see that the above vector bundle $E$ in this case is isomorphic to $V \oplus \Lambda^2 V^*$. Now let us consider the Hamiltonian for the Q-structure, i.e.~a function of degree four; it thus have to have the form
\beq \CQ =  \rho^i_a\theta^a - \frac{1}{2}  C^c_{ab}\theta^a\theta^b
\xi_c + \frac{1}{2} \beta^{ab} \xi_a \xi_b + \frac{1}{24} \gamma_{abcd}  \theta^a\theta^b\theta^c\theta^d \, . \label{Q4}
\eeq
Obviously, for $\beta=\gamma=0$ this is nothing but the canonical lift of the Q-structure (\ref{actionQ}) corresponding to a Lie algebroid, thus equipping $V$ with the structure of a Lie algebroid. In the general case $V$ still is an almost Lie algebroid and $E$ can be considered as an appropriate extension into a Loday algebroid. ($V$ itself is \emph{not} a Loday algebroid, having a Jacobiator controlled by a contraction of $\beta$ and $\gamma$. Adding a $V$-2-form $\alpha$ to the section $\varphi$, on which the anchor $\rho$ acts trivially, one can restore the Loday property (\ref{Loday}).)

We know that for $n=1$ we also obtained an algebroid, even a Lie algebroid. However, this was defined on $T^*M$. The above construction leads to a trivial $\R$-bundle over $M$ instead, the sections of which can be identified with functions on $M$, and $\rho \colon C^\infty(M) \to \Gamma(TM)$ is $\R$--linear but no more $C^\infty(M)$--linear, in fact it corresponds (up to a sign being subject of conventions) to the map from functions to their Hamiltonian vector fields. In fact, both equations (\ref{bracket}) and (\ref{anchor}) become equivalent in this degenerate case, equipping $C^\infty(M)$ with the Poisson bracket $\{ \cdot , \cdot \}_M$, as already remarked above. (This bracket defines a Lie algebra structure on the sections of $M \times \R$, but not a Lie algebroid or even a general algebroid structure on this bundle since one does not have an anchor map for it).

To obtain the Lie algebroid structure on $T^*M$ from the (P)Q-manifold $T^*[1]M$ we need to proceed differently. In fact, this provides a procedure that is applicable for any Lie algebroid $E \to M$, corresponding to a degree one Q-manifold $E[1]$. Sections of $E$ are in 1-1 correspondence with vector fields $\psi, \psi',\ldots$ of degree minus one on $E[1]$, and their derived bracket $[[\psi,Q],\psi']$, where the brackets denote the (super)commutator of vector fields, is rather easily verified to reproduce the Lie algebroid bracket between the respective sections.

This procedure can in fact be considered for a  Q-manifold of any degree $n$.  While for general $n$ this leads to what one may call Vinogradov algebroids (Loday algebroids with additional structures, cf.~\cite{GruetzmannStrobl}), for  $n=2$ one obtains a $V$-twisted Courant algebroid as defined in Sec.~2 above.
The degree -1 vector fields define the sections of the bundle $E$, the degree -2 vector fields the sections of the other bundle $V$; the derived bracket $[[\psi,Q],\psi']$ of the degree -1 vector fields defines the Loday bracket on $E$, the ordinary commutator bracket $[\psi,\psi']$, which apparently takes values in the degree -2 vector fields, gives the $V$-valued inner product on $E$. All the defining properties of a $V$-twisted Courant algebroid are rather easy to verify in this case.

On the other hand, a Q-manifold $\CM$ of degree 2 gives always rise to
$V[2] \to \cM\to W[1] $, where the first map is an embedding, setting the degree one coordinates in $\cM$ to zero and the second map a projection, forgetting about the degree two coordinates. (Here $V$ is the same bundle as the one in the previous paragraph and $W$ another vector bundle over the same base manifold). A somewhat lengthy analysis (cf.~\cite{GruetzmannStrobl}) shows that after one has chosen an embedding of $W[1]$ into $\cM$, which composed with the projection giving the identity map, the degree 2 Q-manifold is in bijection with a Lie 2-algebroid. The vector bundle $E$ of the $V$-twisted Courant algebroid picture is then composed of $W$ and $V$, similarly to the situation of the degree 3 PQ manifold discussed above; one finds easily that $E \cong W \oplus W^* \otimes V$, and, under only a few more assumptions (like that the rank of $V$ is at least two) on a $V$-twisted Courant algebroid of this form, also vice versa, the latter arises always from a Lie 2-algebroid in such a way.

We finally remark that if $\CM$ is a Q-manifold of degree $n$, then $T^*[n]\CM$ is a PQ-manifold of the same degree.\footnote{T.S.~is grateful to D.~Roytenberg for this remark in the context of a talk on $V$-twisted Courant algebroids.} E.g.~$T^*[1]E[1]$, with $E[1]$ a Lie algebroid (and $Q$ lifted canonically, certainly), is a degree 1 PQ-manifold, and thus isomorphic to the Lie algebroid of a Poisson manifold. Indeed, 
the dual bundle of a Lie algebroid is canonically a Poisson
manifold,\footnote{The dual $E^*$ of any Lie algebroid $E$ becomes a Poisson
manifold in the following manner: In order to define a Poisson bracket
on $E^*$, it is obviously sufficient to do so on the fiber constant
and fiber linear functions. The former are functions that arise as
pullbacks from functions on the base manifold $M$, the latter are
sections of $E$. It is then straightforward to verify that
\beq \{ f,f' \} = 0 \, , \quad \{ \psi ,f \} =  \rho(\psi) f \, , \quad
\{  \psi , \psi' \} = [ \psi , \psi' ] \, , \label{E*}
\eeq
valid for all $f,f' \in C^\infty(M)$, $\psi,\psi' \in \Gamma(E)$,
defines a Poisson structure on $E^*$.}
in accordance with the easy-to-verify isomorphism $T^*[1]E[1]\cong T^*[1]E$. So, Poisson geometry on $M$ can be viewed as a particular case of Lie algebroid geometry (considering $T^*M$), but also vice versa, a Lie algebroid structure on $E$ as a particular (fiber-linear) Poisson structure (on $E^*$)---and likewise so in higher degrees: A Courant algebroid is a particular case of a V-twisted Courant algebroid and also of a Lie 2-algebroid, but, in an appropriate sense, also Lie 2-algebroids and their corresponding $V$-twisted Courant algebroids, can be viewed as particular Courant algebroids. However, as already the example of a Lie algebroid shows, this is not always the most convenient way of viewing them.

We conclude this section by returning to the question of morphisms of Lie algebroids and, more generally, of any algebroid described by a Q- or a PQ-manifold. A morphism of Q-manifolds is a degree preserving map $\varphi \colon \CM_1 \to \CM_2$ such that its pullback $\varphi^* \colon C^\infty(\CM_2) \to C^\infty(\CM_1)$ is a chain map, i.e.~one has $Q_1 \circ \varphi^* = \varphi^* \circ Q_2$. It is a morphism of PQ-manifolds, if in addition it preserves the symplectic form, $\varphi^* \omega_2 = \omega_1$.

\section{Sigma models in the AKSZ-scheme}
\label{AKSZ}

In this section we want to discuss a particular class of topological sigma models that can be constructed in the context of algebroids. By topological we want to understand that the space of solutions to the classical field equations (the Euler Lagrange equations of the functional) modulo gauge transformations does not depend on structures defined on the base manifold $\Sigma$ in addition to its topology and that for ``reasonable'' topology (the fundamental group of $\Sigma$ having finite rank etc) it is finite dimensional.

In the context of ordinary gauge theories, one such a wellknown space is the moduli space of flat connections on $\Sigma$. A functional producing such a moduli space is the Chern Simons theory
\beq S_{CS}[A]  = \frac{1}{2}\int_\Sigma \kappa\left(A \! \stackrel{\wedge}{,} \!  (\md A+ \frac{1}{3} [A \! \stackrel{\wedge}{,} \! A])\right)   \, , \label{CS}
\eeq
 defined on the space of connections of a trivialized $G$-bundle over an orientable  three-dimensional base manifold $\Sigma$ when specifying $\kappa$ such that $(\g= \mathrm{Lie}(G) , [ \cdot , \cdot ] , \kappa)$ gives a  quadratic Lie algebra; such a connection is represented by a $\g$-valued 1-form $A$ on $\Sigma$, $A \in \Omega^1(\Sigma, \g)$.

To find an appropriate generalization of this theory to the present context, let us first reinterpret the fields $A$ and the field equations $F=0$ of this model within the present context such that it permits a straightforward generalization. First of all, a $\g$-valued 1-form $A$ on $\Sigma$ is evidently equivalent to a degree preserving map (a morphism) \beq a \colon T[1]\Sigma \to \g[1] \, . \label{aYM} \eeq If $e_a$ denotes a basis of $\g$ and $\theta^a$ the linear odd coordinates on $\g[1]$ corresponding to a dual basis, then the 1-forms $A^a \equiv A^a_\mu \vartheta^\mu$ in $A=A^a \otimes e_a$ are given by the pullback of $\theta^a$ with respect to the map $a$, $A^a = a^*(\theta^a)$; here $\vartheta^\mu = \md \sigma^\mu$ are the degree one coordinates on $T[1]\Sigma$ induced by (local) coordinates $\sigma^\mu$ on $\Sigma$. This is easy to generalize, given the background of the previous sections: As our generalized gauge fields we will consider morphisms of graded manifolds
\beq a \colon \CM_1 \to \CM_2 \label{a} \eeq
keeping $\CM_1 = T[1] \Sigma$ (so these gauge fields will be a collection of differential forms on $\Sigma$ of various form degrees) and taking $\CM_2$ as a more general $\N_0$-graded manifold than $\g[1]$.

In the example, not only $\CM_1 = T[1] \Sigma$ but also $\CM_2 = \g[1]$ is not any graded manifold, but even a Q-manifold, $Q_1$ is the de Rham differential and $Q_2$ the Chevalley-Eilenberg differential (\ref{CEQ}). Let us reconsider the field equations $F=0$ with $F=F^a \otimes e_a$ the curvature (\ref{F}) of $A$ in this context.  Obviously $F^a=0$, iff $\md A^a=-C^a_{bc} A^b \wedge A^c$. With the identification of the vector fields $Q_1$ and $Q_2$ above, this in turn is now seen to be $Q_1 a^* (\theta^a) = a^* Q_2 (\theta^a)$. In other words, the map $a^* \colon C^\infty(\CM_2) \to C^\infty(\CM_1)$ needs to be a chain map, or, according to the definition of a Q-morphism, the field equations of the Chern Simons gauge theory express that (\ref{a}) is not only a morphism of graded manifolds but even a Q-morphism (a morphism of differential graded manifolds).

We are thus searching a functional defined on morphisms (\ref{a}) such that its Euler-Lagrange equations forces these to become morphisms
\beq a \colon (\CM_1,Q_1) \to (\CM_2,Q_2) \label{aQ} \, . \label{fields} \eeq
Certainly, for a true generalization of the \emph{gauge} theory defined by means of (\ref{CS}), we also need to reinterpret its gauge transformations appropriately, so that we can formulate also the desiderata for the gauge symmetries of the searched for action functional. It turns out that on the solutions of (\ref{aQ}) the gauge symmetries receive the interpretation of Q-homotopy.\footnote{We refer to \cite{BKS} for the details.} We are thus searching for a functional defined on (\ref{a}) such that the moduli space of classical solutions modulo gauge transformations is the space of Q-morphisms from $(T[1]\Sigma,\md)$ to the target Q-manifold $(\CM_2,Q_2)$ modulo Q-homotopy.

A functional can be obtained by the so-called AKSZ-method \cite{AKSZ} (cf.~also \cite{CFAKSZ} and \cite{RoytenbergAKSZ}) for the case that the target carries also a compatible symplectic form, i.e.~that the target is a PQ-manifold. However,
this method yields in fact already the BV-extension of the searched-for (``classical'') functional. We thus briefly recall some basic ingredients of the BV-formalism and we will do this at the example of a toy model, where we consider functions instead of functionals, as well as for the Chern-Simons theory (\ref{CS}) above.

The toy model is the following one: Consider a function $S_{cl}$ on a manifold $M$ invariant with respect to the action of some Lie algebra $\g$: $v(S_{cl})(x) = 0$ for all points $x \in M$ and all elements $v \in \Gamma(TM)$ corresponding to the action of an element of $\g$. We learnt that $M \times \g$ carries the structure of a Lie algebroid, the action Lie algebroid. Let us thus, more generally, consider a Lie algebroid $E \to M$ together with a function $S_{cl} \in C^\infty(M)$ constant along the Lie algebroid orbits on $M$, i.e.~s.t.~$\rho(\psi) S_{cl} = 0$ for all $\psi \in \Gamma(E)$.

$S_{cl}(x)$ is supposed to mimic a functional invariant w.r.t.~some gauge transformations (in the more general sense, cf., e.g., \cite{HenneauxTeitelboim}, also for general details on the BV formalism). Consider for simplicity first the case of the action Lie algebroid again. The gauge invariance here corresponds to $\delta_\epsilon S_{cl}=0$ where $\delta_\epsilon x^i = \epsilon^a \rho_a^i(x)$ with some arbitrary $\g$-valued parameters $\epsilon = \epsilon^a e_a$ (and $\rho(e_a)=\rho_a^i(x)\frac{\pr}{\pr x^i}$). In the BRST--BV approach one first replaces the parameters $\epsilon^a$ by anticommuting variables $\theta^a$ so that the action of the BRST operator $\delta$ on the original, classical fields (coordinates here) becomes $\delta x^i = \theta^a \rho_ a^i(x)$. This is completed by the action of $\delta$ on the ``odd parameters'' $\theta^a$, $\delta \theta^a = - \frac{1}{2} C^a_{bc} \theta^b \theta^c$ rendering $\delta$ nilpotent, $\delta^2 = 0$. In fact, this BRST operator is evidently
nothing but the operator $Q$ defined on $E[1]$, $\delta \equiv Q$, cf.~eq.~(\ref{actionQ}), and this works for a general Lie algebroid $E$ with
\beq\rho(e_a)=\rho_a^i
(x)\frac{\pr}{\pr x^i}\;, \hspace{3mm}  [e_a, e_b]=C_{ab}^c (x) e_c\;,\eeq
where now $e_a$ is a local frame of sections of $E$ and $C_{ab}^c$ thus became structure functions (instead of just structure constants).

To obtain the BV-form of the action, one now turns to the (graded) phase space
version of this, i.e.~one introduces momenta (shifted in degree, cf.~below) for each of the fields (coordinates on $E[1]$ in our example), which, conventionally, are called the \emph{antifields}. In our example $x_i^*$ conjugate to $x^i$ and $\theta_a^*$ conjugate to $\theta^a$. Adding the Hamiltonian lift of $\delta$ to the classical action we obtain
\beq S_{BV}&=& S_{cl}(x) +
\t^a\rho_a^i (x) x^*_i-\frac{1}{2}C_{ab}^c (x) \t^a \t^b
\t^*_c\;.
\label{SBVtoy}
\eeq
To have this to have a uniform total degree, we need to shift the momenta in degree by minus one, $\deg x^*_i=-1$, $\deg \t^*_a=-2$. Thus the BV odd phase space we are looking at together with the BV-function are of the form
\beq \CM_{BV} = T^*[-1]E[1] \, ,  \quad \omega_{BV} = \md x^*_i \wedge \md x^i + \md \t_a^* \wedge \md \t^a \, , \qquad S_{BV} = S_{cl} + \CQ \, ,\label{BVtoy} \eeq
where $\CQ$ is the Hamiltonian of the canonical Hamiltonian lift  of the vector field $Q$ of the Lie algebroid $E[1]$, the odd Poisson bracket, the socalled BV-bracket $\{ \cdot , \cdot \}_{BV}$ has degree +1, so that $Q_{BV}= \{ S_{BV},\cdot \}_{BV}$ has degree +1 as well. 
$Q_{BV}$ is a differential, i.e.~$S_{BV}$ satisfies the so-called \emph{classical master equation}
\beq \{ S_{BV} , S_{BV} \}_{BV} = 0 \, ,\label{master} \eeq
which is completely obvious from our perspective: $\CQ$ Poisson commutes with itself since it is the Hamiltonian for the Lie algebroid differential $Q$ (there are no ``odd constants'', $\{ \CQ , \CQ \}_{BV}$ having degree 1), $S_{cl}(x)$ Poisson commutes with itself since it depends on coordinates only, and $\{S_{cl}(x),\CQ\}=0$ since it corresponds to the Lie algebroid action on $S_{cl}$, which is zero by assumption.

So, $(\CM_{BV}, \omega_{BV}, Q_{BV} = ( S_{BV},\cdot )_{BV})$ defines a PQ-manifold. In contrast to the previous $\Z$-graded Q-manifolds, this PQ-manifold also has negative degree generators. In fact, it is a cotangent bundle of an $\N_0$-graded Q-manifold with a shift in the cotangent coordinate degrees such that it is precisely the momenta (antifields) that have negative degrees. The total degree is called the \emph{ghost number} conventionally; so the classical fields (coordinates $x$ in the toy model) have ghost number zero, the ``odd gauge parameters'' $\t^a$, the ghosts, have ghost number one, and the antifields have negative ghost number. We can also just consider the number of momenta or antifields: denoting this number by a subscript, we see that  $S_{BV} = S_0 + S_1$ here, where $S_0 = S_{cl}$ and $S_1=\CQ$.

In general, the BV formalism is more involved, there can be terms of higher subscript. Still, always $S_0$ is the classical action $S_{cl}$. Also we see that $\{x^*_i, S_{BV}\}_{BV}|_0 = \frac{\partial S_{cl}}{\partial x^i}$.
Applying $Q_{BV}$ to the classical fields and setting the antifield-less part to zero yields the critical points of $S_{cl}$, i.e.~the classical field equations.

We now turn to the BV formulation of the Chern-Simons gauge theory (\ref{CS}).
As before in (\ref{SBVtoy}), we add to the classical action (\ref{CS}) a term linear in the classical antifields $A_a^*$ with a coefficient that is the (infinitesimal) gauge transformations, replacing the gauge parameters by odd fields $\beta^a$ (so, naturally, $A^*_a$ should be a 2-form on $\Sigma$), and we complete the expression by a term proportional to the odd (anti)fields $\beta_a^*$, 3-form on $\Sigma$, containing the structure constants such that the master equation (\ref{master}) is satisfied;\footnote{Again, also this example is a relatively simple one for what concerns the BV-formalism and the simpler BRST approach would be sufficient to yield the same results. However, already the models generalizing the Chern-Simons theory that we will discuss below, like the Poisson sigma model or the more general AKSZ sigma model, have a more intricate ghost and antifield structure.}  this yields
\beq
S_{CS-BV}[A,A^*, \beta,\beta^*]= S_{CS}[A] + \int_\Sigma (\md \beta^a + C^a_{bc} A^b \beta^c) A^*_a + \frac{1}{2} C^c_{ab} \b^a \b^b \b_c^*
\label{SCSBV}\eeq
which is in a striking similarity with our toy model (\ref{SBVtoy}). Remembering that we could rewrite the toy model in a much more elegant form using the Q-language, cf.~(\ref{BVtoy}), we strive for a similar simplification in the present context.

For this purpose we first recall that the quadratic Lie algebra used to define the Chern-Simons theory is a Courant algebroid over a point, which in turn is a degree 2 PQ-manifold over a point (cf.~Theorem \ref{Roytenberg}):
\beq \CM_2 = \g[1]\, , \quad \omega = \frac{1}{2}\kappa_{ab} \md \theta^a  \wedge \md \theta^b \, , \qquad  \CQ =  -\frac{1}{6} C_{abc} \theta^a
\theta^b \theta^c \, . \label{CStarget}
\eeq
This is the target of the map (\ref{a}), with the source, $\CM_1=T[1]\S$, being a Q-manifold ($Q_1=\md$). The map (\ref{a}) corresponds to the classical fields $A$, which we amended with further fields $\b, \b^*, A^*$  above.

It is tempting to collect all these fields together into a super field (indices were raised by means of $\kappa$)
\beq \CA^a = \beta^a + A^a + A_*^a + \b_*^a \label{CA}
\eeq
by adding them up with increasing form degrees. In fact, this corresponds to an extension of the morphism (\ref{a}) to what is called a \emph{map} $\tilde a$ from $\CM_1$ to $\CM_2$. But before commenting on this extension on this more abstract level, we want to first convince ourselves that the concrete expression (\ref{CA}) is useful. Let us consider the action (\ref{CS}) simply replacing $A$ by $\CA$--and clearly keeping only the top degree forms for the integration over $\S$. Viewing differential forms on $\S$ as graded functions on $T[1]\S$, we can also write this as a Berezin integral over that graded manifold and we will partially do so in what follows. Let us consider the first part of this action first:
\beq S_{source}[\CA] = \frac{1}{2} \int_{T[1]\S} \kappa(\CA \! \stackrel{\wedge}{,} \!  \md \CA) =  \frac{1}{2} \int_\Sigma \kappa(\b ,  \md A^*) + \kappa(A \! \stackrel{\wedge}{,} \!  \md A) +  \kappa(A^* \! \stackrel{\wedge}{,} \!  \md \b)  \, . \label{Ssource}
\eeq
We see that taking $\S$ to have no boundary for simplicity, with appropriate sign rules (see \ref{AKSZSM}, in general)
, the first and the third term become identical and they reproduce all the terms containing a $\md$ in the BV-action above. Similarly,
\beq S_{target} [\CA] = \frac{1}{6} \int_{T[1]\S} \kappa ( \CA \! \stackrel{\wedge}{,} \! [\CA \! \stackrel{\wedge}{,} \! \CA])) \equiv \frac{1}{6} \int_{T[1]\S} C_{abc} \, \CA^a \wedge \CA^b \wedge \CA^c \label{Starget}
\eeq
is easily seen to reproduce the remaining terms in (\ref{SCSBV}).
So we see that
\beq S_{CS-BV}[A,A^*, \beta,\beta^*]= S_{CS}[\CA] = S_{source}[\CA] + S_{target}[\CA] \label{CSCA}\eeq
and we will strive at understanding this action in analogy to the toy model situation (\ref{BVtoy}).

For this purpose we need to identify the BV phase space of this situation. We thus first return to a further discussion of the superfields (\ref{CA}). The notion of a smooth map of graded
manifolds is an extension of the notion of a morphism:\footnote{Strictly speaking, the first and consequently the third formula do not make sense: $\Maps(\CM_1,\CM_2)$ turns out to be an infinite dimensional graded manifold and, as any graded manifolds, only its degree zero part, the body, contains points; a graded manifold, like a supermanifold, is not even a set. Still, it is useful to \emph{think} like this; like everything in supergeometry, the real definitions are to be given algebraically on the dual level.}
\beq \widetilde a \in \Maps(\CM_1,\CM_2) \; , \quad   a \in \Mor(\CM_1,\CM_2) = \Maps_0(\CM_1,\CM_2) \; , \quad P_0 \widetilde a = a\, .\label{tildea} \eeq
In the case of a flat
target manifold like $\g[1]$ the description of a smooth map is rather clear from the example (\ref{CA}): we (formally) allow the functional dependence of the target coordinates on arbitrary degrees on the source. $\Maps(\CM_1,\CM_2)$ is naturally graded: The coefficients in the expansion like (\ref{CA}) are coordinates on this map space. Since the coordinates on the target $\CM_2=\g[1]$ have degree one, each term in the expansion (\ref{CA}) has degree 1 as well. Correspondingly, the ghost $\b^a$ has degree 1, $A_\mu^a(\sigma)$, the coefficient in $A^a = A_\mu^a \vartheta^\mu$ has degree zero (since $\vartheta^\mu = \md \sigma^\mu$, coordinates on the source $T[1]\Sigma$,  have degree 1), $(A^a_*)_{\mu \nu}(\sigma)$, the coefficients of the 2-form field $A^a_*$, are fields of degree -1 etc. A field (or antifield) is the same as a coordinate on $\Maps(\CM_1,\CM_2)$. Its degree zero part $\Maps_0(\CM_1,\CM_2)$ is the space of morphisms or the space of the original classical maps $a$, which correspond to the (classical) fields $A_\mu^a(\sigma)$. $a$ results from $\widetilde a$ by projection (formal operator $P_0$ in (\ref{tildea})) to its degree zero part, which are those maps that are degree preserving: in the example (\ref{CA}) this is keeping the second term.

Generally, the space of maps between graded manifolds $\cM_1$ and
$\cM_2$, denoted as $\Maps (\cM_1, \cM_2)$, is uniquely determined
by the functorial property (cf. eg. \cite{Deligne-Morgan} or also
\cite{Roytenberg}): for any
graded manifold $Z$ and a morphism $\psi: Z\times\cM_1\to\cM_2$
there exists a morphism $\tilde{\psi}: Z\to \Maps (\cM_1, \cM_2)$
such that $\psi =\widetilde{ev}\circ (\tilde{\psi}\times\mathrm{Id} )$, where the evaluation map $\widetilde{ev}$ is (formally)
defined in the obvious way:
\beq \widetilde{ev} \colon \Maps(\CM_1,\CM_2) \times \CM_1 \to \CM_2 \, , \; (\widetilde a, \widetilde \sigma) \mapsto \widetilde a (\widetilde \sigma)  \, .
\label{widetildeev} \eeq
Though the map $\widetilde a$ underlying (\ref{CA}) is not a morphism of graded manifolds (it is does not even induce an ungraded
morphism of the associative algebras of functions!), the
evaluation map $\widetilde{ev}$ is.

So the BV-phase space is $\CM_{CS-BV}=\Maps(T[1]\Sigma,\g[1])$, or, more generally, $\CM_{BV} = \Maps(\CM_1,\CM_2)$. It is canonically an odd (infinite dimensional, weakly) symplectic manifold. The symplectic form $\o_{BV}$ is induced by means of the one on the target, eq.~(\ref{CStarget}):
\beq \o_{BV} = \int_{T[1]\S} \widetilde{ev}^* \omega = \frac{1}{2} \int_{T[1]\Sigma} \kappa_{ab} \,  \delta \CA^a \wedge \delta \CA^b  \, ,  \label{oBV} \eeq
where $\delta$ denotes the de Rahm differential on $\CM_{BV}$ (so as to clearly distinguish it from the \emph{vector field} $\md$ on $\CM_1$!). Note that the pullback by means of the evaluation map (\ref{widetildeev}) of the 2-form $\omega$ produces a (highly degenerate) 2-form on $\CM_{BV} \times \CM_1$, which, moreover, still has degree 2 since $\widetilde{ev}$ is a morphism, as remarked above. The Berezin integration over $T[1]\Sigma$ then reduces the degree of the resulting differential form by 3 (since $\Sigma$ is three-dimensional), so that $\o_{BV}$ has degree -1, such as in our toy model (\ref{BVtoy}).
We could write out the right hand side of  (\ref{oBV}) similarly to (\ref{Ssource}); this then makes it clear that, after the integration, the resulting 2-form is indeed (weakly) non-degenerate. Let us stress at this point that certainly this construction would \emph{not} work when
sticking to the purely classical fields $A$: A likewise expression $\int_{\S} \kappa(\delta A \! \stackrel{\wedge}{,} \! \delta A)$ would be nonzero only for a \emph{two-}dimensional surface $\S$---in fact, this then is the symplectic form of the classical phase space of the Chern Simons theory. So, also at this point the extension from $a\in \Mor(\CM_1,\CM_2)$ to $\widetilde a\in \Maps(\CM_1,\CM_2)\equiv \CM_{BV}$ is essential, because only the latter space is naturally symplectic, and indeed the symplectic form has degree -1 so that the BV-bracket will have degree +1, as it should be.

From the above the first steps in the generalization of the Chern-Simons BV action to a more general setting is clear: We will keep $(T[1]\Sigma, \md)$ as our source $Q$-manifold $(\CM_1,Q_1)$, with $\Sigma$ having a dimension $d$ different from three in general. As target we need to choose at least a symplectic graded manifold, but in fact, like in our guyding example (\ref{CStarget}), we will consider a PQ-manifold $(\CM_2, \omega_2, Q_2)$, of degree $n$ in general (for $n>0$ we can also replace the symplectic vector field $Q_2$ by its generating Hamiltonian function $\CQ_2$ of degree $n+1$). Now   $\int_{T[1]\S} \widetilde{ev}^* \omega_2$ gives a degree -1 symplectic 2-form on $\CM_{BV} = \Maps(\CM_1,\CM_2)$, iff $d=n+1$ (since the Berezin intergration reduces the degree of the 2-form by $d$).

It remains to rewrite the action (\ref{SCSBV}) or (\ref{CSCA}) in a form that will resemble somewhat the BV-function of the toymodel (\ref{BVtoy}). In particular, according to our assumptions on source $\CM_1=T[1]\Sigma$ and target (\ref{CStarget}), we are having a vector field $Q_1=\md$ and $Q_2=\{ \CQ, \cdot \}$ on the source and the target, respectively. Both of them give rise to a vector field on $\CM_{BV} = \Maps(\CM_1,\CM_2)$:\footnote{Details for the remaining part of the paragraph and the following one can be found in \cite{AKSZ}, \cite{CFAKSZ}, and \cite{RoytenbergAKSZ}.} Identifying  the tangent space at $\widetilde a \in \CM_{BV}$ with $\Gamma(\CM_1, \widetilde a^* T\CM_2)$, the two vector fields give rise to
\beq  \widetilde a_* Q_1  \qquad \mathrm{and}  \qquad Q_2 \circ
\widetilde a \, , \label{Q12} \eeq
respectively. These two vector fields on $\CM_{BV}$, both of degree 1, are (graded) commuting, as acting on the right and the left of the map $\widetilde a$. Their difference
\beq \widetilde f : = \widetilde a_* Q_1  - Q_2 \circ  \widetilde a \label{tildef} \eeq
is nothing but the BV Operator $Q_{BV}$. It turns out that both vector fields
are Hamiltonian with respect to the symplectic form (\ref{oBV}), with the Hamiltonians being given (\ref{Ssource}) and (\ref{Starget}), respectively. In the spirit of (\ref{BVtoy}), we can now also bring the BV action into the form 
\beq
S_{BV-AKSZ}=\int\limits_{T[1]\Sigma}\imath_{\md}\widetilde{ev}^* (\a)+(-1)^{d}\widetilde{ev}^*
\CQ\;, \label{AKSZSM} \eeq
which defines the general AKSZ sigma model. Here $d$ is the dimension of $\Sigma$, in the example of the Chern-Simons theory thus $d=3$ and $\CQ$ is the Hamiltonian (\ref{CStarget}). In the first term $\a$ is a primitive of $\omega$, $\omega = \md \a$ and $\imath_{\md}$ denotes the contraction with $Q_1 = \md$.  With $\Sigma$ having no boundary, the action (\ref{AKSZSM}) is independent of the choice of $\a$.
If $q^\alpha$ denote Darboux coordinates of the target PQ-manifold, which, as mentioned, has degree $d-1$, i.e.~$\omega = \frac{1}{2} \omega_{\a\b} \md q^\a \wedge \md q^\b$ with $\omega_{ab}$ being constants, we can choose $\alpha = \frac{1}{2} \omega_{\a\b} \, q^\a \wedge \md q^\b$. Let $\CA^\alpha = \widetilde{a}^*(q^\a)$. Then we can (somewhat formally) rewrite the AKSZ action ``evaluated'' at $\widetilde a$ more explicitly as
\beq S_{BV-AKSZ}[ \CA] =  \int_{T[1]\Sigma}  \frac{1}{2}\omega_{\a\b} \CA^\a \md \CA^\b +(-1)^d \widetilde a^* \CQ \,  . \label{AKSZBV} \eeq
In this form it is very easy to see that we reproduce from this the Chern-Simons theory in its form (\ref{CSCA}) upon the choice   (\ref{Starget}) together with $d=3$.

In general, the AKSZ sigma model is defined for a degree $d-1$ PQ-manifold on a $d$ dimensional base. The classical action results from the BV form of it simply by replacing $\widetilde a$ by its degree zero part $a$, i.e.~with $A^\a = a^* q^\a$
\beq S_{AKSZ}[A] =  \int_{T[1]\Sigma}  \frac{1}{2}\omega_{\a\b} A^\a \md A^\b +(-1)^d a^* \CQ \, . \label{AKSZclass} \eeq
Again, in the Chern-Simons case we easily find the classical aciton (\ref{CS}) reproduced.

We can, however, now also parametrize this sigma model more explicitly by means of the considerations in section \ref{Qman} for the lowest dimensions of $\Sigma$: For $d=2$, we need to regard degree 1 PQ-manifolds, which we had found to be always of the form
\beq \CM_2 = T^*[1] M \quad , \quad \omega=\md p_i \wedge \md x^i \qquad , \quad \CQ = \frac{1}{2} \Pi^{ij} p_i p_j \eeq
with $\Pi$ a Poisson bivector (cf.~example \ref{ex1}); i.e.~the target data are uniquely determined by a Poisson manifold $(M,\Pi)$. The most general AKSZ sigma model for $d=2$ is thus seen to be the Poisson sigma model \cite{PSM}, \cite{Ikeda}:
\beq S_{PSM}[X^i,A_i] = \int_\Sigma A_i \wedge \md X^i + \frac{1}{2} \Pi^{ij}(X) A_i \wedge A_j \label{PSM} \eeq
where $X^i = a^* x^i$ are 0-forms on $\Sigma$, $A_i = a^* (p_i)$ 1-forms, and we used a more standard notation of integration over the (orientable) base manifold $\Sigma$.

For $d=3$ we see that we get a (topological) sigma model for any Courant algebroid, cf.~Theorem \ref{Roytenberg}, and the corresponding sigma model is easily specified by means of (\ref{55}) and (\ref{ThetaCour}). We call it the Courant sigma model. It was obtained first by Ikeda in \cite{CS-Ikeda} and later by Roytenberg more elegantly by the present method \cite{RoytenbergAKSZ}.

For $d=4$ the geometrical setting of the target, a degree 3 PQ manifold, has not yet been worked out in detail or given any name. But again we can write down the explicit sigma model in this case using (\ref{59}) and (\ref{Q4}).

How we presented the AKSZ sigma model, its main purpose is to find a topological action functional such that its classical field equations are precisely
Q-morphisms, (\ref{aQ}). While this was proven for the Poisson sigma model explicitly in \cite{BKS}, the present formalism permits an elegant short and general proof. As the comparison of (\ref{AKSZclass}) with (\ref{AKSZBV}) shows, the difference between the classical action and its BV-extension is that we merely have to perform the replacement (\ref{fields}) to go from one to the other. This is a very specific feature of the present topological models, the BV extension is usually not that simple to obtain for a general gauge theory; here, however, it works like this as we saw above. 
The BV-AKSZ functional (\ref{AKSZBV}) becomes stationary precisely when (\ref{tildef}) vanishes (since this is its Hamiltonian vector field and the symplectic form is non-degenerate).  Correspondingly, the variation of (\ref{AKSZclass}) results into the same equation, but where $\widetilde a$ is replaced by $a$, so the Euler Lagrange equations are equivalent to the vanishing of
\beq  f \colon \CM_1 \to T[1]\CM_2 \quad , \qquad
f : = a_* Q_1  - Q_2 \circ  a \; , \label{f} \eeq
where we shifted the degree of the tangent bundle to the target so as to have $f$ being degree preserving like $a$. This in turn is tantamount to the chain property of $a^*$ (cf.~also Lemma \ref{fstar} below).


We finally remark that also Dirac structures and generalized complex structures can be formulated with profit into the language of super geometry (the former ones as particular Lagrangian Q-submanifolds in the degree 2 QP-manifold $T^* [2]T[1]M$), which in part can be used also to formulate particular sigma models for them within the present scheme not addressed in the present article (and different from those of the following section). We refer the reader for example to 
\cite{Grabowski0601761} and \cite{hepth09110993}.

\section{Sigma models related to Dirac structures}
\label{DSM}

As we have mentioned before, a Poisson manifold gives
a particular example of a Dirac structure, determined by the graph
of the corresponding bivector in $TM\oplus T^*M$.
Similar to the Poisson sigma model (PSM), the target space of which
is a Poisson manifold, we now want to consider a topological sigma model  associated to any Dirac subbundle of an exact Courant algebroid. This Dirac sigma model (DSM) \cite{KSS} is supposed to be at least equivalent to the PSM
for the special choice of a Dirac structure that is the graph of a Poisson bivector. Also, we want to continue pursuing our strategy that its classical field equations
should be appropriate morphisms. In fact, in lack of a good notion of a morphism of a Dirac structure, we will content ourselves with a Q-morphism again, i.e.~a Lie algebroid morphism in this  case (since any Dirac structure is in particular a Lie algebroid structure). The model will be two-dimensional. As we saw in the previous section that  the most general two-dimensional model obtained by the AKSZ scheme is the PSM, the DSM does not result from this method, at least not by its direct application.

The target space of the Dirac sigma model is a manifold
together with a Dirac structure $D$ in an exact Courant algebroid
twisted by a closed 3-form $H$. The space-time is a 2-dimensional
surface $\Sigma$. We need also some auxiliary structures---a Riemannian
metric $g$ on $M$ and a Lorentzian metric $h$ on $\Sigma$.
 A classical field of the DSM is a bundle
map $T\Sigma\to D$. First this corresponds to the base map $X\colon \Sigma\to M$, which is corresponds to a collection of ``scalar fields''. Taking into account that $D\subset TM\oplus T^*M$, we
represent the remaining field content by a couple of sections $V\in \Omega^1 (\Sigma, X^*TM)$, satisfying the constraint that this couple combines into a section of $T^*\Sigma\otimes X^*D$.
Here $\Omega^p (N, E)$ is by definition $\Gamma (\Lambda^p T^*N\otimes E)$ for any smooth
manifold $N$ and vector bundle $E\to N$. The DSM action is the sum of two terms, one using the auxiliary geometrical structures and one that uses the topological structures only:  $S_{DSM}[X,V,A]=S_{geom}+S_{top}$, where
\beq\label{DSM-kin}
S_{geom}&\colon =& \frac{\a}{2}\int\limits_{\Sigma}\mid\mid \md X-V\mid\mid^2\,, \label{Sgeom}\\
\label{DSM-top}
S_{top} &\colon =&\int\limits_{\Sigma} \left(\langle A \stackrel{\wedge}{,} \md X  - \frac{1}{2} V\rangle \right) +\int\limits_{N^3} H\,. \label{Stop}
\eeq
Here $\a$ is a real (non-vanishing, in general) constant\footnote{The Lorentzian
signature of $h$ is chosen for simplicity. In the Riemannian version of the Dirac sigma model  the coupling
constant $\a$ has to be totally complex, that is, $\a\in i\R$.}, the absolute value in the first term corresponds
to the canonical pairing by $h \otimes g$ on $T\Sigma \otimes
TM$,\footnote{More explicitly this expression was defined after
formula (1.1).}, the
brackets $\langle, \rangle$ denote the pairing between $TM$ and
$T^*M$, and the last term in the topological part of the action is the integral
of $H$ over an arbitrary map $N^3\to M$ for $\partial N^3 =\Sigma$ which extends
$X\colon \Sigma\to M$ (here we assumed for simplicity that $X$ is homotopically trivial\footnote{For less a topologically less restrictive setting, one permits action functionals up to integer multiplies of $2 \pi \hbar$---here we refer to the literature on Wess-Zumino terms, cf.~\cite{KSS} and references therein.}, which is e.g.~always the case if the second homotopy group of $M$ is zero).

This sigma model generalizes the G/G Wess-Zumino-Witten (GWZW) \cite{WZW, GG} and
($H-$twisted) Poisson sigma model, simultaneously. We first comment on the
relation to the PSM. Let us first choose $H=0$ (cf.~Example \ref{ex1}) and $\alpha = 0$ (yielding $S_{geom} \equiv 0$); with $D$ being the graph of a bivector $\Pi$, we have $V= X^* \Pi (A, \cdot)$ and comparison with (\ref{PSM}) shows that we indeed have $S_{DSM}=S_{PSM}$ in that case.
For $H$ non-zero we get the twisted version of the Dirac structure of example \ref{ex1} and by means of eq.~(\ref{Stop}) of the corresponding sigma model \cite{Ctirad}, respectively. (We will comment on non-zero $\alpha$ below). The GWZW model results from a special choice of the Dirac structure in an exact Courant algebroid on a quadratic Lie group $M=G$, $H$ is the Cartan 3-form, $g$ the biinvariant Riemannian metric on $G$, and $\alpha=1$. In the description of a Dirac structure by an orthogonal operator $S \in \End(TM)$ once a metric $g$ has been fixed on $M$, cf.~Cor.~\ref{oneone}, the Dirac structure on $G$ is the one given by the adjoint action $S=\Ad_g$, where $g \in M$ is the respective base point.

The GWZW model is in so far an important special case as it is not only a well known model in string theory (cf., e.g., \cite{GG}), but it is also \emph{known} to be topological, despite the appearance of auxiliary structures needed to define it. Also, its Dirac structure can be shown to  \emph{not} be a graph, e.g.~by employing the characteristic classes described in sec.~\ref{sec:Dirac}.

Part of the topological nature for the general DSM can be already verified on the level of the classical field equations: Do they depend on the auxiliary structures like $g$ or $h$?
Here we cite the following result from \cite{KSS}:
\begin{theorem} Let $\a\ne 0$, then a field $(X,V,A)$ is a solution of the equations of motion, if and only if the corresponding bundle map $T\Sigma\to D$ induces a Lie algebroid morphism.
\end{theorem}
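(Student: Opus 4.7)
The plan is to split the Euler--Lagrange equations of $S_{DSM}$ into two groups---those obtained by varying $(V,A)$ at fixed $X$, and those obtained by varying $X$---and to show that the first group is equivalent to the anchor condition $V = \md X$ (which by itself identifies the bundle map $v \mapsto (V(v),A(v))\colon T\Sigma \to D$ as the bundle part of a candidate Lie algebroid morphism over the base map $X$), while the second, imposed on top of $V = \md X$, is equivalent to the chain-map/bracket condition. The converse direction is then immediate, since on $V=\md X$ the term $S_{geom}$ vanishes together with its first-order variation, so that the $(V,A)$-EOM is trivially satisfied and the chain-map condition reproduces the $X$-EOM.

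For the $(V,A)$-variation I would choose a local frame $e_a = (\rho_a,A_a)$ of $D$ over $M$ and write $V = \rho_a(X)\phi^a$, $A = A_a(X)\phi^a$ with 1-forms $\phi^a$ on $\Sigma$ now unconstrained. Varying $\phi^a$ at fixed $X$ and using the isotropy relation $\rho_a^i A_{b\,i}+\rho_b^i A_{a\,i}=0$ (which absorbs the factor $\tfrac12$ in front of $V$ in $S_{top}$) one finds, with $W:=\md X-V$,
\beq
A_a(W) \;=\; \alpha\, g(\rho_a,\, *W)\qquad\text{for every frame index }a.\nn
\eeq
Equivalently, the pair $(W,\,-\alpha\, g^{\flat}(*W))$ is orthogonal in the Courant pairing to every section of $X^*D$, and since $D^\perp=D$ it must itself be a section of $T^*\Sigma\otimes X^*D$. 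In Lorentzian two dimensions $*^2=+1$ on 1-forms; in light-cone coordinates $\sigma^\pm$ with $*\md\sigma^\pm=\pm\,\md\sigma^\pm$, decomposing $W = W_+\,\md\sigma^+ + W_-\,\md\sigma^-$, the above constraint splits into $(W_\pm,\,\mp\alpha\,g^{\flat}(W_\pm))\in X^*D$ separately. Isotropy of $D$ applied to each of these pieces yields $\mp 2\alpha\, g(W_\pm,W_\pm)=0$, and positive-definiteness of the Riemannian metric $g$ on $M$ together with $\alpha\neq 0$ then forces $W_\pm=0$, hence $V=\md X$.

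On the locus $V=\md X$ the $X$-variation simplifies: $S_{geom}$ being quadratic in $W$, its entire contribution to $\delta S/\delta X$ vanishes on-shell, and we are reduced to $\delta S_{top}/\delta X=0$. A direct computation---substituting $V=\md X$, integrating the $A\wedge\md(\delta X)$ piece by parts, and using the standard identity $\delta\!\int_{N^3}\!H=\int_\Sigma\iota_{\delta X}H$ for the Wess--Zumino term---reorganises the result into
\beq
\md\phi^a \,+\, \tfrac12\, C^a_{bc}(X)\,\phi^b\wedge\phi^c \;=\; 0,\nn
\eeq
where the $C^a_{bc}$ are the structure functions of the $H$-twisted Dorfmann bracket restricted to the frame $e_a$ of the Dirac subbundle $D$ (their explicit expression involves $\partial_k\rho^i_a$, $\partial_k A_{a\,i}$, and a contraction of $H$ with two $\rho$'s). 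Combined with the anchor condition already obtained, this is, by Proposition \ref{prop:Lie} applied to the Lie algebroid $D$, exactly the chain-map characterisation of a Lie algebroid morphism $T\Sigma\to D$. The main obstacle I expect is the bookkeeping in this final step: verifying that the terms collected from $\md A$, from the $\delta X$-derivatives of $\rho(X)$ and $A(X)$, and from the variation of the Wess--Zumino term assemble \emph{exactly} into the $H$-twisted Courant structure functions of $D$, rather than into some non-tensorial remainder; this precise cancellation---tied to the isotropy of $D$ and to the factor $\tfrac12$ in $S_{top}$---is what singles out Dirac structures as the natural targets of the DSM.
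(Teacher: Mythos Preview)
The paper does \emph{not} actually prove this theorem in generality. It explicitly defers the full argument to \cite{KSS} and only treats the degenerate case $D=TM$ (with $H=0$), where $S_{top}\equiv 0$ and the single equation $\md X=V$ follows trivially from varying $V$ in $S_{geom}$. Your proposal therefore goes well beyond what the paper itself offers.

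Your first step---the $(V,A)$-variation---is correct and rather clean. The computation $A_a(W)=\alpha\,g(\rho_a,*W)$ is right, and recasting it as $(W,-\alpha\,g^\flat(*W))\in D^\perp=D$ followed by the light-cone split and the isotropy argument is exactly the mechanism that forces $W=0$ for Lorentzian $h$, Riemannian $g$, and $\alpha\neq 0$. This is the substantive use of $\alpha\neq 0$ and of $D$ being Lagrangian. Note that the paper's footnote points to a different parametrisation used in \cite{KSS}, namely $A=(1+S)\widetilde W$, $V=(1-S)\widetilde W$ via the orthogonal operator $S$ of Corollary~\ref{oneone}; your local-frame approach is equivalent but packages the constraint differently.

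For the second step there is a genuine subtlety you have not addressed: the $X$-variation produces $n=\dim M$ equations naturally valued in $T^*M$, whereas the bracket condition $\md\phi^a+\tfrac12 C^a_{bc}\phi^b\wedge\phi^c=0$ is $n$ equations valued in $D$. These agree only after an identification, and that identification is not the projection $D\to T^*M$ (which is non-invertible precisely when $D$ is not a graph of a bivector). What one actually finds is that the $X$-EOM on $V=\md X$ takes the form $M_{ia}(X)\,F^a=0$ for a certain matrix $M_{ia}$ built from $\rho_a$, $A_a$, and $g$, and one must check that this matrix is invertible; alternatively, in the $S$-parametrisation the identification $D\cong TM$ is built in from the start and the comparison becomes direct. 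Either way, this is where the involutivity of $D$ (not merely its isotropy) enters to ensure that the $H$-terms, the $\partial_k\rho_a$ and $\partial_k A_a$ terms, and the $\md A$ contribution assemble into the twisted-bracket structure functions without remainder. You correctly flag this as the hard bookkeeping; just be aware that it is slightly more than bookkeeping---it hides a linear-algebra statement about the relation between the $X$-EOM and $F^a$.
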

Since the notion of a Lie algebroid morphism does not depend on auxiliary structures as those mentioned above, we see that at least this condition is satisfied for nonvanishing $\alpha$.

The proof of this theorem is somewhat lengthy,\footnote{One of the complications is that the fields $A$ and $V$ are not independent from one another in general and this has to be taken care of when performing the variations. One way of doing that is by using the orthogonal operator $S$ mentioned in Cor.~\ref{oneone}, which permits us to express these two fields by means of an independent $W \in \Omega^1(\S,X^*TM)$ according to
$A = (1+S)W$ and $V=(1-S)W$.} so that we do not want to reproduce it here; instead we want to prove it for the simplest possible Dirac structure, $D=TM$, example \ref{TM} above, and refer for the general fact to \cite{KSS}.

We start by calculating the field equations of the sigma model; but for $D=TM$ and $H=0$ (cf.~example \ref{TM}) the topological part of the action is identically zero. So, it remains to look at the variation of (\ref{Sgeom}). Since $V$ is an independent (unconstrained) field in this case, the variation of the quadartic term w.r.t.~$V$ yields
\beq \md X = V \label{dxv} \eeq
while the $X$-variation vanishes on behalf of that equation. Mathematically, this equation is tantamount to saying that the vector bundle morphism $a \colon T \Sigma \to T M$ is the push forward of a map $X \colon \Sigma \to M$, $a = X_*$. We obtain the required statement in this special case by use of the following
\begin{lemma} A Lie algebroid morphism from the standard Lie algebroid over a manifold $\Sigma$ to the standard Lie algebroid over a manifold $M$ is the push forward of a smooth map $X \colon \Sigma \to M$.
\end{lemma}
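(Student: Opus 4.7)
My plan is to invoke the anchor-compatibility axiom of a Lie algebroid morphism; for the standard Lie algebroid it is already so restrictive as to fix the morphism completely. By definition a Lie algebroid morphism $a \colon T\Sigma \to TM$ is a vector bundle map covering some smooth base map $X \colon \Sigma \to M$ that is compatible with anchors and brackets. Anchor compatibility reads $\rho_{TM} \circ a = X_* \circ \rho_{T\Sigma}$, and since here both anchors are identity maps, this condition collapses to $a = X_*$, which is precisely the statement of the lemma. The remaining bracket-compatibility condition is then automatically satisfied; it just expresses the familiar naturality statement that $X$-related vector fields have $X$-related Lie brackets.

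Since the paper prefers to define Lie algebroid morphisms through the Q-language (cf.~Section~\ref{Qman}), I would complement this with the equivalent route via that picture: such a morphism is a degree preserving graded algebra homomorphism $\varphi^* \colon \Omega^\bullet(M) \to \Omega^\bullet(\Sigma)$ commuting with the de Rham differentials. First I would restrict $\varphi^*$ to functions and use the classical identification of smooth maps with unital $\R$-algebra homomorphisms $C^\infty(M) \to C^\infty(\Sigma)$ to extract a unique smooth $X \colon \Sigma \to M$ with $\varphi^*|_{C^\infty(M)} = X^*$. Chain-map compatibility then forces $\varphi^*(\md f) = \md (f \circ X) = X^*(\md f)$ on exact one-forms, the algebra property propagates this via $\varphi^*(g\, \md f) = (g\circ X)\, X^*(\md f)$ to arbitrary one-forms, and multiplicativity extends the identification $\varphi^* = X^*$ to the full algebra $\Omega^\bullet(M)$. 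Dualising, the underlying bundle map is $a = X_*$. The sole nontrivial ingredient is the recovery of $X$ from an abstract algebra homomorphism of function rings; this is classical and is the only step with any genuine content.
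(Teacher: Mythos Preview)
Your proposal is correct, and your second route via the $Q$-language is essentially the paper's own proof: the paper chooses local coordinates $x^i$, $\theta^i=\md x^i$ on $T[1]M$ and computes $\bar a^*\theta^i=\bar a^*(\md x^i)=\md(\bar a^*x^i)=\md X^i$ from the chain-map condition, which is exactly your argument $\varphi^*(\md f)=\md(f\circ X)=X^*(\md f)$ specialized to $f=x^i$. Your version is simply the coordinate-free formulation of the same computation, with the additional remark that multiplicativity propagates this to all of $\Omega^\bullet(M)$.

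Your first route via anchor compatibility is a genuinely different and more elementary argument that the paper does not give. It exploits the \emph{bundle-map} definition of a Lie algebroid morphism directly: anchor compatibility $\rho_{TM}\circ a=X_*\circ\rho_{T\Sigma}$ with both anchors the identity immediately yields $a=X_*$, with bracket compatibility then automatic by naturality of the Lie bracket under pushforward. This is quicker and conceptually cleaner for the standard Lie algebroid, but the paper's $Q$-language version has the advantage of being phrased in the same framework used throughout Sections~\ref{Qman}--\ref{AKSZ}, and in particular it parallels the computation with the field strength map $f$ (cf.~Lemma~\ref{fstar}) that the surrounding discussion relies on.
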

\begin{proof} Recall from section \ref{Qman} that the Lie algebra morphism above can be defined best by a degree preserving map $\bar a \colon T[1] \S \to T[1]M$ such that $\bar a^* \colon C^\infty(T[1]M) \to  C^\infty(T[1]\S)$ commutes with the respective differentials characterizing the Lie algebroid, here being just the respective de Rahm differentials.  Let us choose local coordinates $\sigma^\mu, \vartheta^\mu = \md  \sigma^\mu$ and $x^i, \theta^i = \md x^i$, respectively. Then $X^i = \bar a^* x^i$ corresponds to the base map $X$ of the Lemma. On the other hand $V^i := \bar a^* \theta^i = \md \bar a^* x^i = \md X^i$, where in the second equality we used that $\bar a^*$ commutes with $\md$. $V^i = \md X^i$ or, equivalently, $V^i_\mu = X^i,_\mu$ is the searched-for equation.
\end{proof} $\square$

{}From the example we also see that $\alpha \neq 0$ is a necessary condition for the theorem to hold. Were $\alpha =0$ in that special case, there were \emph{no} field equations and the vector bundle morphism $ a \colon T \S \to TM$ were unrestricted and in general not a Lie algebroid morphism. On the other hand, it is not difficult to see (but it certainly also follows from the theorem) that for the (possibly $H$-twisted) PSM the field equations \emph{do not change} when adding the term $S_{geom}$ with some non-vanishing $\a$. We thus propose to consider $S_{DSM}$ for non-vanishing $\a$ in general.\footnote{In \cite{KSS} it is conjectured that the theory with $\a = 0$ is (essentially) equivalent to the theory with $\a \neq 0$ in general. E.g.~for $D=TM$, and $\alpha \neq 0$ the moduli space of solutions to the field equations (maps $X \colon \S \to M$) \emph{up to gauge symmetries} (which at least contain the homotopies of this map $X$) is zero dimensional, like the moduli space for the vanishing action $\a=0$. It is argued that the geometrical part serves as a kind of regulator for the general theory, which also should permit localization techniques on the quantum level.}

There are two more important issues which we need to at least mention in this context. First, if the DSM is to be associated to a Dirac structure on $M$ and all other structures used in the definition of the functional are to be auxiliary, one
needs to show that it is in the end the \emph{cohomology class} of $H$ only that enters the theory effectively. In \cite{KSS} we proved
\begin{proposition}
The DSM action transforms under a change of splitting (\ref{change-of-splitting})
according to \beq S_{DSM}\mto S_{DSM} +\int\limits_{\Sigma}X^* (B)(\md X-V\!\stackrel{\wedge}{,}\!\md X-V)\,.
\eeq
Let $\a\ne 0$ and $B$ be a "sufficiently small" 2-form, then
there exists a change of variables $\bar V=V+\de V$ and $\bar A=A+\de A$, such that
\beqn
S_{DSM}[X, \bar V,\bar A]= S_{DSM}[X,V,A] +\int\limits_{\Sigma}X^* (B)(\md X-V\!\stackrel{\wedge}{,}\!\md X-V)\,.
\eeq
\end{proposition}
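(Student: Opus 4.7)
The plan for Part 1 is a direct substitution-and-simplify computation. First I work out how $(V,A)$ transform under the splitting change (\ref{change-of-splitting}): an element $\xi+\alpha\in E$ with components $(\xi,\alpha)$ in the old splitting has components $(\xi,\alpha - B(\xi,\cdot))$ in the new one, so since $(V,A)$ together represent a section of $T^*\Sigma\otimes X^*D$, they transform as $V' = V$ and $A' = A - X^*B(V,\cdot)$; by Proposition~\ref{class_Courant}, simultaneously $H\mapsto H+\md B$. Substituting into (\ref{DSM-kin})--(\ref{DSM-top}): $S_{geom}$ is manifestly unchanged (it depends on $V$ only through $\md X-V$, and $V'=V$); Stokes gives $\int_{N^3}(H+\md B)=\int_{N^3}H+\int_\Sigma X^*B$; and the pairing part of $S_{top}$ contributes the additional piece $-\int_\Sigma \langle X^*B(V,\cdot)\stackrel{\wedge}{,}\md X - \tfrac{1}{2} V\rangle$. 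The final step is to identify the combined extra contribution with $\int_\Sigma X^*B(\md X - V\stackrel{\wedge}{,}\md X - V)$; this follows by expanding the target, using the cross-term identity $X^*B(\md X\stackrel{\wedge}{,}V)=X^*B(V\stackrel{\wedge}{,}\md X)$ (the antisymmetry of $B$ cancels that of the wedge on $\Sigma$) together with $\int_\Sigma X^*B = \tfrac{1}{2}\int_\Sigma X^*B(\md X\stackrel{\wedge}{,}\md X)$, the appropriate normalization convention for the bilinear $X^*B(\cdot\stackrel{\wedge}{,}\cdot)$ being fixed along the way.

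For Part 2 the key difficulty is that $V$ and $A$ are not independent -- they must jointly lie in $X^*D$ -- so one cannot prescribe $\delta V$ and $\delta A$ separately. I would resolve this by using the parameterization of Corollary~\ref{oneone}, writing $A=(1+S)W$ and $V=(1-S)W$ for an unconstrained $W\in\Omega^1(\Sigma,X^*TM)$, with $S\in\Gamma(O(TM))$ the orthogonal operator classifying $D$ (relative to the fixed metric $g$). The problem becomes that of finding a single $\delta W$ whose associated $\delta V=(1-S)\delta W$, $\delta A=(1+S)\delta W$ produce precisely the $B$-shift from Part 1. Computing the variational derivative of $S_{DSM}$, the principal part (for $\alpha\neq 0$) comes from $S_{geom}$ and is proportional to $\alpha\,g(\md X - V,(1-S)\delta W)\,d\mathrm{vol}_h$, a non-degenerate expression linear in $\delta W$ with coefficient $\md X - V$; the topological piece contributes only subleading, $A$-dependent corrections. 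Inverting the resulting linear equation on $\delta W$ gives the desired infinitesimal redefinition, which I would then promote to a finite change of variables by a standard implicit function theorem argument in a suitable function space; convergence of the corresponding iteration is precisely what the hypothesis ``$B$ sufficiently small'' guarantees.

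The hardest step is the invertibility analysis in Part 2, and it is here that $\alpha\neq 0$ enters essentially: without the geometric term the variational operator in $\delta W$ loses its non-degenerate principal part and one cannot absorb a shift of quadratic-in-$(\md X - V)$ type by any field redefinition. Part 1, by contrast, is essentially bookkeeping, with the only subtlety being careful handling of signs and overall normalization factors arising from the antisymmetry--wedge interplay in $X^*B(\cdot\stackrel{\wedge}{,}\cdot)$ and from converting the $\int_{N^3}\md B$ term into a boundary integral via Stokes.
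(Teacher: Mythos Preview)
The paper does not contain a proof of this proposition: it merely states the result, prefaced by ``In \cite{KSS} we proved''. There is thus no in-paper argument to compare against, and your proposal has to be assessed on its own merits.

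Your Part~1 is correct. The transformation $V'=V$, $A'=A-X^*B(V,\cdot)$, $H\mapsto H+\md B$ is the right consequence of (\ref{change-of-splitting}), and the bookkeeping you describe (Stokes on the WZ term, expanding the pairing term, and identifying the result via the antisymmetry--wedge cross-term identity) goes through. The only care needed is the overall factor of $\tfrac12$ in the convention for $X^*B(\cdot\stackrel{\wedge}{,}\cdot)$, which you flag.

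Your Part~2 strategy (parametrize by the unconstrained $W$, solve the linearized equation for $\delta W$, and close up by an implicit-function/iteration argument for small $B$) is the right idea, and you correctly identify that $\alpha\neq 0$ is what makes the linearized problem solvable. However, your description of the topological contribution as ``subleading, $A$-dependent corrections'' is off. Using the isotropy of $D$ one finds, for variations $(\delta V,\delta A)$ tangent to $D$, the cancellation $\tfrac12\langle\delta A\stackrel{\wedge}{,}V\rangle-\tfrac12\langle A\stackrel{\wedge}{,}\delta V\rangle=0$, so that
\[
\delta S_{top}=\int_\Sigma \langle \delta A\stackrel{\wedge}{,}\md X-V\rangle\,.
\]
Thus the topological variation is, like the geometric one, linear in $(\md X-V)$ and enters on equal footing rather than as a subleading correction. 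This actually helps you: with the ansatz $\delta W$ proportional (through $*$) to $\md X-V$, both pieces of $\delta S_{DSM}$ become quadratic in $\md X-V$, exactly the structure needed to absorb the $B$-term at first order. The remaining issues---existence of a pointwise solution for the proportionality operator, and convergence of the iteration for finite $B$---are where the genuine analytic work lies, and your sketch leaves these open; but the overall architecture is sound.
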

So, a change of the splitting can be compensated for by a (local) diffeomorphism on the space of fields.

Secondly, we did not yet touch the issue of the gauge symmetries, neither in the previous section on the AKSZ sigma models nor for the DSMs. Certainly the gauge symmetries are of utmost importance in topological field theories (without them we were never able to arrive at a finite dimensional moduli space of solutions for instance). While the BV-formalism produces them by means of the BV-operator for the AKSZ sigma models (although possibly in a coordinate dependent way, cf., e.g., \cite{BKS} addressing this issue), they are less obvious to find for the DSM. In fact, here also all the \emph{auxiliary structures do enter}, cf.~\cite{KSS} for the corresponding formulas. It is only \emph{onshell}, i.e.~on using the field equations (here only $\md X = V$), that the gauge transformations obtain a nice geometrical interpretation: they turn out to become ``Lie algebroid homotopies'' (cf.~also \cite{BKS}) in this case.  We do, however, not want to go into further details on this here; in the present article we decided to focus more on the field equations, reassuring the reader in words that the more intricate gauge symmetries fit nicely into the picture as well, completing it in an essential way.


\section{Yang-Mills type sigma models}
In the Introduction we recapitulated the idea of sigma models: one wants to replace the flat target space $\R^n$ of usually a collection of $n$ scalar fields (functions on spacetime $\Sigma$) by some geometrical object, like a Riemannian manifold. We posed the question, if, in the context of gauge fields (1-forms on space time), we can replace in a likewise fashion the ``flat'' Lie algebra $\R^n$ (or, more generally, $\g$) by some nontrivially curved geometrical object. In fact, the Poisson sigma model (\ref{PSM}), or more generally, the AKSZ sigma model (\ref{AKSZSM}), provides, in some sense, half a step into this direction: Let us consider the three dimensional case, where this is most evident. The quadratic Lie algebra $\g$ needed for the definition of the Cherns Simons gauge theory can be generalized to a Courant algebroid, associated to which is the Courant sigma model \cite{CS-Ikeda, RoytenbergAKSZ}, which when specializing to the ``flat case''  $\g$ reproduces the Chern Simons theory. While these models realize in a geometrically nice way the right (classical) field content, the target algebroid being represented by a PQ-manifold and the Lie algebra valued 1-forms of a Yang-Mills theory being interpreted as and generalized to degree preserving maps from $T[1]\Sigma$ to the respective target, and also the gauge transformations are generalized in a reasonably looking way, there is, from the \emph{physical} point of view, a major drawback or ``flaw'' of these theories: What mathematically is usually considered an advantage of a field theory, namely to be topological, in the context of physics rules out a theory for being feasable to describe the degrees of freedom we see realized in the interacting world around us.

Indeed, the space of flat connections, which are the field equations of the Chern Simons theory, modulo gauge transformations is (for, say, $\Sigma$ without boundary and of finite genus) a finite dimensional space and this generalizes in a likewise fashion to the moduli space of solutions modulo gauge transformations for all the AKSZ models, where one considers the space of Q-morphisms modulo Q-homotopy. The moduli space needed to host a physical particle (like a photon, electron, Higgs, etc), on the other hand, is always infinite dimensional  (like the space of harmonic functions on $\Sigma$ for a Laplacian corresponding to a  (d--1,1)--signature metric). There is also another way of seeing that one has gone half way only: Such as we want that when in a sigma model for scalar fields  the choice of a ``flat background'' (i.e., in that case, that the target is a flat Riemannian manifold $\R^n$) the action reduces to (\ref{eq:freescalar}), we want that when in a Yang-Mills type sigma model the target is chosen to be a Lie algebra $\R^n$ or, more generally, $\g$,  the gauge theory reduces to  (\ref{eq:free1forms}) and (\ref{eq:YM}), respectively. So we will pose this condition, maybe adding that there should be a ``comparable number'' of gauge symmetries in the general model as in the special case.

In a similar way we may generalize this condition by extending it to also higher form gauge fields $B^b$, namely that for an appropriate flat choice of the target geometry one obtains from the (higher) Yang-Mills type sigma model
\beq
S[B^b] =  \frac{1}{2} \int_\S \rd B^b \wedge *   \rd B^b
\, . \label{eq:free2forms}
\eeq
For the case that $B$s are 2-forms, this will yield an action functional  for nonabelian gerbes.

Before continuing we bring the standard Yang-Mills action (\ref{eq:YM}) into a form closer to a topological model first. We consider a trivial bundle in what follows, in which case the curvature or field strength is a $\g$-valued 2-form. If $d$ denotes the dimension of spacetime $\Sigma$, we introduce a $\g^*$-valued $d\!-\!2$ form $\Lambda$ in addition to the connection 1-forms $A$. With $\kappa^{-1}$ denoting the scalar product on $\g^*$ induced by $\kappa$ on $\g$, then
\beq  S'_{YM}[A,\Lambda] := \int_\S \langle \Lambda , F \rangle + \kappa^{-1}(\Lambda \! \stackrel{\wedge}{,} \! *\Lambda) \label{eq:YMprime} \eeq
is easy to be seen as equivalent to  (\ref{eq:YM}) on the classical level (Euler Lagrange equations) after elimination of the auxiliary field $\Lambda$. (Equivalence on the quantum level follows from a Gaussian integration over the field $\Lambda$, as it enters the action quadratically only---up to an overall factor, which is irrelevant for the present considerations).

The first part of this action is topological, it is only the second term, breaking some of the symmetries of the topological one, that renders the theory physical.
Let us first generalize the topological part, a so-called BF-theory, by an appropriate reformulation. In fact, we can obtain the BF-theory from the AKSZ-method by
considering $\CM_2 = T^*[d\!-\!1] \g[1]$ as a target QP-manifold (as before d is the dimension of spacetime $\Sigma$). Indeed, it is canonically a symplectic manifold with the symplectic form being of degree $d\!-\!1$, as it was required to be in section \ref{AKSZ}, and the Q-vector field on $\g[1]$, eq.~(\ref{CEQ}), can always be lifted canonically to the cotangent bundle with the folllowing Hamiltonian \footnote{We use the following sign convention: $\{p_a, q^b\}=(-1)^{m\mid q^b\mid}\de_a^b$, where $m$ is the degree
of a symplectic form.} 
\beq \CQ = (-1)^{d} \frac{1}{2}C_{bc}^a \theta^b\theta^c p_a \, .
\eeq
The AKSZ action (\ref{AKSZclass}) now just produces
\beq S_{BF} [\Lambda, A] = \int_\Sigma \Lambda_a (\md A^a + \frac{1}{2} C^a_{bc} A^b \wedge A^c) \,  \eeq
where we used $A^a = a^* (\theta^a)$ and $\Lambda_a = a^*(p_a)$; and this is just equal to the first term in (\ref{eq:YMprime}). It is now easy to find a generalization of this part of the action for a general ($\N_0$-graded) Q-manifold $(\widetilde \CM_2,\widetilde Q_2)$: Just consider $\CM_2 = T^*[d\!-\!1] \widetilde \CM_2$ as a target QP-manifold with the canonical lift of $\widetilde Q_2$ to the Hamiltonian vector field $Q_2$ (with Hamiltonian $\CQ_2=\CQ$).

Let us exemplify this at the example of a Lie algebroid, where $\widetilde \CM_2=E[1]$ and $\widetilde Q_2$ is given by means of formula (\ref{actionQ}) (with $\rho_a=\rho_a^i \partial/\partial x^i)$ and $\rho^i_a(x)$, $C^c_{ab}(x)$ being the Lie algebroid structure functions in the local coordinates $x^i, \theta^a$ on $E[1]$). Now the same procedure yields
the action functional for a ``Lie algebroid BF-theory'' \cite{gravityfromLiealgebroidmorphisms,AYM,BonZab}
\beq S_{LABF} [\Lambda_i, \Lambda_a,X^i, A^a] = \int_\S (\Lambda_i \wedge F^i + \Lambda_a \wedge F^a) \, \label{LABF}
\eeq
where $\Lambda_i$ and $\Lambda_a$ are $d\!-\!1$ forms and $d\!-\!2$ forms, respectively, and
\begin{eqnarray} F^i &= &\md X^i - \rho^i_a(X) A^a \label{Fi}\\
F^a &=& \md A^a + \frac{1}{2} C^a_{bc}(X) A^b \wedge A^c \label{Fa}
\end{eqnarray}
are the generalizations of the YM-curvatures, which we prefer to call \emph{field strengths} in the general case. We use the notation $A^\alpha= a^*(q^\alpha)$, where $q^\a$ are graded coordinates on $\widetilde \CM_2$ and $\Lambda_\alpha = a^*(p_\alpha)$, the corresponding momenta (or anti-fields) on $\CM_2=T[1]\widetilde \CM_2$. Note that the map $a \colon T[1] \S \to \CM_2$ is degree preserving, so that $X^i : \equiv A^i = a^*(x^i)$ are 0-forms on $\Sigma$ or scalar fields, $A^a = a^*(\theta^a)$ are 1-forms, and $\Lambda_i$ and $\Lambda_a$ are $d\!-\!1$ and $d\!-\!2$ forms, respectively.

We remark in parenthesis that although (\ref{LABF}) is inherently coordinate and frame independent, the expression (\ref{Fa}) is not. It is the splitting of the field strength(s) into two independent parts (\ref{Fi}) and (\ref{Fa}), which cannot be performed canonically. One way of curing this is by introducing an additional connection on the (target) Lie algebroid $E$, cf., e.g., \cite{BKS,AYM}. Better is to realize that the whole information is captured by the map (\ref{f}), taking values in the \emph{tangent} bundle over the target $\CM_2=T^*\widetilde \CM_2$. We now turn to this perspective.

The map $f$ defined in (\ref{f}) covers the map $a \colon \CM_1 \to \CM_2$ (it is also this fact that makes the difference in (\ref{f}) well-defined, being the difference between two elements in the fiber, which is a vector space, over the same point)

\begin{displaymath}
    \xymatrix{ & T[1]\cM_2 \ar[d]\\
               \cM_1\ar[ur]^f\ar[r]_a & \cM_2  }
\end{displaymath}



\noindent and measures the deviation of $a$ to be a Q-morphism, in which case the following diagram would commute

\begin{displaymath}
    \xymatrix{ T[1]\cM_1\ar[r]^{a_*} & T[1]\cM_2 \\
               \cM_1\ar[u]^{Q_1}\ar[r]_a & \cM_2 \ar[u]^{Q_2} }
\end{displaymath}

Denote by $x^A = (q^\alpha,p_\alpha)$ the graded coordinates on $\CM_2=T^*[d\!-\!1]\widetilde \CM_2$ and by $\md x^A$ the induced fiber-linear coordinates on $T [1]\CM_2$. To use the map $f$ in practise, the following lemma is helpful:
\begin{lemma} \label{fstar}
Let $(x^A, \md x^A)$ be local coordinates on $T [1]\CM_2$, $a \in \Mor(T[1]\S,\CM_2)$, and $f$ the map defined in (\ref{f}). Then
\beq f^* (x^A) = a^*(x^A) \qquad f^*(\md x^A) = (\md a^* - a^* Q_2 )  x^A \, . \eeq
\end{lemma}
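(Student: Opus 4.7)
The plan is to derive both identities by unpacking what $f^*$ means on the generators of $C^\infty(T[1]\CM_2)$. The first step would be to record the algebraic structure of the target: coordinates $x^A$ are pulled back along the bundle projection $\pi \colon T[1]\CM_2 \to \CM_2$, while the $\md x^A$ are degree-shifted fiber-linear coordinates. Equivalently, $C^\infty(T[1]\CM_2) \cong \Omega^\bullet(\CM_2)$, with $\md$ the de~Rham differential. In particular, any morphism $\varphi \colon N \to T[1]\CM_2$ is determined by its action on the $x^A$ and on the $\md x^A$, subject to being an algebra homomorphism compatible with the graded Leibniz rule obeyed by $\md$.

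The first identity would follow at once from the observation that $f$ covers $a$, i.e.~$\pi \circ f = a$. This in turn is immediate from $f = a_* Q_1 - Q_2 \circ a$ because, at each ``point'' $p \in \CM_1$, both $a_* Q_1|_p$ and $Q_2|_{a(p)}$ are tangent vectors based at $a(p)$, so their difference lives again in the fiber of $T[1]\CM_2$ over $a(p)$. Hence $f^*(x^A) = f^*(\pi^* x^A) = (\pi \circ f)^*(x^A) = a^*(x^A)$.

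For the second identity, I would use that $\md g \in C^\infty(T[1]\CM_2)$, viewed as a function, evaluates a tangent vector $v$ to $v(g)$. Then
\beq
f^*(\md g) \;=\; (a_* Q_1)(g) - (Q_2 \circ a)(g) \;=\; Q_1(a^* g) - a^*(Q_2 g) \, ,
\eeq
using the defining property of the pushforward in the first term and naturality in the second. Setting $g = x^A$ and recalling that $Q_1 = \md$ on $\CM_1 = T[1]\S$ yields the claim.

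The one point requiring genuine care, rather than routine unpacking, is translating this ``pointwise'' reasoning into the purely algebraic language of graded-manifold morphisms, since points of a graded manifold are only defined at the body. Concretely, I would verify that the prescription $x^A \mapsto a^*(x^A)$, $\md x^A \mapsto \md(a^* x^A) - a^*(Q_2 x^A)$ extends to a well-defined graded algebra homomorphism $C^\infty(T[1]\CM_2) \to C^\infty(\CM_1)$. This reduces to the observation that both $\md \circ a^*$ and $a^* \circ Q_2$ are graded derivations of degree $+1$ relative to $a^*$, so their difference is compatible with the free differential graded algebra structure on $C^\infty(T[1]\CM_2)$. Uniqueness of the resulting extension then identifies it with $f^*$ for $f$ defined by~(\ref{f}), completing the proof.
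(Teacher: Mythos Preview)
Your argument is correct and follows essentially the same route as the paper: the first identity comes from $f$ covering $a$, and the second from interpreting $\md x^A$ as the fiber-linear function that reads off the $A$-th component of a tangent vector, so that the section $Q_2 \colon \CM_2 \to T[1]\CM_2$ pulls $\md x^A$ back to $Q_2(x^A)$ and likewise for $a_*Q_1$. The paper only spells out the $Q_2$-piece of this computation and leaves the rest as an exercise; you carry it through and, in addition, address the algebraic well-definedness of the resulting map on $C^\infty(T[1]\CM_2)$, which the paper does not discuss.
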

\begin{proof} The first part is evident from $f$ covering $a$ as remarked above already. The second part is rather straightforward and we show only one first step in the calculation. The vector field $Q_2$ of degree 1 is a \emph{degree preserving} section of the tangent bundle over $\CM_2$, if the latter is shifted by 1 in degree, i.e., as here, one considers $T[1]\CM_2$. The section maps a point with coordinates $x^A$ to the point $(x^A,Q_2^A(x))$. Thus we find for the pull-back of $\md x^A$ by this section: $(Q_2)^* \md x^A=Q_2^A(x)$, which, however, can also be rewritten as $Q_2$ applied to $x^A$. The remaining similar steps we leave as an exercise to the reader. We just remark that we had used $Q_1 = \md d$ and that certainly pull-backs go the reverse order as the respective map, cf the second diagram above.
\end{proof} $\square$

We remark also that $\md x^A$ is considered as a \emph{function} on $T [1]\CM_2$ and is pulled back as such, \emph{not} as a differential form on $T[1]\CM_2$, while a function on $T[1]\CM_2$ is, on the other hand, equally well a differential form on $\CM_2$. In fact, the so-understood pull back will \emph{not} be a chain map w.r.t.~the  de Rahm differential (a vector field on $T[1]\CM_2$), but only w.r.t.~a modified vector field on $T[1]\CM_2$, cf.~Prop.~\ref{chain} below.

We see that the field strengths $F^i$ and $F^a$ are nothing but the pull back of $\md x^i$ and $\md \theta^a$ by $f$ and thus find
\begin{cor} \label{pdq} The action functional (\ref{LABF}) can be written more compactly as
\beq S_{LABF}[a] = \int_\S f^* (p_\a \md q^\a) \eeq
where $a \colon T[1]\Sigma \to \CM_2=T^*[d\!-\!1]E[1]$ and $p_\a \md q^\a$ is the canonical 1-form on $\CM_2$.
\end{cor}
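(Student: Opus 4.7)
The plan is to apply Lemma \ref{fstar} term-by-term to the Liouville 1-form on $\CM_2=T^*[d-1]E[1]$, using the Darboux coordinates $(q^\alpha)=(x^i,\theta^a)$ on the base $\widetilde\CM_2=E[1]$ together with their conjugate momenta $(p_\alpha)=(p_i,p_a)$, of degrees $d-1$ and $d-2$, respectively. With these identifications one has $A^i \equiv X^i = a^*(x^i)$, $A^a = a^*(\theta^a)$, $\Lambda_i = a^*(p_i)$, $\Lambda_a = a^*(p_a)$, and the canonical 1-form decomposes as $p_\alpha\, \md q^\alpha = p_i\, \md x^i + p_a\, \md\theta^a$.

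Next I would compute the two constituents of $f^*(p_\alpha \md q^\alpha)$. Since $p_\alpha$ are coordinate functions on $\CM_2$, the first part of Lemma \ref{fstar} gives $f^*(p_i)=\Lambda_i$ and $f^*(p_a)=\Lambda_a$, while the second part gives $f^*(\md q^\alpha) = (\md\, a^* - a^*Q_2)q^\alpha$. Here $Q_2 = \{\CQ,\cdot\}$ is the canonical Hamiltonian lift of $Q_{\widetilde\CM_2}$ (as recalled in the paragraph introducing this sigma model), with Hamiltonian $\CQ = \rho^i_a\theta^a p_i - \tfrac{1}{2}C^c_{ab}\theta^a\theta^b p_c$. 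A direct Poisson bracket computation then yields $Q_2(x^i) = \rho^i_a\,\theta^a$ and $Q_2(\theta^a) = -\tfrac{1}{2}C^a_{bc}\,\theta^b\theta^c$, so pulling back along $a^*$ produces exactly the two pieces subtracted from $\md X^i$ and $\md A^a$ in the definitions (\ref{Fi}) and (\ref{Fa}) of the field strengths. In other words, $f^*(\md x^i) = F^i$ and $f^*(\md\theta^a) = F^a$.

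Putting everything together, $f^*(p_\alpha \md q^\alpha) = \Lambda_i\wedge F^i + \Lambda_a\wedge F^a$, and integrating over $T[1]\Sigma$ (equivalently, over $\Sigma$, keeping only the top-form component) reproduces (\ref{LABF}). The one genuinely routine--but easy to err on--piece is the bookkeeping of signs coming from the conventions fixed above Lemma \ref{fstar} (graded Poisson bracket of degree $-(d-1)$, $\md$ acting from the left, commuting of pull-back past $\md$ for coordinate functions), which is really the only potential obstacle and is handled by simply inserting the sign rule $\{p_\alpha,q^\beta\}=(-1)^{(d-1)|q^\alpha|}\delta_\alpha^\beta$ once and for all. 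No further input is needed.
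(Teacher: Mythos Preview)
Your proposal is correct and follows exactly the same route as the paper: the paper's argument is the single sentence preceding the corollary, namely that the field strengths $F^i$ and $F^a$ are precisely $f^*(\md x^i)$ and $f^*(\md\theta^a)$ by Lemma~\ref{fstar}, and you have simply spelled this out with the additional explicit computation of $Q_2(x^i)$ and $Q_2(\theta^a)$ and the identification $f^*(p_\alpha)=\Lambda_\alpha$. Your remark on sign bookkeeping is apt but does not constitute a different approach.
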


Now we are in the position to generalize the Yang-Mills action functional (\ref{eq:YM}) or better (\ref{eq:YMprime}) to the Lie algebroid setting, i.e.~to a situation where the structural Lie algebra $\g$ is replaced by a Lie algebroid $E$ (cf.~definition \ref{Lodayalgebroid}). A Lie algebroid Yang-Mills theory is defined as a functional on $a \in \Mor(T[1]\Sigma,\CM_2=T^*[d\!-\!1]E[1])$ and is of the form \cite{AYM}
\beq S'_{LAYM}[a] = S_{LABF}[a] + \int_\S \Eg^{-1}(\Lambda_{(d-2)}  \! \stackrel{\wedge}{,} \! *\Lambda_{(d-2)}) \label{eq:LAYMprime}\eeq
where $\Lambda_{(d-2)} \in \Omega^{d-2}(\Sigma,X^*E)$ are the Lagrange multiplier fields of the 2 form field strengths only. Equivalently, eliminating precisely those Lagrange multipliers like in the transition from (\ref{eq:YMprime}) to (\ref{eq:YM}), we can regard the somewhat more explicit action functional
\beq
S_{LAYM}[A^a,X^i,\Lambda_i] = \int_\S \Lambda_i \wedge F^i + \frac{1}{2} \Eg(X)_{ab} F^a \wedge * F^b  \, ,\eeq
where $F^i$ and $F^a$ are the field strengths (\ref{Fi}) and (\ref{Fa}), respectively, detailed above. In a standard YM theory, the metric on the Lie algebra needs to be ad-invariant. Here we can ask that $(E,\Eg)$ define a maximally symmetric $E$-Riemannian space (cf.~Def.~\ref{homo}) or that the fiber metric is invariant w.r.t.~a representation (\ref{Enabla}) induced by any auxiliary connection on $E$. (For still less restrictive conditions cf.~\cite{Mayer}).

This is now easy to generalize to higher form degrees of the gauge field. Let us consider a tower of gauge fields $X^i, A^a, \ldots , B^B$, which are 0-forms, 1-forms etc, respectively, up to a highest degree $p$, so $B^B$ being p-forms. The role of the structural Lie algebra will now be played by a degree p Q-manifold $(\widetilde \CM_2,Q_2)$. The gauge fields are collected into a (degree preserving) map $a \colon T[1]\S \to \widetilde \CM_2$. This is extended to $a \in \Mor(T[1]\S,\CM_2)$ with $\CM_2=T^*[d\!-\!1] \widetilde \CM_2$ the corresponding PQ-manifold with canonical 1-form $p_\a \md q^\a$. We then consider \cite{talk}
\beq S'_{higherYM}[a] = \int_\S f^* (p_\a \md q^\a)  + \int_\S \Vg^{-1}(\Lambda_{(d-p-1)}  \! \stackrel{\wedge}{,} \! *\Lambda_{(d-p-1)}) \,  \label{eq:higherLAYMprime}\eeq
where $\Vg$ is a fiber metric on the vector bundle $V \to M$ corresponding to
the degree $p$ variables on $\CM_2$ (in fact, there is a canonical quotient of $\CM_2$ yielding $V[p]$). Similarly to before we can also eliminate the lowest form degree Lagrange multipliers $\Lambda_{(d-p-1)})$ in $S'_{higherYM}$ to obtain $S_{higherYM}[a]$ in which the highest form degree field strength is squared by means of the fiber metric $\Vg$.

There is always a condition on this fiber metric generalizing the ad-invariance in the standard YM case. Let us specify this condition in the case $p=2$. We were mentioning in section \ref{Qman} that a degree 2 Q-manifold corresponds to a Lie 2-algebroid, cf.~Def.~\ref{Lie2} and the ensuing text. In this case the condition to be placed on $\Vg$ is its invariance w.r.t.~the $E$-connection $\Ena$ on $V$ \cite{talk}:
\beq \Ena \, \Vg = 0 \, . \eeq
For $p=2$ this is a possible definition of an action functional for nonabelian gerbes.

In this contribution we never discussed the gauge transformations in any detail. In a gauge theory they are certainly of utmost importance and much further motivation for the  theories discussed here come from looking at their gauge invariance also.

We make here only two general remarks on the gauge transformations:
First, for a general algebroid type gauge theory the gauge symmetries have a ``generic part'' that comes from the structural algebroid (possibly including some particularly important geometrical ingredients from the target like a compatible symplectic form), but there can be also contributions to them that show all possible structural ingredients used in the definition of the action functional. An (impressive) example for the latter scenario is provided by the Dirac sigma model, cf.~section \ref{DSM} and \cite{KSS}. Second,
for the higher Yang-Mills action functionals proposed above, or also any other \emph{physical} gauge theory proposed in this setting, the BV-formulation, which also captures elegantly the gauge symmetries of a theory, will \emph{not} just consist in replacing the morphisms $a$ by supermaps $\widetilde a$ as it was the case with the AKSZ type sigma models. In fact, the space $\Maps(T[1]\Sigma, \CM_2)$ is too big: It contains ghosts and ghosts for ghosts etc for gauge symmetries of the BF-part of the action that are broken by the fiber metric part. It is tempting, however, to try to obtain the BV-formulation of the full physical model by a kind of supersymplectic reduction induced by the symmetry breaking terms.

In the present article we used the framework of the AKSZ sigma models to develop the field content of the higher gauge theories and, by the material at hand from the previous sections, it was natural to require the structural Lie algebra to be replaced by a general Q-manifold. This is in so far a physical approach as that the notion of Q-manifolds arose in a physical context, namely to describe the supergeometric framework underlying the BV-formulation \cite{Schwarz_semiclassical}. In this way,  the generalization of the connection
1-forms of a Yang Mills theory to the field content considered
sounds much less compulsory than it really is. As mentioned in the introduction, physical considerations often suggest geometrical notions and generalizations into very particular directions. This is also the case in the context of ``higher gauge theories''---that is gauge theories where the field content is required to consist locally of a tower of differential forms up to some highest degree $p$. It is evident that then locally we can introduce (in a canonical way) a graded manifold $\CM_2$ of degree $p$ and interpret the field content as a morphism from $T[1]\S$ to $\CM_2$. Even without yet considering gauge transformations acting on these fields, there are sensible requirements on the properties of the theory related to a generalization of the Bianchi identities in ordinary gauge theories that \emph{require} that the target $\CM_2$ has to be a Q-manifold. This approach would have fitted the logic of the present article very well, too, but it would have extended its length further as well so that we decided to not reproduce these considerations, which were presented in \cite{talk} and can be found in \cite{GruetzmannStrobl}.

Up to now we only considered the generalization of gauge theories in the context of trivial bundles. In principle, it is not at all clear that the local construction generalizes in a straightforward way to a global one. But fortunately here this is indeed the case. We briefly sketch how this works, since, from the \emph{mathematical} point of view, much of the interest in such gauge theories would precisely stem from the global picture. For further details we have to refer the reader to \cite{Q-bundles}.

We again start from ordinary Yang Mills theories. There the space of fields are not Lie algebra valued 1-forms on our spacetime $\S$, but rather connections in a principal $G$-bunlde $P$ over $\S$. Associated to each principal bundle, however, there is the so-called Atiyah algebroid, which is a very particular Lie algebroid over $\Sigma$. As a total space it is equal to $TP/G$, which is canonically a Lie algebroid by use of the de Rahm differential on $TP$. The image of its anchor map is $T \Sigma$. Shifting the fiber degrees by 1, this becomes a bundle in the category of Q-manifolds $T[1]P/G \to T[1]\Sigma$, the typical fiber of which is the Q-manifold $\g[1]$ (as before, $\g$ is the Lie algebra of the structure group $G$). We call this a Q-bundle. Moreover, a connection on $P$ is in bijection to a section of this bundle \emph{viewed} as a bundle of graded manifolds (it is the \emph{flat} connections that are the sections of the Q-bundle).

The generalization is obvious now: As total spaces we consider Q-bundles $\pi \colon (\CM,Q) \to (\CM_1,Q_1)$, where as the base $(\CM_1,Q_1)$ we keep $(T[1]\S, \md)$ for simplicity, but as typical fibers we permit any (non-negatively graded) Q-manifold $(\CM_2,Q_2)$.
Gauge fields are any degree preserving maps $a \colon \CM_1 \to \CM$ such that
$\pi \circ a = \mathrm{id}_{\CM_1}$. Many constructions that were done for
gauge fields $a \colon \CM_1 \to \CM_2$, like the definition of the field strength $f$, can now be repeated without modification by replacing merely $\CM_2$ by the total space $\CM$. The (generic part of the) gauge symmetries receive the nice interpretation of (possibly a graded subgroup of) vertical inner automorphisms of this bundle. And the gauge invariance of action functionals like
(\ref{eq:higherLAYMprime}) permits to glue local expressions for functionals together to global ones.

The global setting certainly also opens gates for studying characteristic classes for the bundles. In particular, there exists a generalization of the Chern-Weil formalism. In the classical setting, a connection on a principal bundle together with an ad-invariant polynomial on the Lie algebra gives a cohomology class on the base manifold, which does not depend on the connection and which vanishes when the bundle is trivial. In the context of the classical construction one considers the so-called Weil algebra $W(\g)=S^\cdot (\g^*)\otimes\Lambda (\g^*)$ with a certain differential $Q_W$ on it---the connection on $P$ then induces a map from a subcomplex of so-called $G$-basic elements of $W$ to the de Rahm complex on $\S$.

To generalize this construction, one may first observe that $W$ can be identified with differential forms on $\Omega(\g[1])$, which we can identify with functions
on $T[1]\g[1]$. This reminds one of the target of the map $f$, namely when $\CM_2 = \g[1]$, cf. the first diagram above. Given a general Q-manifold $\CM_2$ and a degree preserving map $a \colon \CM_1 \to \CM_2$, can we equip $T[1]\CM_2$ with a Q-structure $Q_{T\CM_2}$, such that $f \colon (\CM_1,Q_1) \to  (T[1]\CM_2,Q_{T\CM_2})$ is \emph{always} a Q-morphism?
The answer is affirmative \cite{Q-bundles}:
\begin{prop} \label{chain} The map (\ref{f}) is a Q-morphism, if we equip $T[1]\CM_2$ with the Q-structure $Q_{T\CM_2}= \md + \CL_{Q_2}$, where $\md$ is the de Rahm differential  on $\CM_2$ and $\CL_{v}$ the Lie derivative w.r.t.~a vector field $v$  on $\CM_2$, both viewed as degree 1 vector fields on
$T[1]\CM_2$: \beq Q_1 f^* = f^* Q_{T\CM_2} \, . \label{fchain} \eeq
\end{prop}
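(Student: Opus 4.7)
\textbf{The plan} is to check the identity $Q_1 f^* = f^* Q_{T\CM_2}$ on a generating set of $C^\infty(T[1]\CM_2)$, exploiting that both sides are odd derivations from $C^\infty(T[1]\CM_2)$ into $C^\infty(\CM_1)$ over the common algebra homomorphism $f^*$. Since $C^\infty(T[1]\CM_2)$ is generated as a graded algebra by $C^\infty(\CM_2)$ together with its $\md$-image, it is enough to verify the identity on local coordinates $x^A$ and on their differentials $\md x^A$.

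\textbf{Step 1 (collapse on base coordinates).} From Lemma \ref{fstar} one has $f^*(x^A) = a^*(x^A)$ and $f^*(\md x^A) = \md a^*(x^A) - a^*(Q_2 x^A)$, while by definition $Q_{T\CM_2}(x^A) = \md x^A + Q_2(x^A)$. Pulling back and summing, the terms $\pm a^*(Q_2 x^A)$ cancel and both sides of the identity reduce to $\md a^*(x^A) = Q_1 f^*(x^A)$.

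\textbf{Step 2 (use of the two nilpotencies on fibre coordinates).} Extend Lemma \ref{fstar} from coordinates to arbitrary $\phi\in C^\infty(\CM_2)$ by the graded Leibniz rule, obtaining $f^*(\md\phi) = \md(a^*\phi) - a^*(Q_2\phi)$. Applied to $\phi = Q_2(x^A)$ and using $Q_2^2 = 0$ this yields $f^*\bigl(\md Q_2(x^A)\bigr) = \md a^*(Q_2 x^A)$. On the target side, $Q_{T\CM_2}(\md x^A) = \md(\md x^A) + \CL_{Q_2}(\md x^A) = \CL_{Q_2}(\md x^A)$, which by the graded Cartan formula $\CL_{Q_2} = [\imath_{Q_2},\md]$ is proportional to $\md(Q_2 x^A)$. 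On the source side, $Q_1 f^*(\md x^A) = \md\bigl[\md a^*(x^A) - a^*(Q_2 x^A)\bigr] = -\md a^*(Q_2 x^A)$ by $Q_1^2 = \md^2 = 0$. Once the Cartan sign is fixed so that $\CL_{Q_2}(\md x^A) = -\md(Q_2 x^A)$ (the standard convention for a degree-$1$ vector field), the two computations match.

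\textbf{Main obstacle.} The only real subtlety lies in keeping the graded signs consistent: both $\md$ and $\CL_{Q_2}$ are odd vector fields on $T[1]\CM_2$, and the signs used must make $Q_{T\CM_2}^2=0$, which decomposes into $\md^2 = 0$, $\CL_{Q_2}^2 = \CL_{Q_2^2} = 0$, and the Cartan commutation $[\md,\CL_{Q_2}] = 0$. With that bookkeeping in place the verification on the two generator types above is immediate, and the proposition may equivalently be viewed as the statement that $Q_{T\CM_2} = \md + \CL_{Q_2}$ is the unique degree-$+1$ homological vector field on $T[1]\CM_2$ whose lift of $Q_2$ turns the tautological map $f$ in (\ref{f}) into a Q-morphism.
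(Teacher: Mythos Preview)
The paper does not actually supply a proof of this proposition in the text; it is stated with a reference to \cite{Q-bundles}. Your argument is a correct and self-contained verification: you reduce the chain identity to the generators $x^A$ and $\md x^A$ of $C^\infty(T[1]\CM_2)$ via the derivation property, and then use Lemma~\ref{fstar} (extended to arbitrary functions by Leibniz) together with $Q_1^2=0$ and $Q_2^2=0$ to close both checks. The only delicate point, as you note, is the graded Cartan sign giving $\CL_{Q_2}(\md x^A)=-\md(Q_2 x^A)$; this is forced by the requirement that the graded commutator $[\md,\CL_{Q_2}]=\md\CL_{Q_2}+\CL_{Q_2}\md$ vanish, which in turn is needed for $Q_{T\CM_2}^2=0$. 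With that convention fixed your computation goes through without change.
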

Indeed, the canonical lift $Q_{T\CM_2}$ of the differential $Q_2$ to its tangent reproduces the Weil differential $Q_W$ mentioned above for $\CM_2 = \g[1]$.
One now generalizes also the notion of being basic, where in the general setting a graded gauge or holonomy group $\CG$ plays an important role. In the end one obtains \cite{Q-bundles} a generalization of the Chern-Weil map in the form
\begin{theorem} \label{ChernWeil} Let $\pi
\colon \cM\to T[1]\S$ be a Q-bundle with a typical fiber $\cM_2$, a  holonomy
group $\CG$, and $a$ a ``gauge field'', i.e.~a section of $\pi$ (in the graded
sense). Then there is a well-defined
map in cohomology
\beq\label{characteristic_map} H^p (\Omega (\cM_2)_\CG,
Q_{\scriptscriptstyle T\cM_2})\to H_{deRahm}^p(\S)\;,\eeq
 which does not depend on homotopies of $a$.
\end{theorem}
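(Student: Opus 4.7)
The plan is built around Proposition \ref{chain}, which provides, for every degree preserving map $a \colon T[1]\S \to \cM_2$, a chain map $f^* \colon (\Omega(\cM_2), Q_{T\cM_2}) \to (C^\infty(T[1]\S), \md) = (\Omega^\bullet(\S), \md)$. The Chern--Weil map will be obtained as this pullback, applied in local trivializations of the Q-bundle, and then restricted to the $\CG$-basic subcomplex so as to glue into a global object.

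First I would fix a covering of $\S$ and Q-trivializations of $\pi \colon \cM \to T[1]\S$ over it. On each trivializing patch $U_\alpha \subset \S$, the section $a$ restricts to a degree preserving map $a_\alpha \colon T[1]U_\alpha \to \cM_2$, and the associated map $f_\alpha$ defined in (\ref{f}) satisfies $Q_1 f_\alpha^* = f_\alpha^* Q_{T\cM_2}$ by Proposition \ref{chain}. Thus for any $\omega \in \Omega(\cM_2)$ with $Q_{T\cM_2}\omega = 0$, the pullback $f_\alpha^*\omega$ is a closed differential form on $U_\alpha$; if moreover $\omega = Q_{T\cM_2}\eta$, then $f_\alpha^*\omega = \md f_\alpha^*\eta$ is exact on $U_\alpha$. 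Consequently on each patch one already has a well-defined map $H^p(\Omega(\cM_2), Q_{T\cM_2}) \to H^p_{deRahm}(U_\alpha)$.

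Second, I would show that restriction to $\Omega(\cM_2)_\CG$ produces local representatives that glue into a global form on $\S$. On overlaps $U_\alpha \cap U_\beta$, two trivializations of $\cM$ are related by a transition lying in the holonomy group $\CG$ acting on the typical fiber $\cM_2$, which modifies $a_\alpha$ by this action. For $\omega \in \Omega(\cM_2)_\CG$ the pulled-back form is by definition insensitive to this action, yielding $f_\alpha^*\omega = f_\beta^*\omega$ on overlaps and hence a globally defined closed form $f^*\omega \in \Omega^p(\S)$. The assignment $[\omega] \mapsto [f^*\omega]$ then gives the desired map (\ref{characteristic_map}).

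Finally, for homotopy invariance I would apply the standard transgression argument in the Q-bundle setting. A smooth homotopy $\{a_t\}_{t\in[0,1]}$ assembles into a single section $\hat a$ of the pullback Q-bundle over $T[1](\S \times I)$, with associated map $\hat f$. Decomposing $\hat f^*\omega = \alpha_t + \md t \wedge \beta_t$, where $\alpha_t$, $\beta_t$ do not involve $\md t$, the closedness $\md(\hat f^*\omega) = \hat f^* Q_{T\cM_2}\omega = 0$ forces $\partial_t \alpha_t = \md_\S \beta_t$, so that integrating in $t$ yields $f_1^*\omega - f_0^*\omega = \md_\S \int_0^1 \beta_t\, \md t$, i.e.~the two cohomology classes agree in $H^p_{deRahm}(\S)$. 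The main obstacle I expect lies in the second step: making precise, in the graded setting, what $\CG$-basic means, and verifying that the transition data of a Q-bundle really act through $\CG$ on the typical fiber in a way compatible with the pullback of basic elements. Once this is set up correctly, the first and third steps are essentially formal consequences of Proposition \ref{chain} together with Stokes' theorem.
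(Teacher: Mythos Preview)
The paper does not actually prove this theorem: it is stated with an explicit reference to \cite{Q-bundles} for the proof (``In the end one obtains \cite{Q-bundles} a generalization of the Chern-Weil map in the form\ldots''). So there is no in-paper argument to compare against.

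That said, your outline is very much in the spirit of the surrounding discussion. The paper sets up exactly the ingredients you invoke: Proposition~\ref{chain} as the analogue of the Weil homomorphism, the identification of $(T[1]\cM_2, Q_{T\cM_2})$ with the Weil algebra in the classical case, and the remark that the notion of ``basic'' is to be generalized with the graded holonomy group $\CG$ playing the key role. Your three-step plan (local chain map via $f^*$, gluing on $\CG$-basic elements, transgression for homotopy invariance) is the standard Chern--Weil template transplanted to this setting, and matches what one would expect the cited reference to do.

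Your own caveat is the right one: the only substantive point not already contained in the paper is the precise definition of $\Omega(\cM_2)_\CG$ and the verification that $f_\alpha^*\omega$ and $f_\beta^*\omega$ agree on overlaps for basic $\omega$. In the classical case ``basic'' means both $G$-invariant and horizontal (annihilated by contractions with fundamental vector fields), and it is the horizontality that makes the pullback independent of the connection-dependent identification, not just invariance. Your phrasing ``insensitive to this action'' conflates these; in the graded setting one needs the analogue of both conditions, and this is presumably what \cite{Q-bundles} spells out. Apart from that, the argument is sound.
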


We want to end this section and thus this article by commenting on a rather intriguing link of this construction with the topological sigma models we were considering in a previous section, namely section \ref{AKSZ}. There is a famous relation of a characteristic class on a principal bundle with a quadratic structure group, the second Chern class or the first Pontryagin class, with the integrand of the Chern-Simons gauge theory: locally $``\mathrm{Second\;
Chern\; form}=\md \left(\mathrm{Chern\!-\!Simons\;
form}\right)''$. This class results from the Chern-Weil formalism by applying  the map to the ad-invariant metric $\kappa$ on $\g$, viewed as quadratic polynomial. In sec.~\ref{Qman} we learnt that $\kappa$ can be viewed as a symplectic form $\omega$ of degree 2 on $\g[1]$. Locally, the above generalized Chern-Weil map is nothing but $f^*$. On the other hand, the AKSZ sigma model resulted from choosing a QP-manifold $(\CM_2,Q_2,\omega)$ as a target---and for the Chern-Simons theory it is this $\omega$ that corresponds to $\kappa$.

We are thus lead to ask for a relation of the AKSZ sigma models with $f^* \omega$ for a general QP-manifold (that would serve as a typical fiber in a Q-bundle). In Cor.~\ref{pdq} we found that the integrand of the AKSZ sigma model can be written as the pullback of the canonical 1-form by $f$ canonical 1-form, if $\CM_2$ is the cotangent bundle $\CM_2 = T^*[d\!-\!1]\widetilde \CM_2$ with $\widetilde \CM_2$ a degree 1 Q-manifold.\footnote{In fact, this is true for a general Q-manifold, as one can prove directly in generalization of Cor.~\ref{pdq} or deduce from Theorem \ref{AKSZchar} below.}  We thus compute $\md f^*(p_\alpha \md q^\alpha) = f^* (\md + \CL_{Q_2}) p_\alpha \md q^\alpha$, where we used the chain property (\ref{fchain}). But the canonical 1-form is invariant w.r.t.~the Hamiltonian lift $Q_2$ of a vector field $\widetilde Q_2$ on its base $\widetilde \CM_2$; so only the first term remains on the r.h.s., which indeed has the form $f^* \omega$. In fact, this statement that we proved for $\CM_2 = T^*[d\!-\!1]E[1]$, is true for any QP-manifold; one has \cite{Q-bundles}
\begin{theorem} \label{AKSZchar}
Consider a PQ-manifold $\CM_2$ with symplectic form $\omega$ of positive degree $d-1$ and a $(d\!+\!1)$-dimensional
manifold $N$ with boundary $\partial N=\Sigma$, and $a \in \Mor(T[1]N,\CM_2)$. Then
\beq 
\int_{N} f^* \omega =  S_{AKSZ}[a]\, ,\eeq
the AKSZ sigma model (\ref{AKSZclass}) on $\Sigma$.
\end{theorem}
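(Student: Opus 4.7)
The strategy is to apply Proposition~\ref{chain} twice in order to express $f^*\omega$ as an $\md$-exact graded function on $T[1]N$; Stokes' theorem then reduces the $(d{+}1)$-dimensional $N$-integral to a $d$-dimensional boundary integral over $\Sigma$ which, after expansion via Lemma~\ref{fstar}, is identified with the AKSZ integrand.

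Pick a primitive $\lambda$ of $\omega$; in a Darboux chart $(q^I)$ on $\CM_2$ the canonical choice is $\lambda = \tfrac12 \omega_{IJ}\, q^I\, \md q^J$. The chain identity $Q_1 f^* = f^*(\md + \CL_{Q_2})$ with $Q_1=\md$, applied to $\lambda$, gives
\[ f^*\omega \;=\; \md(f^*\lambda) \;-\; f^*(\CL_{Q_2}\lambda). \]
Since $Q_2$ is Hamiltonian with Hamiltonian $\CQ$ of degree $d$, the sign convention $\iota_{X_h}\omega = (-1)^{|h|+1}\md h$ gives $\iota_{Q_2}\omega = (-1)^{d+1}\md\CQ$, and graded Cartan $\CL_{Q_2}=[\iota_{Q_2},\md]$ yields $\CL_{Q_2}\lambda = \md h$ with
\[ h \;:=\; (-1)^{d+1}\CQ - \iota_{Q_2}\lambda, \]
a function on $\CM_2$. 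Applying Proposition~\ref{chain} a second time to $h$, and using Lemma~\ref{fstar} (so that $f^*h=a^*h$ and $\CL_{Q_2}h = Q_2 h$ for a function), gives $f^*\md h = \md(a^*h) - a^*(Q_2 h)$. The obstruction $Q_2 h$ vanishes: on the one hand $Q_2\CQ \propto \{\CQ,\CQ\}=0$ is the classical master equation ($\Leftrightarrow Q_2^2=0$), and on the other hand $[\CL_{Q_2},\iota_{Q_2}] = \iota_{[Q_2,Q_2]} = 0$ gives $Q_2\iota_{Q_2}\lambda = \iota_{Q_2}\CL_{Q_2}\lambda = \iota_{Q_2}\md h = Q_2 h$, which combined with the definition of $h$ forces $Q_2 h = 0$. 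Hence
\[ f^*\omega \;=\; \md\bigl(f^*\lambda - a^*h\bigr). \]

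Stokes' theorem on $T[1]N$ (standard for a top-degree function in the Berezin sense) reduces the statement to
\[ \int_N f^*\omega \;=\; \int_\Sigma \bigl(f^*\lambda - a^*h\bigr), \]
and it remains to identify the right-hand side with $S_{AKSZ}[a]$. By Lemma~\ref{fstar}, $f^*(\md q^J) = \md A^J - a^*Q_2 q^J$ where $A^I := a^*q^I$, while a graded-Leibniz computation gives $\iota_{Q_2}\lambda = \tfrac12 \omega_{IJ}\,q^I\, Q_2 q^J$; substituting yields
\[ f^*\lambda - a^*h \;=\; \tfrac12\omega_{IJ}A^I\md A^J - \tfrac12\omega_{IJ}A^I a^*Q_2 q^J - (-1)^{d+1}a^*\CQ + \tfrac12\omega_{IJ}A^I a^*Q_2 q^J, \]
so that the two ``$\iota_{Q_2}\lambda$-type'' contributions cancel and one obtains precisely $\tfrac12 \omega_{IJ}A^I\md A^J + (-1)^d a^*\CQ$, the integrand of (\ref{AKSZclass}) and of the BV extension (\ref{AKSZBV}).

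The main obstacle is the graded-sign bookkeeping: correctly tracking Koszul signs in graded Cartan, in graded Leibniz for $\iota_{Q_2}$ acting on the product $q^I\,\md q^J$, and in the Hamiltonian relation $\iota_{Q_2}\omega = (-1)^{d+1}\md\CQ$, so that the cancellation in the last display and the clean $(-1)^d$ coefficient of $a^*\CQ$ emerge as written. A minor further subtlety is that the Darboux primitive $\lambda$ exists only locally, which is harmless: both sides of the asserted identity are globally defined (the left by construction of $f^*\omega$, the right by standard patching of AKSZ integrands across Darboux charts), so matching of local integrands suffices.
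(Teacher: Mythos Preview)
Your argument is correct (modulo the graded-sign bookkeeping you already flag), and the core mechanism---Proposition~\ref{chain} to write $f^*\omega$ as $\md$-exact, then Stokes and Lemma~\ref{fstar}---is exactly the one the paper uses in the paragraph preceding the theorem.

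The difference is one of generality. The paper actually \emph{proves} only the special case $\CM_2 = T^*[d{-}1]\widetilde\CM_2$ with $Q_2$ the canonical Hamiltonian lift of a $Q$-structure on $\widetilde\CM_2$: there the Liouville primitive $\alpha = p_\alpha\,\md q^\alpha$ is $Q_2$-invariant, so $\CL_{Q_2}\alpha=0$ and one gets $f^*\omega = \md f^*\alpha$ in a single step, the right-hand side already being the AKSZ (here BF) integrand by Corollary~\ref{pdq}. For the general PQ target the paper simply refers to \cite{Q-bundles}. You instead take the Euler/Darboux primitive $\lambda = \tfrac12\omega_{IJ}q^I\md q^J$, which is \emph{not} $Q_2$-invariant in general, and compensate by identifying $\CL_{Q_2}\lambda = \md h$ with $h = (-1)^{d+1}\CQ - \iota_{Q_2}\lambda$, then showing $Q_2 h=0$ so that $f^*\omega = \md(f^*\lambda - a^*h)$. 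The payoff is a self-contained proof of the full statement rather than just the cotangent case; the price is the extra bookkeeping and the small $2Q_2h=0$ argument. Both routes are the same idea---the chain property reduces everything to finding a primitive whose failure to be $Q_2$-closed is under control---but yours carries it through in the generality the theorem actually claims.
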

As mentioned, these observations generalize the relation of the Pontryagin class to the topological Chern-Simons theory to a much wider range of bundles and topological models.

Certainly, in some cases Q-manifolds and Q-bundles may be ``integrable'' (already not any Lie algebroid is integrable to a Lie
groupoid---cf.~\cite{CrainicFernandes} for the necessary and sufficient conditions for the integrability of Lie algebroids).
Similar to the relation of the Atiyah algebroids to principal bundels,
integrable Q-bundles will will correspond to rather intricate other type of bundles---in the case of the Lie algebroid Yang-Mills theories, e.g., these are bundles with typical fiber being a groupoid, in the case of the fibers being degree 2 Q-manifolds over a point these will be the nonabelian bundle gerbes of \cite{BreenMessing}. What we called gauge fields  and field strengths and others might call ``connections'' and ``curvatures'', respectively, are often rather intricate to describe in the integrated pictures---in contrast to what happens in the Q-bundle setting.

It is our impression that the Q-bundle picture has several advantages in the description of higher gauge theories. Also, they exist and give rise to potentially interesting action functionals even if the underlying Q-manifolds are not integrable; a topological example of this kind is the Poisson sigma model, which exists and has its merits even if the target Poisson manifold $M$ (the Lie algebroid $T^*M$) is not integrable.

In any case, we hope to have convinced the reader who was not yet familiar with all the concepts used in this article, that algebroids of various kinds are interesting as mathematical objects on the one hand, interpolating in a novel way between various established notions like tensor fields on manifolds and Lie algebras or
symplectic and complex geometry, and, on the other hand, that they appear in
many different facets in the context of sigma models and higher gauge theories,
in fact in such a way that they have to be considered indispensable for an elegant and technically and conceptually convincing discussion of such theories.

\section*{Acknowledgement}
We are deeply indebted to V.~Cortes for his incredible patience with us in finishing this contribution.


\smallskip

\end{document}